\documentclass[11pt,a4paper, oneside]{memoir}
\usepackage{amssymb}
\usepackage{amsthm}
\usepackage{mathtools}		
\usepackage{wrapfig}
\usepackage{geometry}
\usepackage[utf8]{inputenc} 	
\usepackage{bbm} 			
\usepackage{enumitem} 		
\usepackage{cleveref}		
\usepackage{etoolbox}    	
\usepackage{csquotes}		
\usepackage{authblk}


\setcounter{tocdepth}{1}

\usepackage[style=ext-numeric,articlein=false,maxbibnames=10,backend=biber,sorting=none,isbn=false,url=false,doi=false,giveninits=true]{biblatex}
\AtEveryBibitem{%
  \clearlist{language}%
  \clearfield{number}%
}
\DeclareFieldFormat[article,periodical]{volume}{\mkbibbold{#1}}
\DeclareDelimFormat[bib,biblist]{nametitledelim}{.\space}
\DeclareDelimFormat[bib,biblist]{jourvoldelim}{,\space}

\addbibresource{bibliography.bib}


\newcommand{\irrep}{\mathcal{H}}
\DeclareMathOperator{\identity}{\mathbbm{1}}
\DeclareMathOperator{\Op}{Op}
\DeclareMathOperator{\Hus}{H}

\newcommand{\cQ}{\mathcal{Q}}
\newcommand{\tensor}{\otimes}

\newcommand{\Z}{\mathbb{Z}}
\newcommand{\N}{\mathbb{N}}
\newcommand{\C}{\mathbb{C}}




\DeclarePairedDelimiter\angledbrackets\langle\rangle
\DeclarePairedDelimiter\curlybrackets\lbrace\rbrace
\DeclarePairedDelimiter\verticalline\lvert\rvert
\DeclarePairedDelimiter\doubleverticalline\lVert\rVert
%


\DeclarePairedDelimiter\bra{\langle}{\rvert} 
\DeclarePairedDelimiter\ket{\rvert}{\rangle}


\newcommand{\abs}[1]{\verticalline*{#1}}
\newcommand{\iabs}[1]{\verticalline{#1}}

\newcommand{\iset}[1]{\curlybrackets{#1}} 
\newcommand{\norm}[1]{\doubleverticalline*{\ifblank{#1}{{}\cdot{}}{#1}}}
\newcommand{\inorm}[1]{\doubleverticalline{\ifblank{#1}{{}\cdot{}}{#1}}}



\newcommand\poisson[2]{\curlybrackets*{\ifblank{#1}{{}\cdot{}}{#1},\ifblank{#2}{{}\cdot{}}{#2}}}

\newcommand\braket[2]{\angledbrackets*{\ifblank{#1}{{}\cdot{}}{#1}\vert\ifblank{#2}{{}\cdot{}}{#2}}}
\newcommand{\ketbra}[3]{\ket{\ifblank{#1}{{}\cdot{}}{#1}}_{\ifblank{#3}{}{#3\,#3}}\bra{\ifblank{#1}{{}\cdot{}}{#2}}}


\DeclareMathOperator{\dx}{\,d\mathnormal{x}}
\DeclareMathOperator{\dt}{\,d\mathnormal{t}}
\DeclareMathOperator{\domega}{\,d\mathnormal{\omega}}
\DeclareMathOperator{\dg}{\,d\mathnormal{g}}
\newcommand{\dd}{\,\mathrm{d}}


\DeclareMathOperator{\SU}{SU}

\DeclareMathOperator{\Tr}{Tr}



\theoremstyle{plain}
\newtheorem{definition}{Definition}[chapter] 
\newtheorem{theorem}[definition]{Theorem} 
\newtheorem{lemma}[definition]{Lemma}
\newtheorem{proposition}[definition]{Proposition}
\newtheorem{corollary}[definition]{Corollary}

\theoremstyle{definition}
\newtheorem*{question}{Question}

\usepackage{indentfirst}
\setlength{\parskip}{0.5\baselineskip}

\counterwithout{section}{chapter}
\counterwithout{definition}{chapter}
\numberwithin{equation}{section}
\setsecnumdepth{subsection}

\DeclareMathOperator{\image}{Im}

\title{ SU(2)-equivariant quantum channels:\\ semiclassical analysis}
\author[1]{Tommaso Aschieri}
\author[1]{B\l a\.zej Ruba}
\author[1]{Jan Philip Solovej}
\affil[1]{Department of Mathematical Sciences, University of Copenhagen, \protect\\
Universitetsparken 5, 2100 Copenhagen, Denmark}
\date{\today}

\begin{document}

\maketitle

\begin{abstract}
We study completely positive and trace-preserving equivariant maps between operators on irreducible representations of $\SU(2)$. We find asymptotic approximations of channels in the limit of large output representation and we compute traces of functions of channel outputs. Our main tool is quantization using coherent states. We provide quantitative error bounds for various semiclassical formulas satisfied by quantizations of functions on the sphere. 
\end{abstract}

\section{Introduction}

We study equivariant (also called covariant or invariant) quantum channels between irreducible representations of $\SU(2)$. That is, we consider completely positive, trace-preserving and equivariant maps $ \Phi : B(\irrep_J) \to B(\irrep_K)$, where $\irrep_J, \irrep_K$ are the irreducible representations of $\SU(2)$ labeled by the spin parameters $J,K \in \{ 0,  \frac12, 1 , \dots \}$, and $B(\irrep_J)$ is the algebra of linear operators on $\irrep_J$. 

Equivariant quantum channels $B(\irrep_J) \to B(\irrep_K)$ form a simplex \cite{nuwairan_su2-irreducibly_2013} with vertices $\Phi_{J,K}^M$, $M \in \{ |K-J|, \dots, K+J \}$ (see \eqref{eq:channel_formula_1} below for the definition of $\Phi_{J,K}^M$). If $K$ is sufficiently large, we approximate $\Phi_{J,K}^{K+i}$ by a composition of a \enquote{symbol} map $\Hus_J^{-i} : B(\irrep_J) \to C^\infty(S^2)$ and a \enquote{quantization} map $\Op_K : L^1(S^2) \to B(\irrep_K)$. That is, we approximate
\begin{equation}
    \Phi_{J,K}^{K+i} \approx \frac{2J+1}{2K+1} \Op_K \Hus_J^{-i}.
    \label{eq:channel_approx}
\end{equation}
Note the factorization of the dependence on $K$ and on $J$, $i$.

The map $\Op_K$ in \eqref{eq:channel_approx} is a quantization map, with $\frac{1}{K}$ playing the role of a semiclassical parameter, in the following sense. Firstly, for regular enough functions $f,g$ on $S^2$ and large $K$ one has:
\begin{equation}
    \Op_K(f) \Op_K(g) \approx \Op_K(fg) .
\label{eq:multiplicativity_vague}
\end{equation}
Furthermore, the commutator of $\Op_K(f)$ and $\Op_K(g)$ is given, up to terms vanishing faster for $K \to \infty$, in terms of the Poisson bracket \eqref{eq:poisson_bracket} of $f$ and $g$:
\begin{equation}
[\Op_K(f),\Op_K(g)] \approx \frac{i}{K} \Op_K(\poisson{f}{g}) .
\label{eq:com_bracket}
\end{equation}
Let us also mention the slightly stronger statement
\begin{equation}
\Op_K(f) \Op_K(g) \approx \Op_K \left( fg + \frac{i \{ f , g \} - \nabla f \cdot \nabla g }{2K} + O \left( \frac{1}{K^2} \right)  \right).
\label{eq:product_vague}
\end{equation}
Formulas \eqref{eq:multiplicativity_vague}, \eqref{eq:com_bracket} and \eqref{eq:product_vague} are stated more precisely in \Cref{thm:multiplicativity} below.

One advantage of approximating outputs of channels with operators of the form $\Op_K(f)$ is that we can approximate the functional calculus,
\begin{equation}
    \varphi(\Op_K(f)) \approx \Op_K (\varphi \circ f),
\end{equation}
and we have the semiclassical trace formula:
\begin{equation}
\frac{1}{2K+1} \Tr \left[ \varphi (\Op_K(f)) \right] \approx \int_{S^2} \varphi(f(\omega)) \domega, 
\label{eq:semiclassical_trace}
\end{equation}
where $\domega$ is the normalized area element on the sphere, $\int_{S^2} \domega =1$. See \Cref{thm:traces} for a precise statement of \eqref{eq:semiclassical_trace}. The formula \eqref{eq:semiclassical_trace} can be used for example to compute entropies of channel outputs, which we detail in \Cref{thm:output_entropies}.

We need some preparations before we state our results precisely. Let $\ket{\omega;i}_J \in \irrep_J$ be a~normalized eigenvector of spin in the direction $\omega \in S^2$ with eigenvalue $i \in \{ -J, \dots, J \}$:
\begin{align}
    \omega\cdot \vec S \ket{\omega;i }_J = i \ket{\omega;i}_J,
\end{align}
where $\vec S = (S_x, S_y, S_z)$ are the spin operators on $\irrep_J$, see \Cref{sec:prep}. If $i = J$, we often hide the index $i$. States $\ket{\omega}_J$ are often called (Bloch or spin) coherent states. Famously, coherent states allow to construct a resolution of the identity:
\begin{equation}
    (2J+1) \int_{S^2}\ketbra{\omega}{\omega}{J} \domega =\identity,
    \label{eq:coherent_resolution_of_1}
\end{equation}
Equation \eqref{eq:coherent_resolution_of_1} remains true if we replace $\ket{\omega}_J {}_J\bra{\omega}$ by $\ket{\omega;i}_J {}_J \bra{\omega;i}$.

Using coherent states one defines the Husimi function of an operator $\rho \in B(\irrep_J)$:
\begin{equation}
    \Hus_J (\rho)(\omega) = {}_J \bra{\omega} \rho \ket{\omega}_J,
\end{equation}
and the quantization of a function $f$ on $S^2$:
\begin{equation}
    \Op_J (f) = (2J+1) \int_{S^2} f(\omega) \ketbra{\omega}{\omega}{J} \domega.
\end{equation}
We also consider generalizations of $\Hus_J$, $\Op_J$ constructed using $\ket{\omega;i}_J$:
\begin{gather}
    \Hus_J^i (\rho)(\omega)  = {}_J \bra{\omega ; i } \rho \ket{\omega ; i}_J, \\
    \Op_J^i (f)  = (2J+1) \int_{S^2} f(\omega) \ketbra{\omega;i}{\omega;i}{J} \domega.
\end{gather}

Let us mention some elementary properties of $\Hus_J^i$ and $\Op_J^i$. All these maps are positivity preserving, unital, and they preserve the \enquote{total mass}, understood as the integral over $S^2$ for functions and as the normalized trace for operators. The map $(2J+1) \Hus^i_J$ is the adjoint of $\Op^i_J$.

The maps $(2J+1) \Hus^i_J$ are the vertices of the simplex of $\SU(2)$-equivariant linear maps taking density matrices in $B(\irrep_J)$ to probability measures\footnote{Here and in some other parts of the manuscript we do not distinguish between an integrable function $f$ and an absolutely continuous measure $f(\omega) \domega$.} on $S^2$, see \Cref{prop:qtc_simplex} in \Cref{sec:quantization}, where we also discuss the equivalence between these maps and equivariant POVMs (positive operator valued measures). 

The maps $\Op_J, \Hus_J$ appear in the work of Berezin \cite{berezin_general_1975}. Berezin calls $\Hus_J(\rho) $ the covariant symbol of $\rho$, and any function $f$ such that $\rho= \Op_J(f)$ a contravariant symbol of $\rho$. The paper \cite{berezin_general_1975} includes a proof that every operator admits a contravariant symbol (obviously not unique), and discusses interesting inequalities satisfied by $\rho$ given in terms of its symbols. If $\rho = \Op_J(f)$ and $g = \Hus_J(\rho)$, we have the norm bounds:
\begin{equation}
\| g \|_\infty \leq \| \rho \|_\infty \leq \| f \|_\infty,
\label{eq:norm_est_lower_upper}
\end{equation}
and, in the case of real $f$ (hence self-adjoint $\rho$) and for a convex function $\varphi$ defined on $\image(f)$, the Berezin-Lieb inequalities
\begin{equation}
\int_{S^2} \varphi(g(\omega)) \domega \leq \frac{1}{2J+1} \Tr[\varphi(\rho)] \leq \int_{S^2} \varphi(f(\omega)) \domega.
    \label{eq:Berezin-Lieb}
\end{equation}
The inequalities \eqref{eq:Berezin-Lieb} appear in \cite{berezin_general_1975}. Lieb \cite{lieb_classical_1973} used them with $\varphi(x)= e^x$ to compare partition functions of quantum lattice spin models to partition functions of corresponding classical systems, with enough precision to make statements about the thermodynamic limit. The inequalities \eqref{eq:Berezin-Lieb} were reproved (in a slightly more general setting) by Simon in \cite{simon_classical_1980}, where another proof of existence of contravariant symbols was also given. Both \eqref{eq:norm_est_lower_upper} and \eqref{eq:Berezin-Lieb} contain $g$ in the lower bound and $f$ in the upper bound, and for this reason Simon \cite{simon_classical_1980} proposed to call $g$ the lower symbol of $\rho$ and $f$ an upper symbol of $\rho$. 

Quantization using Bloch coherent states is essentially the same as Berezin-Toeplitz quantization applied to the manifold $S^2$. We refer the reader to the review article \cite{schlichenmaierBerezinToeplitzQuantizationCompact2010} and the book \cite{le2018brief} for pedagogical introductions to the general framework of Berezin-Toeplitz quantization, summary of known results and explanation of relations with other approaches to quantization, such as geometric quantization or deformation quantization. In order to facilitate comparisons of results, we briefly discuss the relation between Toeplitz operators and the operators $\Op_J(f)$ that we use in \Cref{app:BT}.

The use of Bloch coherent states in the representation theory of $\SU(2)$ is largely analogous to the use of Glauber coherent states to analyze representations of canonical commutation relations. A popular alternative to quantization with Glauber coherent states is Weyl quantization. It has many advantages, e.g.\ it is equivariant with respect to a larger symmetry group and it is characterized by smaller error terms in certain (approximate) semiclassical formulas. In contrast to quantization using coherent states, Weyl quantization does not preserve positivity: the quantization of a~positive function is not necessarily a positive operator. An~analog of Weyl quantization for representations of $\SU(2)$ goes back to Stratonovich \cite{stratonovich} and has been further developed by V{\'a}rilly and Gracia-Bond{\'i}a \cite{varilly_moyal_1989}. Let us summarize this briefly. A Stratonovich-Weyl correspondence is an SU(2)-equivariant linear map $\sigma_J : B(\irrep_J) \to L^2(S^2) $ such that $\sigma_J(\rho^*) = \overline{\sigma_J(\rho)}$, $\sigma_J(\mathbbm 1) = 1$ and 
\begin{equation}
    \frac{1}{2J+1} \Tr \left[ \rho_1^* \rho_2 \right] = \int_{S^2} \overline{\sigma_J(\rho_1) (\omega)} \sigma_J(\rho_2)(\omega) \domega.
\label{eq:Stratonovich_Weyl_isometry}
\end{equation}
Given one such $\sigma_J$, one introduces a quantization map $\Op_J^{\mathrm{SW}}$ defined as the adjoint of $(2J+1)\sigma_J$. The property \eqref{eq:Stratonovich_Weyl_isometry} is equivalent to $\Op_J^{\mathrm{SW}}$ being a left inverse of $\sigma_J$. For every $J$ there exist $2^{2J}$ Stratonovich-Weyl correspondences $\sigma_J$. In \cite{varilly_moyal_1989} a proposal of a~distinguished Stratonovich-Weyl correspondence was given. It was shown \cite{varilly_moyal_1989} that the distinguished $\sigma_J$ can be described with the formula
\begin{equation}
    \sigma_J = \Hus_J (\Op_J \Hus_J)^{- \frac12}, \qquad  \Op_J^{\mathrm{SW}} = (\Op_J \Hus_J)^{- \frac12} \Op_J.
\end{equation}
In this paper we prefer to use directly the map $\Hus_J$, which is not a Stratonovich-Weyl correspondence because $\Op_J \Hus_J \neq 1$. We remark that $\Op_J$ is an optimal positivity preserving quantizaton, in the sense explained in \cite{ioosBerezinToeplitzQuantization2021}.

The map $\Hus_J \Op_J$ is often called the Berezin transform. It is an approximate identity, in the sense that $\Hus_J \Op_J \to 1$ strongly as $J \to \infty$ \cite{berezin_general_1975}. Here we describe this convergence quantitatively. We bound the operators
\begin{equation*}
 (1-\Hus_J \Op_J)(1-4 \Delta)^{-1} \text{ and } (1-\Op_J \Hus_J )(1+4 \mathcal Q)^{-1},  
\end{equation*}
where $\Delta$ is the Laplacian on $S^2$ and $\mathcal Q$ is the quadratic Casimir of SU(2) acting on $B(\irrep_J)$ by iterated commutators, see \eqref{eq:ad_Casimir}. This allows to estimate the difference of the upper and lower symbols of an operator given quantified regularity of the upper symbol, and to control the error in the approximation of an operator $\rho \in B(\irrep_J)$ by $\Op_J \Hus_J (\rho)$, given bounds on $\rho$ and $\mathcal Q (\rho)$. In \Cref{lemma:HJ_OpJ_composition,lemma:HJ_OpJ_as_integral_operator} we give explicit formulas for $\Op_J \Hus_J$ and $\Hus_J \Op_J$ which are useful in deriving the estimates just described. See also \cite[Equation 5.9]{berezin_general_1975} for an interesting infinite product representation. 

We now summarize the main results of the paper. In the bounds below, $\| \cdot \|_p$ is the $p$th Schatten norm of an operator or the $L^p$ norm of a~function, depending on the context.

\begin{theorem} \label{thm:hus_op}
$\iset{\Hus_J \Op_J}_{J}$ is a monotonically increasing sequence of positive, positivity-preserving operators. It converges strongly to $\identity$ on $L^p(S^2)$ for every $p \in [1,\infty)$ and on $C(S^2)$. For~every $s \in [0,1]$ we~have the bounds
\begin{align}
    \| (1-\Hus_J \Op_J ) f \|_2 & \leq \frac{1}{(2J+1)^s} \| (-\Delta)^s f \|_2,  \\
    \| (1- \Op_J \Hus_J) \rho \|_2 & \leq \frac{1}{(2J+1)^s} \| \cQ^s \rho \|_2,
\end{align}
and
\begin{align}
        \norm{\left(1-\Hus_J \Op_J+\frac{\Delta}{2J+1}\right)f}_{2} & \leq\frac{1}{(2J+1)^{1+s}} \| 
(-\Delta)^{1+s} f\|_2,  \\
\norm{\left(1- \Op_J \Hus_J-\frac{\cQ}{2J+1}\right)\rho}_{2} 
& \leq\frac{1}{(2J+1)^{1+s}} \| 
\cQ^{1+s} \rho \|_2.
    \end{align}
Moreover, there exists a constant $C>0$ such that for every $p \in [1 , \infty]$ we have
\begin{align}
    \| (1-\Hus_J \Op_J ) f \|_p & \leq \frac{C}{2J+1} \| (1-4\Delta) f \|_p,  \\
    \| (1- \Op_J \Hus_J) \rho \|_p & \leq \frac{C}{2J+1} \| (1+4\cQ) \rho \|_p.
    \end{align}
\end{theorem}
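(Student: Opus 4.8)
The plan is to exploit the $\SU(2)$-equivariance of $\Op_J$ and $\Hus_J$ to diagonalize everything at once. Since $(2J+1)\Hus_J$ is the adjoint of $\Op_J$, the Berezin transform is $\Hus_J\Op_J = \tfrac{1}{2J+1}\Op_J^\ast\Op_J$, which is manifestly positive and self-adjoint on $L^2(S^2)$, and it is positivity-preserving because $\Op_J$ and $\Hus_J$ both are. Equivariance means $\Hus_J\Op_J$ commutes with the $SO(3)$-action, so by Schur's lemma it acts as a scalar $\lambda_\ell(J)$ on the space $V_\ell$ of degree-$\ell$ spherical harmonics; likewise $\Op_J\Hus_J = \tfrac{1}{2J+1}\Op_J\Op_J^\ast$ acts as a scalar on each isotypic component $W_\ell\subset B(\irrep_J)$, $\ell=0,\dots,2J$. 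Because $A^\ast A$ and $AA^\ast$ share their nonzero spectrum, these scalars coincide, so it suffices to compute $\lambda_\ell(J)$ once. A standard computation (applying the kernel to a highest-weight spherical harmonic, using $|\braket{\omega}{\omega'}|^2=(\tfrac{1+\omega\cdot\omega'}{2})^{2J}$) yields the closed form $\lambda_\ell(J)=\prod_{k=1}^\ell\frac{2J+1-k}{2J+1+k}$.

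With this in hand every $L^2$-statement becomes an elementary inequality on the eigenvalues, since $-\Delta$ acts by $\ell(\ell+1)$ on $V_\ell$ and $\cQ$ acts by $\ell(\ell+1)$ on $W_\ell$. Monotonicity follows because each factor $\frac{2J+1-k}{2J+1+k}$ increases in $J$, hence so does $\lambda_\ell(J)$, giving the claimed increase in the quadratic-form order. The first pair of bounds reduces to $1-\lambda_\ell(J)\le\frac{\ell(\ell+1)}{2J+1}$, which I would get from the Weierstrass estimate $\prod_k(1-a_k)\ge 1-\sum_k a_k$ with $a_k=\frac{2k}{2J+1+k}\le\frac{2k}{2J+1}$; the general $s\in[0,1]$ then follows by comparing with the trivial bound $1-\lambda_\ell(J)\le 1$. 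The refined pair needs the next order, i.e. $|1-\lambda_\ell(J)-\tfrac{\ell(\ell+1)}{2J+1}|\le(\tfrac{\ell(\ell+1)}{2J+1})^{1+s}$, obtained by retaining the second-order term in the product expansion and again interpolating against the trivial bound. Strong convergence on $L^2$ then follows from $\lambda_\ell(J)\to 1$ together with uniform boundedness.

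For convergence on $L^p$ with $p<\infty$ and on $C(S^2)$, I would note that $\Hus_J\Op_J$ is unital and mass-preserving, hence a contraction in every $L^p$-norm and in the sup-norm; strong convergence then follows from the $L^2$ statement on the dense set of finite spherical-harmonic sums, while for $C(S^2)$ one can instead observe directly that $K_J(\omega,\omega')=(2J+1)(\tfrac{1+\omega\cdot\omega'}{2})^{2J}$ is an approximate identity.

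The genuinely analytic part is the last pair of uniform-in-$p$ bounds, where the spectral picture no longer suffices because $L^p$ for $p\ne 2$ is not adapted to the spherical-harmonic decomposition. Here I would use equivariance once more: $(1-\Hus_J\Op_J)(1-4\Delta)^{-1}$ is $SO(3)$-invariant, hence convolution against a kernel $k_J(\omega\cdot\omega')$, and for such operators the $L^p\to L^p$ norm is bounded by $\|k_J\|_{L^1(S^2)}$ uniformly in $p$ (equivalently, interpolate the $L^1$ and $L^\infty$ bounds from the Schur test). The whole problem thus reduces to the single estimate $\|k_J\|_1\le\frac{C}{2J+1}$, which I would prove from the explicit Berezin kernel $K_J$ together with the fact that $(1-4\Delta)^{-1}$ has a positive, mass-one Green's kernel, exploiting that $K_J$ concentrates at angular scale $(2J+1)^{-1/2}$ around the diagonal. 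On the operator side the same scheme applies after writing the equivariant map $(1-\Op_J\Hus_J)(1+4\cQ)^{-1}$ as an average $\int_{\SU(2)}c_J(g)\,U_g(\cdot)U_g^\ast\,dg$ of unitary conjugations against a class function $c_J$; since each conjugation is a Schatten isometry, the Schatten-$p$ norm is controlled by $\|c_J\|_{L^1(\SU(2))}$, and one must show this is $\le\frac{C}{2J+1}$. Producing these two $L^1$ kernel estimates is the main obstacle; everything else is bookkeeping on the eigenvalues $\lambda_\ell(J)$.
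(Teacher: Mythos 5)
Your $L^2$ analysis is sound and is essentially the paper's: the paper also diagonalizes the Berezin transform on isotypic components, obtaining exactly your eigenvalues $\lambda_\ell(J)=\prod_{k=1}^{\ell}\frac{2J+1-k}{2J+1+k}=\frac{(2J)!\,(2J+1)!}{(2J-\ell)!\,(2J+\ell+1)!}$ (it computes them by expanding the kernel $(2J+1)\bigl(\tfrac{1+\omega\cdot\omega'}{2}\bigr)^{2J}$ in Legendre polynomials rather than by testing against a highest-weight harmonic, and it proves the two elementary inequalities $0\le 1-\lambda_\ell\le\frac{\ell(\ell+1)}{2J+1}$ and $\bigl|1-\lambda_\ell-\frac{\ell(\ell+1)}{2J+1}\bigr|\le\frac{\ell^2(\ell+1)^2}{(2J+1)^2}$ by binomial/hockey-stick identities instead of your Weierstrass product expansion; your route also works, and with a Bonferroni-type bookkeeping of the second-order term it even recovers the constant $1$ rather than a constant like $3/2$). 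Your reduction of the uniform-in-$p$ bounds to an $L^1$ estimate on an invariant kernel — a zonal convolution on $S^2$ for the function side, an average $\int c_J(g)\,g\rho g^{-1}\dg$ of conjugations against a class function for the operator side — is also exactly the paper's framework (it does both at once on $\SU(2)$, writing $\Hus_J\Op_J=\Op_J\Hus_J=F_J\rhd$ for an explicit class function $F_J$ with $RF_J(\alpha)\propto\cos^{4J}\alpha$, and noting the minor subtlety that $(1+4\cQ)^{-1}=G^{\mathrm{SO}}\rhd$ on the relevant $\mathrm{SO}(3)$-representations).

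The genuine gap is the single estimate you yourself identify as the main obstacle, $\|k_J\|_1\le\frac{C}{2J+1}$, and the mechanism you propose for it fails to give the stated rate. A soft argument from ``$K_J$ concentrates at angular scale $(2J+1)^{-1/2}$'' together with positivity and unit mass of the Green's kernel controls $\|K_J*G-G\|_1$ by the second $L^1$ modulus of smoothness of $G$ at scale $t\sim(2J+1)^{-1/2}$. But the Green's kernel of $1-4\Delta$ is singular on the diagonal (logarithmic on $S^2$; $RG^{\mathrm{SO}}(\alpha)=\frac{\pi}{4\sin\alpha}$ on $\SU(2)$), so $|\nabla^2 G|$ just fails to be integrable and $\omega_2(G,t)_{L^1}\sim t^2\log(1/t)$; your scheme therefore yields only $O\bigl(\tfrac{\log(2J+1)}{2J+1}\bigr)$, short of the claimed $O\bigl(\tfrac{1}{2J+1}\bigr)$. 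The paper gets the sharp rate by an exact cancellation, not a smoothing heuristic: it solves $(1-4\Delta)(F_J*G)=F_J$ in the radial variable via \eqref{eq:class_function_Laplacian}, obtaining
\begin{equation*}
R(F_J*G)(\alpha)=\frac{2^{8J}(2J)!^4(2J+1)}{(4J+1)!^2}\left[\frac{1}{\sin\alpha}-\sum_{l=2J+1}^{\infty}\frac{(\tfrac12)_l}{l!}\cos^{2l}\alpha\right],
\end{equation*}
so the $\frac{1}{\sin\alpha}$ singularity of $G^{\mathrm{SO}}$ is matched \emph{exactly}; what remains is the prefactor mismatch, which is $O(1/J)$ by Stirling, plus the tail, whose $L^1$ mass is $\le\frac1\pi\log\bigl(1+\frac{1}{2J}\bigr)=O(1/J)$ by Weyl's integration formula. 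To complete your proof you would need either this explicit computation or some equally precise identification of the cancellation near the diagonal; as written, the final step does not deliver the theorem's bound.
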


\Cref{thm:hus_op} is proved in \Cref{sec:inversion}.

Next, we describe the approximations \eqref{eq:multiplicativity_vague}, \eqref{eq:com_bracket} and \eqref{eq:product_vague} with precisely stated error bounds. In what follows, $\| f \|_{k,p}$ is the Sobolev norm of a function $f \in W^{k,p}(S^2)$ and $\| f \|_{C^{0,\alpha}_h}$ is the H\"older seminorm; see \eqref{eq:Sobolev_norm} and \eqref{eq:Holder_seminorm} for definitions. We also sometimes use the notation $\| \cdot \|_{C^{0, \alpha}_h}$ for the standard H\"older seminorm of a function defined on a (possibly unbounded) interval. 

We note that under mild regularity assumption on $f$, the norm $\| \Op_J(f) \|_p$ grows with $J$ with the rate $(2J+1)^{\frac{1}{p}}$ (see \Cref{lem:oph_norms} and \Cref{thm:traces}). For this reason we formulate some bounds, e.g. \eqref{eq:mult_bound_thm} below, with a negative power of $J$ included on the left hand side of the inequality. 

\begin{theorem} \label{thm:multiplicativity}
There exists a constant $C>0$ such that the following bounds are true.

If $p,p_1,p_2 \in [1, \infty]$ satisfy $\frac{1}{p}= \frac{1}{p_1 } + \frac{1}{p_2}$ and $f \in W^{2,p_1}(S^2)$, $g \in W^{2,p_2}(S^2)$, then:
\begin{align}
 \| \Hus_J(\Op_J(f) \Op_J(g))  - fg \|_p & \leq \frac{C}{2J+1} \| f \|_{2,p_1} \| g \|_{2,p_2}, \\
(2J+1)^{- \frac{1}{p}} \| \Op_J(f) \Op_J(g) - \Op_J(fg) \|_p  & \leq   \frac{C}{2J+1} \| f \|_{2,p_1} \| g \|_{2,p_2}. \label{eq:mult_bound_thm}
\end{align}

If $f$ is $\alpha$-H\"older continuous and $g $ is $\beta$-H\"older continuous, $\alpha, \beta \in [0,1]$, then
\begin{equation}
    \frac{1}{2J+1} \left \| \frac{\Op_J(f) \Op_J(g) + \Op_J(g) \Op_J(f)}{2} - \Op_J(fg) \right\|_1 \leq C\frac{ \| f \|_{C^{0, \alpha}_h} \| g \|_{C^{0, \beta}_h}}{(2J+1)^{\frac{\alpha + \beta}{2}}}.
    \label{eq:Holder_bound_thm}
\end{equation}

If $f, g \in W^{4,2}(S^2)$, then
\begin{align}
\left \| \Hus_J(\Op_J(f) \Op_J(g)) - \left( fg + \frac{i \poisson{f}{g} - \nabla f \cdot \nabla g + \Delta(fg)}{2J+1} \right) \right \|_1 & \leq C\frac{\| f \|_{4,2} \| g \|_{4,2}}{(2J+1)^2}, \\
\frac{1}{2J+1} \left \| \Op_J(f) \Op_J(g) - \Op_J \left( fg + \frac{i \poisson{f}{g} - \nabla f \cdot \nabla g }{2J+1} \right) \right \|_1 & \leq C\frac{\| f \|_{4,2} \| g \|_{4,2}}{(2J+1)^2} . \label{eq:Op_prod_expansion}
\end{align}
\end{theorem}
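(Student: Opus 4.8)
The plan is to reduce every assertion to a Laplace-type analysis of explicit integral kernels assembled from coherent-state overlaps, and then to pass from the Husimi side to the operator side using the Berezin-transform estimates of \Cref{thm:hus_op}. Writing out the defining integral for each factor and inserting the resolution of the identity \eqref{eq:coherent_resolution_of_1}, the diagonal matrix element becomes
\begin{equation*}
\Hus_J(\Op_J(f)\,\Op_J(g))(\omega) = (2J+1)^2 \int_{S^2}\!\int_{S^2} f(\omega_1)\,g(\omega_2)\,\braket{\omega}{\omega_1}\,\braket{\omega_1}{\omega_2}\,\braket{\omega_2}{\omega}\,\dd\omega_1\,\dd\omega_2 .
\end{equation*}
Everything rests on two features of the overlap: its modulus $\abs{\braket{\omega}{\omega'}}^2 = \bigl(\tfrac{1+\omega\cdot\omega'}{2}\bigr)^{2J}$ concentrates in a cap of angular radius $\sim (2J+1)^{-1/2}$ around $\omega'=\omega$, with $\theta^k$-moments of size $(2J+1)^{-k/2}$; and the phase of the three-point product $\braket{\omega}{\omega_1}\braket{\omega_1}{\omega_2}\braket{\omega_2}{\omega}$ equals the oriented symplectic area of the geodesic triangle $(\omega,\omega_1,\omega_2)$, hence is antisymmetric under $\omega_1 \leftrightarrow \omega_2$. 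I would record these two facts first, reusing the explicit kernel of $\Hus_J\Op_J$ from \Cref{lemma:HJ_OpJ_as_integral_operator}.

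The core computation is a Laplace expansion of the double integral. Passing to normal coordinates centered at $\omega$ and Taylor expanding $f$ and $g$, the leading moment reproduces $fg$, and the order $(2J+1)^{-1}$ corrections separate by parity: the even, real moments assemble into the symmetric part $-\nabla f\cdot\nabla g + \Delta(fg)$, while the antisymmetric triangle phase produces exactly $i\poisson{f}{g}$. The universal constants in these terms need not be extracted from the integral directly: by $\SU(2)$-equivariance the antisymmetric part is forced to be a multiple of $\poisson{f}{g}$, and its coefficient is pinned down to leading order by testing on the coordinate functions $\omega\mapsto\omega_a$, whose quantizations are the spin operators with known commutators. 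This yields the Husimi second-order formula; the operator identity \eqref{eq:Op_prod_expansion} then follows because the extra term $\Delta(fg)$ is precisely the first-order correction of the Berezin transform, $\Hus_J\Op_J \approx 1 + \Delta/(2J+1)$ from \Cref{thm:hus_op}. Concretely, with target symbol $\tau = fg + \tfrac{i\poisson{f}{g}-\nabla f\cdot\nabla g}{2J+1}$, I would split $\Op_J(f)\Op_J(g)-\Op_J(\tau) = (1-\Op_J\Hus_J)(\Op_J(f)\Op_J(g)) + \Op_J(\Hus_J(\Op_J(f)\Op_J(g))-\tau)$ and bound the first piece by \Cref{thm:hus_op} (after controlling $\cQ(\Op_J(f)\Op_J(g))$) and the second by applying $\Op_J$ to a small function.

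The error terms are handled by bounding the Taylor remainders by the stated Sobolev norms and integrating them against the absolutely convergent modulus of the kernel; since the normalized modulus is a uniform approximate identity, it acts boundedly on each $L^{p_i}(S^2)$, and H\"older's inequality with $\tfrac1p=\tfrac1{p_1}+\tfrac1{p_2}$ recombines the two factors, which is the origin of the exponent relation in \eqref{eq:mult_bound_thm}. The operator-norm versions pick up the factor $(2J+1)^{1/p}$ from the growth rate of $\norm{\Op_J(\cdot)}_p$ recorded in \Cref{lem:oph_norms}. The H\"older bound \eqref{eq:Holder_bound_thm} is cheaper and, crucially, is stated on the operator side where there is no $\Delta(fg)$ term: after symmetrizing, the antisymmetric Poisson contribution cancels and the remaining symmetric part can be written as a bilinear form in the increments $f(\cdot)-f(\omega)$ and $g(\cdot)-g(\omega)$, each bounded by a H\"older seminorm times a power of the geodesic distance; integrating $\theta_1^\alpha\theta_2^\beta$ against the concentrated kernel produces exactly $(2J+1)^{-\alpha/2}(2J+1)^{-\beta/2}$, with no derivatives of $f$ or $g$ ever needed.

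The main obstacle is the honest treatment of the phase. Proving that the three-point phase is the geodesic-triangle area, expanding it to the order that isolates $\poisson{f}{g}$, and controlling the remainder uniformly in the base point $\omega$ — all while keeping the moment estimates sharp enough to deliver the $(2J+1)^{-2}$ rate in the second-order formulas — is the delicate analytic heart of the argument. The secondary difficulty is bookkeeping: keeping every remainder bound genuinely bilinear and dimensionally consistent with the Sobolev and H\"older norms on each factor, across the full range of exponents, and verifying the bound on $\cQ(\Op_J(f)\Op_J(g))$ needed to invoke \Cref{thm:hus_op} on the operator side.
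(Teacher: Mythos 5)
Your plan is correct in outline, and for the core second--order Husimi expansion it takes a genuinely different route from the paper. You expand $\Hus_J(\Op_J(f)\Op_J(g))$ as a continuous double integral against the three-point coherent-state kernel and run a complex Laplace analysis, pinning the Poisson-bracket coefficient by equivariance plus testing on the coordinate functions. The paper never touches this kernel: in \Cref{lemma:husimi_of_OpJfOpJg_bound} it inserts the \emph{discrete} eigenbasis of $\omega\cdot S$, writing $\Hus_J(\Op_J(f)\Op_J(g))=\sum_{i=0}^{2J}\Hus_J^{J,J-i}(\Op_J(f))\,\Hus_J^{J-i,J}(\Op_J(g))$, lifts the off-diagonal Husimi functions to $\SU(2)$ to remove the phase ambiguity, and converts each index shift $i$ into $i$ applications of the left-invariant ladder fields $\xi_\pm^L$; the explicit combinatorial prefactors $\tfrac{(2J-i-m)!\,i!}{(2J-i)!\,(i+m)!}$ then deliver the $(2J+1)^{-m}$ decay with remainders bounded by $\norm{f}_{m,p_1}\norm{g}_{m,p_2}$ directly from the H\"older inequality and the norm bounds of \Cref{lem:oph_norms}. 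In particular, the oscillatory phase that you correctly identify as the delicate analytic heart of your approach simply never appears: the $i=1$ term is evaluated exactly at the north pole as $-\tfrac{1}{2J}\,L_-\Hus_J\Op_J(f)\cdot L_+\Hus_J\Op_J(g)$, which produces $-\nabla f\cdot\nabla g - i\poisson{f}{g}$ with its constant for free, so no equivariance-and-test-function argument is needed. Your remaining steps coincide with the paper's: the operator-side splitting via $1=\Op_J\Hus_J+(1-\Op_J\Hus_J)$ together with the Leibniz computation of $(1+4\cQ)(\Op_J(f)\Op_J(g))$ is exactly \Cref{proposition:multiplicativity_OpJ_zeroth_order_p_bound}, and your H\"older argument is the paper's \Cref{lemma:multiplicativity_holder_continuous_functions}, where the symmetrized difference is written exactly as a two-point bilinear form in the increments $f(\omega)-f(\omega')$, $g(\omega)-g(\omega')$ and bounded using $\norm{\ket{\omega}_J{}_J\bra{\omega'}}_1=1$ and a Beta integral.

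Two cautions on your version of the core lemma, both surmountable but constituting real work that the paper's ladder method bypasses. First, at the concentration scale $|\omega_j-\omega|\sim(2J)^{-1/2}$ the triangle phase is $O(1)$, not small, so for the \emph{computed} terms you cannot bound by the modulus: you must integrate the full complex kernel, which fortunately is explicit, namely $(2J+1)^2\bigl[\Tr(P_\omega P_{\omega_1}P_{\omega_2})\bigr]^{2J}$ with $\Tr(P_\omega P_{\omega_1}P_{\omega_2})=\tfrac14\bigl(1+\omega\cdot\omega_1+\omega_1\cdot\omega_2+\omega_2\cdot\omega+i\,\omega\cdot(\omega_1\times\omega_2)\bigr)$, so the mixed Gaussian-type moments are computable in closed form. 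Second, with only $f\in W^{2,p_1}$, $g\in W^{2,p_2}$ you cannot use pointwise Taylor expansion; you need integral-form remainders and the approximate-identity/convolution bound on each $L^{p_i}$ that you state, checked uniformly in the base point and in the full range of exponents. Modulo executing these two points carefully (including the third-order moments needed for the $(2J+1)^{-2}$ rate, where the phase contributes to the parity bookkeeping), your route yields the theorem and is essentially the classical Berezin--Toeplitz stationary-phase proof, whereas the paper's discrete expansion trades that analysis for algebra and obtains the low-regularity Sobolev remainders uniformly in $p$ almost for free.
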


\Cref{thm:multiplicativity} is proved in \Cref{sec:mult}. We remark that results similar to \Cref{thm:multiplicativity} are known quite generally for Berezin-Toeplitz operators \cite{bordemannToeplitzQuantizationKahler1994}, also including higher order terms. In fact, the composition $\Op_J(f) \Op_J(g)$ admits an asymptotic expansion for $J \to \infty$ given by $\Op_J(f \star_{\mathrm{BT}} g)$, where $\star_{\mathrm{BT}}$ is a star product on $S^2$ \cite{schlichenmaierZweiAnwendungenAlgebraischgeometrischer1996,schlichenmaierDeformationQuantizationCompact2000}. In the first results of this type remainder terms contained constants depending in an unspecified way on functions $f,g$, which were assumed to be smooth. It has later been established \cite{barronSemiclassicalPropertiesBerezin2014,charlesSharpCorrespondencePrinciple2018} that the error terms can be controlled by $C^k$ norms of $f,g$ (with large enough explicit~$k$). Remainders in our \Cref{thm:multiplicativity} are bounded by Sobolev norms, so they require less regularity of $f,g$. We believe that one could extend \Cref{thm:multiplicativity} to include higher order terms by using the same proof method, but we did not pursue this.

Let us remark that terms of order $\frac{1}{2J+1}$ in \eqref{eq:Op_prod_expansion} involve only one derivative of $f$ and $g$, so one could hope for the error bound in \eqref{eq:mult_bound_thm} to depend on only one, not two derivatives of $f$ and $g$. We ask the following:

\begin{question}
Does there exist a constant $C>0$ such that, for $p, p_1, p_2$ as in \eqref{eq:mult_bound_thm}:
\begin{equation}
(2J+1)^{- \frac{1}{p}} \| \Op_J(f) \Op_J(g) - \Op_J(fg) \|_p \leq \frac{C}{2J+1} \| f \|_{1,p_1} \| g \|_{1,p_2}?
\end{equation}

\end{question}

Our bound \eqref{eq:Holder_bound_thm} is significantly weaker, but (putting $\alpha = \beta = 1$) it does depend only on first derivatives of $f$ and $g$. Similarly, one could ask if an inequality analogous to \eqref{eq:Op_prod_expansion} holds for $f,g \in W^{2,2}(S^2)$.

By a standard density argument, \Cref{thm:multiplicativity} implies that if $p,p_1,p_2 \in [1, \infty]$ satisfy $1/p = 1/p_1 + 1/p_2$, then
\begin{equation}
\lim_{J \to \infty}   (2J+1)^{- \frac{1}{p}} \| \Op_J(f) \Op_J(g) - \Op_J(fg) \|_p = 0 
\end{equation}
for all $f \in L^{p_1}(S^2)$ and $g \in L^{p_2}(S^2)$ (if $p_i = \infty$, $L^{p_i}(S^2)$ has to be replaced by $C(S^2)$).

We prove also the semiclassical formula \eqref{eq:semiclassical_trace} for traces. This result consists of several statements with differing assumptions about the functions $\varphi$ and $f$.

\begin{theorem} \label{thm:traces}
We have the expansion:
\begin{equation}
    \frac{1}{2J+1}\Tr_J[\varphi(\Op_J(f))] = \int_{S^2}\varphi(f(\omega))\domega - \;\mathcal{E},
\end{equation}
where $\mathcal{E}$ is an error term that satisfies the following bounds.
\begin{enumerate}
    \item If $f\in W^{1,2}(S^2)$ is valued in a (possibly unbounded) interval $I$, and $\varphi$ is a function on $I$ with second derivative in $L^\infty(I)$, then $\abs{\mathcal{E}}\leq \frac{1}{2J+1}\norm{\varphi''}_\infty \norm{\nabla f}_{2}^2$.
    \item If $f\in W^{2,1}(S^2)$ is real-valued, and $\varphi\in C^{0,\alpha}(\image(f))$ is convex, $0<\alpha\leq 1$, then $0\leq {\mathcal{E}}\leq \frac{C}{(2J+1)^{\alpha}} \norm{\varphi}_{C_h^{0,\alpha}}\norm{(1-4\Delta)f}^\alpha_{1}$ with a universal constant $C$.
    \item If $f\in C(S^2)$ is real-valued and $\varphi\in C(\image(f))$, then $\mathcal{E}\to 0 $ as $J\to\infty$.
\end{enumerate}
\end{theorem}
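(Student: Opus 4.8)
Write $\rho=\Op_J(f)$ and $g=\Hus_J(\rho)=\Hus_J\Op_J(f)$, so that $\mathcal{E}=\int_{S^2}\varphi(f)\domega-\frac{1}{2J+1}\Tr_J[\varphi(\rho)]$. The engine for all three parts is the trace identity $\Tr_J[\Op_J(h)A]=(2J+1)\int_{S^2}h\,\Hus_J(A)\domega$ (adjointness of $\Op_J$ and $(2J+1)\Hus_J$), which I would exploit as follows. Diagonalizing $\rho=\sum_k\lambda_k\ket{\psi_k}\bra{\psi_k}$, each eigenvalue $\lambda_k=\inner{\psi_k}{\rho\psi_k}=\int f\,d\nu_k$ is the mean of $f$ against the probability measure $d\nu_k=(2J+1)\Hus_J(\ket{\psi_k}\bra{\psi_k})\domega$, and the second moments sum to $\sum_k\int f^2\,d\nu_k=\Tr_J[\Op_J(f^2)]=(2J+1)\int_{S^2}f^2\domega$. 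In particular the $\lambda_k$ lie in $[\inf f,\sup f]\subseteq\overline{I}$, so $\varphi(\rho)$ is well defined. For Part (1) the plan is to run a quantitative version of the proof of the upper Berezin--Lieb inequality: applying the Taylor bound $\abs{\varphi(b)-\varphi(a)-\varphi'(a)(b-a)}\leq\frac12\norm{\varphi''}_\infty(b-a)^2$ with $a=\lambda_k$, $b=f(\omega)$ and integrating against $d\nu_k$ (whose mean is $\lambda_k$) annihilates the linear term and leaves $\abs{\varphi(\lambda_k)-\int\varphi(f)\,d\nu_k}\leq\frac12\norm{\varphi''}_\infty\mathrm{Var}_{\nu_k}(f)$. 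Summing over $k$, dividing by $2J+1$, and using $\sum_k\mathrm{Var}_{\nu_k}(f)=(2J+1)\int f^2\domega-\Tr_J[\rho^2]$ gives $\abs{\mathcal{E}}\leq\frac{\norm{\varphi''}_\infty}{2}\bigl(\int_{S^2}f^2\domega-\frac{1}{2J+1}\Tr_J[\rho^2]\bigr)$. The trace identity with $A=\rho$ rewrites the bracket as $\int_{S^2}f\,(1-\Hus_J\Op_J)f\domega$; and since $\Hus_J\Op_J$ and $-\Delta$ are simultaneously diagonalized by the decomposition $L^2(S^2)=\bigoplus_\ell V_\ell$ into spherical harmonics, the $s=1$ case of \Cref{thm:hus_op} upgrades to the operator inequality $0\leq 1-\Hus_J\Op_J\leq\frac{-\Delta}{2J+1}$, so that $\int f\,(1-\Hus_J\Op_J)f\domega\leq\frac{1}{2J+1}\norm{\nabla f}_2^2$. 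This yields the claimed bound (in fact with a factor $\tfrac12$ to spare).

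For Part (2), convexity lets me quote the Berezin--Lieb inequalities \eqref{eq:Berezin-Lieb} directly: the upper one gives $\mathcal{E}\geq 0$ and the lower one gives $\mathcal{E}\leq\int_{S^2}[\varphi(f)-\varphi(g)]\domega$. I would then bound the integrand pointwise by $\norm{\varphi}_{C^{0,\alpha}_h}\abs{f-g}^\alpha$ and apply Jensen's inequality for the concave map $t\mapsto t^\alpha$ against the probability measure $\domega$, obtaining $\mathcal{E}\leq\norm{\varphi}_{C^{0,\alpha}_h}\norm{f-g}_1^\alpha$. Since $f-g=(1-\Hus_J\Op_J)f$, the $p=1$ bound of \Cref{thm:hus_op} gives $\norm{f-g}_1\leq\frac{C}{2J+1}\norm{(1-4\Delta)f}_1$, which is exactly the asserted estimate with a universal constant.

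For Part (3) the plan is a double approximation reducing to Part (1). Fix $\varepsilon>0$. By Weierstrass I would choose a polynomial $\varphi_\varepsilon$ with $\norm{\varphi-\varphi_\varepsilon}_{\infty,\image(f)}<\varepsilon$; as the spectrum of $\rho$ lies in $\image(f)=[\min f,\max f]$, replacing $\varphi$ by $\varphi_\varepsilon$ changes both $\int\varphi(f)\domega$ and $\frac{1}{2J+1}\Tr_J[\varphi(\rho)]$ by less than $\varepsilon$. Then I would approximate the merely continuous $f$ by a smooth $f_\delta$ with $\norm{f-f_\delta}_\infty<\delta$; using $\norm{\Op_J(f-f_\delta)}_\infty\leq\norm{f-f_\delta}_\infty$ from \eqref{eq:norm_est_lower_upper} together with the fact that the polynomial $\varphi_\varepsilon$ is operator-Lipschitz on any fixed bounded interval, the change in both terms is at most $C_\varepsilon\delta$. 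For fixed $\varepsilon,\delta$, Part (1) applies to the smooth $f_\delta$ and bounds the remaining error by $\frac{\norm{\varphi_\varepsilon''}_\infty\norm{\nabla f_\delta}_2^2}{2J+1}\to 0$ as $J\to\infty$. Choosing $\varepsilon$, then $\delta$, then letting $J\to\infty$ gives $\limsup_{J\to\infty}\abs{\mathcal{E}}\leq C\varepsilon$, hence $\mathcal{E}\to 0$.

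The main obstacle is Part (1), specifically the identification of the second-order remainder in the trace expansion with the total variance $\int_{S^2}f^2\domega-\frac{1}{2J+1}\Tr_J[\rho^2]$ and its rewriting as $\int_{S^2}f\,(1-\Hus_J\Op_J)f\domega$; once this is in place, the sharp control of the Berezin transform from \Cref{thm:hus_op} does the rest. Parts (2) and (3) are then comparatively routine, the only care needed being the order of the limits in the approximation argument and the (elementary) operator-Lipschitz estimate for polynomials.
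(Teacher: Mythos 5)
Your proposal is correct, and for Parts (2) and (3) it follows essentially the same route as the paper: \Cref{prop:convex_Holder_trace} is exactly your chain (upper Berezin--Lieb for nonnegativity, lower Berezin--Lieb to reduce to $\int[\varphi(f)-\varphi(\Hus_J\Op_J f)]\domega$, pointwise H\"older bound, Jensen for $t\mapsto t^\alpha$, then the $p=1$ estimate of \Cref{thm:hus_op}), and the paper disposes of Part (3) with precisely the ``standard density argument'' you spell out, your operator-Lipschitz step for polynomials being the routine telescoping of $A^n-B^n$ on a common spectral interval. The only genuine divergence is Part (1): the paper, following \cite{laptev_szego_1996}, applies the upper Berezin--Lieb inequality to the two convex functions $\psi_\pm(\lambda)=\tfrac12\norm{\varphi''}_\infty\lambda^2\pm\varphi(\lambda)$, whereas you re-derive the same estimate from scratch by diagonalizing $\Op_J(f)$, viewing each eigenvalue $\lambda_k$ as the mean of $f$ against the coherent-state probability measure $\dd\nu_k=(2J+1)\Hus_J(\ketbra{\psi_k}{\psi_k}{})\domega$, and Taylor-expanding to second order so that the linear term integrates to zero. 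The two arguments converge at the identical pivot: the error is the quantum variance $\Tr_J[\Op_J(f^2)-\Op_J(f)^2]=(2J+1)\int_{S^2} f\,(1-\Hus_J\Op_J)f\domega$, controlled by the operator inequality $0\le 1-\Hus_J\Op_J\le \frac{-\Delta}{2J+1}$; note that this quadratic-form statement needs the simultaneous diagonalization furnished by \Cref{lemma:HJ_OpJ_composition} (the explicit eigenvalue bound $1-\frac{(2J)!(2J+1)!}{(2J-\ell)!(2J+\ell+1)!}\le\frac{\ell(\ell+1)}{2J+1}$ from the proof of \Cref{prop:HJ_OpJ_approximate_inversion_2_norm}), not merely the norm inequalities of \Cref{thm:hus_op} as stated --- you invoke this correctly. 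Your version buys two small things: it does not presuppose Berezin--Lieb for Part (1) (your Taylor-with-the-mean step is exactly the quantitative refinement of the Jensen step in the standard proof of the upper Berezin--Lieb inequality), and since the integral-remainder Taylor bound holds verbatim for complex-valued $\varphi$, you retain the factor $\tfrac12$ in that case too, where the paper's remark after \Cref{prop:C2_trace_bound_Berezin_Lieb} gives it up by splitting into real and imaginary parts; the theorem's stated bound (without the $\tfrac12$) is all that is required, so nothing is lost either way. The only cosmetic caveat is that for unbounded $f\in W^{1,2}(S^2)$ you should read $\inf f,\sup f$ as essential bounds and note that $\varphi\circ f\in L^1(S^2)$ follows from the quadratic Taylor bound with $f\in L^2$, but this is the same implicit bookkeeping the paper performs.
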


We remark that \Cref{thm:traces} will be obtained by combining the Berezin-Lieb inequalities with \Cref{thm:hus_op}. Its proof is presented in \Cref{sec:mult}.

The function $\varphi(x) = x \log(x)$ is particularly important because it appears in the definition of the von Neumann entropy of a density matrix $\rho$:
\begin{align}
    S_{vN}(\rho) = -\Tr[\rho\log\rho].
\end{align}
In this case \Cref{thm:traces} can be used to obtain the following result:

\begin{corollary}
    Let $\varphi(x) = x\log(x)$ and let $f \geq 0$. Then
\begin{align}
    \frac{1}{2J+1}\Tr_J[\varphi(\Op_J(f))] = \int_{S^2}\varphi(f(\omega))\domega - \;\mathcal{E},
\end{align}
where $\mathcal E \geq 0$ is an error term satisfying the following bound. There exists $C>0$ such that for all $f \in W^{2,1}(S^2) \cap L^\infty(S^2)$ and $J\geq 1$ we have:
\begin{align}
        \mathcal{E} \leq & 
\;C \frac{\log(2J+1)}{2J+1}\norm{(1-4\Delta)f}^{1-\frac{1}{\log(2J+1)}}_{1} \\
        & +C \frac{1}{2J+1}\max\{ 0, 1+\log \| f \|_\infty \} \norm{(1-4\Delta)f}_{1}. \nonumber
\end{align}
\end{corollary}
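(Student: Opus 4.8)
The plan is to apply part~(2) of \Cref{thm:traces} with a $J$-dependent H\"older exponent, after splitting $\varphi(x)=x\log x$ into a Lipschitz piece and a piece whose H\"older seminorm does \emph{not} grow with $\|f\|_\infty$. Nonnegativity $\mathcal E\ge 0$ will be automatic: $\varphi$ is convex and $f$ is real-valued, so part~(2) applies and already yields $\mathcal E\ge 0$, and the same holds for each convex summand below.

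First I would record the elementary estimate behind the $\log(2J+1)$ factor. For $t\in(0,1)$ and $\beta\in(0,1)$ one has $\log(1/t)\le \beta^{-1}t^{-\beta}$, since $\log s\le\beta^{-1}(s^\beta-1)$ for $s\ge 1$. Taking $\beta=1-\alpha$ and integrating, for $0\le x<y\le 1$,
\[
\int_x^y \log(1/t)\,dt \;\le\; \frac{1}{1-\alpha}\cdot\frac{y^\alpha-x^\alpha}{\alpha}\;\le\;\frac{1}{\alpha(1-\alpha)}(y-x)^\alpha ,
\]
where I used the subadditivity $y^\alpha-x^\alpha\le(y-x)^\alpha$.

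Next I would decompose $\varphi=\psi+\ell$, where $\ell(x)=x\log x$ for $x\ge 1$ and $\ell(x)=x-1$ for $x\le 1$ (the tangent continuation at $x=1$), and $\psi=\varphi-\ell$, so that $\psi(x)=x\log x-x+1$ on $[0,1]$ and $\psi\equiv 0$ on $[1,\infty)$. Both $\psi$ and $\ell$ are convex. The function $\ell$ is Lipschitz on $[0,\|f\|_\infty]$ with constant $1+\log\|f\|_\infty$ when $\|f\|_\infty>1$, and is affine when $\|f\|_\infty\le 1$. The function $\psi$ has $\psi'(t)=\log t$ on $(0,1)$ and vanishes on $[1,\infty)$, so pairs $x<1<y$ reduce to the subinterval $[0,1]$ and the displayed estimate gives $\|\psi\|_{C^{0,\alpha}_h(\image f)}\le \frac{1}{\alpha(1-\alpha)}$, crucially \emph{independent} of $\|f\|_\infty$. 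Since the error functional is linear in the integrand, $\mathcal E=\mathcal E[\psi]+\mathcal E[\ell]$, and I would bound the two pieces by part~(2) of \Cref{thm:traces} with different exponents. For $\psi$ take $\alpha=1-\frac{1}{\log(2J+1)}$, which lies in $(0,1)$ precisely because $J\ge 1$; then $(2J+1)^{-\alpha}=e/(2J+1)$, and $\frac{1}{\alpha(1-\alpha)}=\frac{\log(2J+1)}{\log(2J+1)-1}\log(2J+1)\le C\log(2J+1)$ since the prefactor is bounded for $J\ge 1$. This yields $\mathcal E[\psi]\le C\frac{\log(2J+1)}{2J+1}\|(1-4\Delta)f\|_1^{1-1/\log(2J+1)}$, the first term. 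For $\ell$ take $\alpha=1$: when $\|f\|_\infty>1$ this gives $\mathcal E[\ell]\le\frac{C}{2J+1}(1+\log\|f\|_\infty)\|(1-4\Delta)f\|_1$, and when $\|f\|_\infty\le 1$ the function $\ell$ is affine on $\image(f)$ so $\mathcal E[\ell]=0$; in both cases $\mathcal E[\ell]\le\frac{C}{2J+1}\max\{0,1+\log\|f\|_\infty\}\|(1-4\Delta)f\|_1$, the second term. Summing the two bounds proves the claim.

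The main obstacle is the second term. Applying part~(2) directly to $\varphi$ on $[0,\|f\|_\infty]$ forces the seminorm $\|\varphi\|_{C^{0,\alpha}_h}$ to carry a diameter factor $\|f\|_\infty^{1-\alpha}=\|f\|_\infty^{1/\log(2J+1)}$, which is \emph{not} controlled by $\|(1-4\Delta)f\|_1$: there is no endpoint estimate $\|f\|_\infty\lesssim\|(1-4\Delta)f\|_1$, because the $L^1$-resolvent kernel of $1-4\Delta$ on $S^2$ has a logarithmic singularity on the diagonal. Isolating the Lipschitz part $\ell$ and treating it at the exponent $\alpha=1$ is exactly what removes this spurious diameter factor and produces the clean $\log\|f\|_\infty$ dependence with the full power of $\|(1-4\Delta)f\|_1$. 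The remaining points are routine: convexity of $\psi$ and $\ell$, the reduction of cross-pairs for the seminorm of $\psi$ (trivial since $\psi$ vanishes beyond $1$), and the boundedness of $\frac{\log(2J+1)}{\log(2J+1)-1}$ for $J\ge 1$, which is also what underlies the hypothesis $J\ge 1$.
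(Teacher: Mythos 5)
Your argument is correct and takes essentially the same route as the paper's proof of \Cref{lemma:OpJ_functional_calculus_specialized_to_entropy}: there $\varphi$ is split at $x=1/e$ into a convex piece $\varphi_1$ (continued by a constant beyond $1/e$, with $\norm{\varphi_1}_{C^{0,\alpha}_h}\leq \tfrac{1}{\alpha(1-\alpha)}$ independently of $\norm{f}_\infty$) plus a convex Lipschitz piece $\varphi_2$ with constant $\max\{0,1+\log\norm{f}_\infty\}$, and \Cref{prop:convex_Holder_trace} is applied to each piece with the same choice $\alpha = 1-1/\log(2J+1)$, valid for $J\geq 1$. Your tangent-line split at $x=1$ is only a cosmetic variant of that decomposition (your observation that the affine part contributes zero error when $\norm{f}_\infty\leq 1$ plays the role of the paper's $\varphi_2\equiv 0$ below $1/e$), with the same seminorm estimates and the same treatment of the $J$-dependent exponent.
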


As already mentioned, the set of equivariant channels $B(\irrep_J) \to B(\irrep_K)$ is a simplex with vertices $\{ \Phi_{J,K}^M \}$, $M \in \{ |K-J|, \dots, K+J \}$ (see also \Cref{Appendix:QC} for the case of a~general group $G$). The channel $\Phi_{J,K}^M$ can be defined in three equivalent ways:
\begin{align}
\Phi^M_{J,K}(\rho) &= \frac{2J+1}{2M+1} \Tr_J[P_{J,K}^M(\rho^\beta\tensor \identity_K)],\label{eq:channel_formula_1}
\shortintertext{and:}
\Phi^M_{J,K}(\rho) &= \frac{2J+1}{2K+1} q^K_{J,M}(\rho\tensor \identity_M)\iota^K_{J,M}, \label{eq:channel_formula_2}
\shortintertext{and also:}
\Phi^M_{J,K}(\rho) &= \Tr_{M}[\iota^J_{K,M}\,\rho\; q^J_{K,M}].
\label{eq:channel_formula_3}
\end{align}
Here $P^M_{J,K}$ is the orthogonal projection onto $\mathcal H_M$ in $\mathcal H_J \otimes \mathcal H_K$, $\rho^\beta$ is the adjoint of $\rho$ with respect to the invariant bilinear form on $\mathcal H_J$ \eqref{eq:definition_of_beta}, $\iota^K_{J,M}$ is an equivariant isometric embedding $\irrep_K \to \irrep_J \otimes \irrep_M$, and $q^K_{J,M}$ is its adjoint: $q^K_{J,M} = (\iota^K_{J,M})^*$. We remark that maps closely related to \eqref{eq:channel_formula_1} (with $\rho^\beta$ replaced by $\rho$) have been studied by Lieb and Solovej \cite{lieb_quantum_1991} under the name quantum coherent operators.

Given a collection $\lambda = \iset{\lambda_i}$ of non-negative numbers with sum $1$, we let:
\begin{align}
    \Phi_{J,K}^{(\lambda)} = \sum_{i} \lambda_i \Phi_{J,K}^{K+i}, \qquad \Hus_J^{( \lambda)} = \sum_{i} \lambda_i \Hus_J^{-i}.
\end{align}
Thus all equivariant channels $B(\irrep_J) \to B(\irrep_K)$ are of the form 
$\Phi_{J,K}^{(\lambda)}$, with unique $\lambda$. 

In some formulas we prefer to use the maps $\frac{2K+1}{2J+1}\Phi_{J,K}^{(\lambda)}$, which are unital. If $\rho$ is a~density matrix, $\frac{2K+1}{2J+1}\Phi_{J,K}^{(\lambda)}(\rho)$ has Schatten $p$-norm of order $\big( \frac{2K+1}{2J+1} \big)^{1/p}$.

We are now ready to state rigorously the approximation of equivariant channels announced earlier:
\begin{theorem}\label{thm:channel_approx}
    Let $\rho\in B(\irrep_J)$. Then for any $1\leq p\leq \infty$:
    \begin{align}
\left( \frac{2K+1}{2J+1} \right)^{- \frac{1}{p}} \norm{ \frac{2K+1}{2J+1} \Phi^{K+i}_{J,K}(\rho) - \Op_K \Hus_J^{-i} (\rho)  }_p &\leq 12 \frac{(J-i)(J+i+1)}{2K-J+i+1} \| \rho \|_p, \label{eq:channel_approx_i} \\
(2J+1)^{\frac{1}{p}}\norm{\frac{2K+1}{2J+1}\Hus_K\Phi^{K+i}_{J,K}(\rho)-\Hus^{-i}_J(\rho)}_{p}&\leq 2\frac{(J+i)(J-i+1)}{2K-J+i+1}\norm{\rho}_p. \label{eq:channel_hus_approx_i}
    \end{align}
    In particular, for $2J \leq K$:
    \begin{align}
        \left( \frac{2K+1}{2J+1} \right)^{- \frac{1}{p}} \norm{ \frac{2K+1}{2J+1} \Phi^{(\lambda)}_{J,K}(\rho) - \Op_K \Hus_J^{(\lambda)} (\rho)  }_p&\leq 6\frac{(2J+1)^{2}}{2K+1}  \norm{\rho}_p, \label{eq:channel_approx_general} \\
    (2J+1)^{\frac{1}{p}}\norm{\frac{2K+1}{2J+1}\Hus_K(\Phi^{(\lambda)}_{J,K}(\rho))-\Hus^{(\lambda)}_J(\rho)}_{p}&\leq \frac{(2J+1)^{2}}{2K+1}  \norm{\rho}_p.    \label{eq:channel_hus_approx_general}
    \end{align}
\end{theorem}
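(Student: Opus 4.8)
The plan is to establish the two single-index estimates \eqref{eq:channel_approx_i} and \eqref{eq:channel_hus_approx_i}, and then obtain \eqref{eq:channel_approx_general} and \eqref{eq:channel_hus_approx_general} by convexity. For the operator bound \eqref{eq:channel_approx_i} the starting observation is that $\frac{2J+1}{2K+1}\Op_K\Hus_J^{-i}$ is \emph{itself} a quantum channel: it has the measure-and-prepare form $\rho\mapsto(2J+1)\int_{S^2}{}_J\bra{\omega;-i}\rho\ket{\omega;-i}_J\,\ketbra{\omega}{\omega}{K}\,d\omega$, with the POVM $\set{(2J+1)\ketbra{\omega;-i}{\omega;-i}{J}}$ supplied by \eqref{eq:coherent_resolution_of_1}, so it is completely positive, trace preserving and equivariant. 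Hence it lies in the simplex and decomposes as $\frac{2J+1}{2K+1}\Op_K\Hus_J^{-i}=\sum_M b_M\Phi_{J,K}^M$ with $b_M\ge0$, $\sum_M b_M=1$. Writing $\Psi_M:=\frac{2K+1}{2J+1}\Phi^M_{J,K}$ for the unital vertices, each $\Psi_M$ is unital and completely positive while $\Phi^M_{J,K}$ is trace preserving, so $\norm{\Psi_M(\rho)}_\infty\le\norm{\rho}_\infty$ and $\norm{\Psi_M(\rho)}_1\le\frac{2K+1}{2J+1}\norm{\rho}_1$; by complex interpolation $\norm{\Psi_M(\rho)}_p\le\big(\frac{2K+1}{2J+1}\big)^{1/p}\norm{\rho}_p$ for all $p$. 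Since $\frac{2K+1}{2J+1}\Phi^{K+i}_{J,K}-\Op_K\Hus_J^{-i}=(1-b_{K+i})\Psi_{K+i}-\sum_{M\ne K+i}b_M\Psi_M$ and the $b_M$ are nonnegative with unit sum, the triangle inequality collapses to
\[
\Big(\tfrac{2K+1}{2J+1}\Big)^{-1/p}\norm{\tfrac{2K+1}{2J+1}\Phi^{K+i}_{J,K}(\rho)-\Op_K\Hus_J^{-i}(\rho)}_p\le 2\,(1-b_{K+i})\,\norm{\rho}_p ,
\]
uniformly in $p$. Thus \eqref{eq:channel_approx_i} reduces to the single scalar bound $1-b_{K+i}\le 6\,\frac{(J-i)(J+i+1)}{2K-J+i+1}$.

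For the Husimi bound \eqref{eq:channel_hus_approx_i} I would use \eqref{eq:channel_formula_2}. Pairing the output with an output coherent state and tracing out the $\irrep_{K+i}$ factor gives $\frac{2K+1}{2J+1}\Hus_K\Phi^{K+i}_{J,K}(\rho)(\omega)=\Tr_J[\rho\,\sigma_\omega]$, where $\sigma_\omega=\Tr_{K+i}\ketbra{\psi_\omega}{\psi_\omega}{}$ is the reduced state on $\irrep_J$ of $\ket{\psi_\omega}=\iota^K_{J,K+i}\ket{\omega}_K$. By equivariance $\sigma_\omega$ is the rotate of $\sigma_{\hat z}$, and $\ket{\psi_{\hat z}}=\sum_a c_a\ket{J,a}\tensor\ket{K+i,K-a}$ is the highest-weight vector of the $\irrep_K$-component of $\irrep_J\tensor\irrep_{K+i}$, so $\sigma_{\hat z}=\sum_a\abs{c_a}^2\ketbra{J,a}{J,a}{J}$ is diagonal. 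Consequently
\[
\frac{2K+1}{2J+1}\Hus_K\Phi^{K+i}_{J,K}(\rho)-\Hus_J^{-i}(\rho)=\sum_a\mu_a\,\Hus_J^a(\rho),\qquad \mu_a=\abs{c_a}^2-\delta_{a,-i}.
\]
Each $\Hus_J^a$ is positivity and mass preserving, giving $\norm{\Hus_J^a(\rho)}_p\le(2J+1)^{-1/p}\norm{\rho}_p$ by interpolation of the evident $L^\infty$ and $L^1$ bounds, so $(2J+1)^{1/p}\norm{\,\cdot\,}_p\le\sum_a\abs{\mu_a}\norm{\rho}_p=2(1-\abs{c_{-i}}^2)\norm{\rho}_p$, using $\sum_a\abs{c_a}^2=1$. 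Hence \eqref{eq:channel_hus_approx_i} reduces to $1-\abs{c_{-i}}^2\le\frac{(J+i)(J-i+1)}{2K-J+i+1}$.

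The analytic heart — and the step I expect to be the main obstacle — is the exact evaluation of the two scalars $b_{K+i}$ and $\abs{c_{-i}}^2$; both are explicit Clebsch–Gordan quantities. For $\abs{c_{-i}}^2$ the highest-weight condition $(S_{J,+}+S_{K+i,+})\ket{\psi_{\hat z}}=0$ yields the two-term recursion
\[
c_a\sqrt{(i+a)(2K+i-a+1)}+c_{a-1}\sqrt{(J+a)(J-a+1)}=0 ,
\]
which shows the $c_a$ are supported on $-i\le a\le J$ and determines them up to normalization; summing the resulting ratios is a terminating hypergeometric sum (a Chu–Vandermonde/Saalschütz evaluation), giving $\abs{c_{-i}}^2$ in closed form and the stated inequality. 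The coefficient $b_{K+i}$ is read off from the Choi matrix: that of $\Phi^M_{J,K}$ is, up to normalization, the projector $P^M_{J,K}$ onto $\irrep_M$ in $\irrep_J\tensor\irrep_K$, while that of $\frac{2J+1}{2K+1}\Op_K\Hus_J^{-i}$ is the diagonal integral $\int_{S^2}\ketbra{\overline{\omega;-i}}{\overline{\omega;-i}}{J}\tensor\ketbra{\omega}{\omega}{K}\,d\omega$ of product coherent states; orthogonality of the $P^M_{J,K}$ then identifies $b_{K+i}$, up to an explicit factor, with the squared length of the projection of a product coherent state onto the $\irrep_{K+i}$-isotypic subspace, again a single Clebsch–Gordan overlap computed by the same recursion. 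Matching the precise prefactor $6$ and the denominator $2K-J+i+1$ is where the careful factorial bookkeeping is needed; the $(J-i)(J+i+1)$ versus $(J+i)(J-i+1)$ dichotomy reflects that the operator comparison feels the spread of the \emph{output} coherent state whereas the lower-symbol comparison feels that of the input.

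Finally, the $\lambda$-averaged bounds follow by writing $\frac{2K+1}{2J+1}\Phi^{(\lambda)}_{J,K}-\Op_K\Hus_J^{(\lambda)}=\sum_i\lambda_i\big(\frac{2K+1}{2J+1}\Phi^{K+i}_{J,K}-\Op_K\Hus_J^{-i}\big)$, applying \eqref{eq:channel_approx_i} (resp.\ \eqref{eq:channel_hus_approx_i}) termwise, and using $\sum_i\lambda_i=1$ together with the elementary maxima $(J-i)(J+i+1),(J+i)(J-i+1)\le\frac{(2J+1)^2}{4}$ and, under $2J\le K$, the bound $2K-J+i+1\ge 2K-2J+1\ge K+1$; these produce \eqref{eq:channel_approx_general} and \eqref{eq:channel_hus_approx_general} with the stated constants $6$ and $1$. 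The only genuinely nontrivial input is the pair of closed-form Clebsch–Gordan evaluations of the previous paragraph; everything else is the soft structure of the simplex and the contraction properties of channels and of $\Hus_J^a$ and $\Op_K$.
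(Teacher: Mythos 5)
Your architecture is correct, and the two halves compare differently with the paper. For \eqref{eq:channel_hus_approx_i} you do essentially what the paper does: pair the output with a coherent state via \eqref{eq:channel_formula_2}, expand $\iota^K_{J,K+i}\ket{\omega}_K$ in joint eigenstates to get $\frac{2K+1}{2J+1}\Hus_K\Phi^{K+i}_{J,K}(\rho)=\sum_a \abs{c_a}^2\Hus^a_J(\rho)$, and collapse by the triangle inequality to $2(1-\abs{c_{-i}}^2)(2J+1)^{-1/p}\norm{\rho}_p$. For \eqref{eq:channel_approx_i} your route is genuinely different from, and cleaner than, the paper's: the paper expands $\Phi^{K+i}_{J,K}$ \emph{doubly} in Clebsch--Gordan coefficients as $\frac{2J+1}{2K+1}\sum_{\ell_1,\ell_2}(-1)^{\ell_1+\ell_2}c_{\ell_1}c_{\ell_2}\widetilde\Op^{K-\ell_1,K-\ell_2}_K\widetilde\Hus^{-i-\ell_1,-i-\ell_2}_J$, splits off four groups of terms, and needs ladder-operator identities to handle the $\ell=1$ cross terms; the constant $12$ is the price of summing those four bounds. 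Your observation that $\frac{2J+1}{2K+1}\Op_K\Hus^{-i}_J$ is itself an equivariant measure-and-prepare channel, hence equal to $\sum_M b_M\Phi^M_{J,K}$ in the simplex, reduces everything to the single scalar $1-b_{K+i}$ via the triangle inequality and the $p\to p$ norm \eqref{eqtn:channel_p_p_norm} of the unital vertices (your Russo--Dye-plus-interpolation argument is exactly the paper's channel-norm lemma). Your treatment of the $\lambda$-averaged bounds coincides with the paper's.

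The one place your write-up falls short of a proof is what you call the analytic heart, and there you misjudge the difficulty: no closed-form Chu--Vandermonde/Saalsch\"utz evaluation is needed. The paper's \Cref{lem:CG_bounds} --- whose proof is precisely the two-term highest-weight recursion you wrote down, followed by the ratio bound $\abs{c_a}^2\le\epsilon\abs{c_{a-1}}^2$ and a geometric series --- already gives the inequality $1-\abs{c_0}^2\le\epsilon$. Applied with $(M,K)\mapsto(K,K+i)$ it yields $1-\abs{c_{-i}}^2\le\frac{(J+i)(J-i+1)}{2K-J+i+1}$, which is \eqref{eq:channel_hus_approx_i}. For the operator side, your Choi identification is correct: the Choi matrix of $\frac{2J+1}{2K+1}\Op_K\Hus^{-i}_J$ is $(2J+1)\int_{S^2}\ketbra{\omega}{\omega}{K}\tensor\ketbra{\omega;i}{\omega;i}{J}\domega$ (the $\beta$-transpose flips $-i$ to $i$, as you anticipated), and pairing against $P^{K+i}_{J,K}$ using $\Tr[C_{\Phi^M_{J,K}}P^{M'}_{J,K}]=(2J+1)\delta_{MM'}$ gives $b_{K+i}=\abs{C^{K+i,K+i}_{K,K;J,i}}^2$. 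This is the lemma's $\abs{c_0}^2$ with $M=K+i$ (swapping tensor factors only changes a sign), so $1-b_{K+i}\le\frac{(J-i)(J+i+1)}{2K-J+i+1}$ --- far better than the $6\epsilon$ you budgeted, and your triangle-inequality reduction then proves \eqref{eq:channel_approx_i} with constant $2$ in place of the paper's $12$. So, modulo citing \Cref{lem:CG_bounds} rather than deferring to factorial bookkeeping, your proposal is complete and in fact slightly sharpens the theorem.
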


\Cref{thm:channel_approx} is proved in \Cref{sec:channels}.

Consider the channel $\Phi_{J,K}^{K+i}$ with large $K$, and possibly also large $J \ll K$. If we do not impose additional restrictions on $i$, \eqref{eq:channel_approx_i} gives an approximation of $\Phi_{J,K}^{K+i}$ which is accurate if $J^2 \ll K$. By the triangle inequality we have \eqref{eq:channel_approx_general}, which gives an approximation of an arbitrary channel $B(\irrep_J) \to B(\irrep_K)$ accurate if $J^2 \ll K$. On the other hand, if we fix $J-|i|$, the numerator on the right hand side of \eqref{eq:channel_approx_i} grows only linearly, not quadratically with $J$. Therefore, for channels with $|i| $ close to $J$ we have an accurate approximation under the condition $J \ll K$, much weaker than $J^2 \ll K$. Similar remarks apply to channels composed with the Husimi map, as seen in \eqref{eq:channel_hus_approx_i} and \eqref{eq:channel_hus_approx_general}.

Another interesting property of \eqref{eq:channel_approx_i} and \eqref{eq:channel_hus_approx_i} is that the error terms vanish for $J=i$ and $J=-i$, respectively. In these cases we have the exact identities:
\begin{align}
    \Phi_{J,K}^{K+J} & = \frac{2J+1}{2K+1} \Op_K \Hus_J, \\
    \Hus_K \Phi_{J,K}^{K-J} & = \frac{2J+1}{2K+1} \Hus_J^{-J}.
\end{align}

We remark that estimates similar to \eqref{eq:channel_approx_i} (for specific channels but with SU($d$) in place of SU(2)) have been used \cite{christandl_one-and--half_2007} to prove quantum de Finetti theorems. We hope that generalizations of \eqref{eq:channel_approx_i} will find other uses in many body quantum theory, e.g.\ in the $N$-representability problem.

We deduce the following for traces of functions of channel outputs:

\begin{theorem} \label{thm:output_traces}
If $\rho$ is a density matrix, then
\begin{equation}\label{eq:trace_formula_channels_with_error_term_intro}
    \frac{1}{2K+1}\Tr_K \left(\varphi \left(\frac{2K+1}{2J+1}\Phi_{J,K}^{K+i}(\rho)\right)\right) = \int_{S^2}\varphi(\Hus_J^{-i}(\rho))\domega + \mathcal E,
\end{equation}
where the error term $\mathcal E$ satisfies the following bounds:
\begin{enumerate}
\item If $\varphi$ has second derivative in $ L^\infty([0,1])$, then $|\mathcal E| \leq 10 \norm{\varphi''}_{\infty} \frac{J-\abs{i}+1}{2K-J+i+1}$;
\item If $\varphi \in C^{0, \alpha}([0,1])$ is convex, then $|\mathcal E| \leq C \norm{\varphi}_{C_{h}^{0,\alpha}} \left(\frac{J-\abs{i}+1}{2K-J+i+1}\right)^{\alpha}$ with a universal constant $C$.
\end{enumerate}

Similarly, if in \eqref{eq:trace_formula_channels_with_error_term_intro} we replace $\Phi_{J,K}^{K+i}$ with $\Phi_{J,K}^{(\lambda)}$ and $\Hus_J^{-i}$ with $\Hus_J^{(\lambda)}$, then under the assumption $K \geq 2J$ the error term $\mathcal E$ can be bounded as follows:
\begin{enumerate}
\item If $\varphi$ is a function with second derivative in $ L^\infty([0,1])$, then $|\mathcal E| \leq 4 \norm{\varphi''}_{\infty} \frac{2J+1}{2K+1}$;
\item If $\varphi \in C^{0, \alpha}([0,1])$ is convex, then $|\mathcal E| \leq C \norm{\varphi}_{C_{h}^{0,\alpha}} \left(\frac{2J+1}{2K+1}\right)^{\alpha}$ with a universal constant $C$;
\item If $\varphi \in C([0,1])$, then $\mathcal{E}\to 0$ as $J\to\infty$.
\end{enumerate}
\end{theorem}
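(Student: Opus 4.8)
The plan is to compare the channel output with the quantization of its classical symbol and then pass to the sphere. Write $\sigma = \frac{2K+1}{2J+1}\Phi_{J,K}^{K+i}(\rho)$ and $f = \Hus_J^{-i}(\rho)$. Since $\frac{2K+1}{2J+1}\Phi_{J,K}^{K+i}$ is positive and unital and $\rho$ is a density matrix, $\sigma$ is a positive operator with $\sigma \leq \identity$; likewise $0\leq f\leq\norm{\rho}_\infty\leq 1$ forces $0\leq\Op_K(f)\leq\identity$, so $\varphi$ may be applied to both $\sigma$ and $\Op_K(f)$. I would also record the exact mass identity $\frac{1}{2K+1}\Tr_K\sigma = \int_{S^2}f\domega = \frac{1}{2J+1}$, which is the reason only $\norm{\varphi''}_\infty$ (and not $\norm{\varphi'}_\infty$) will enter the first bound. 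The whole argument rests on inserting $\Op_K(f)$ as an intermediary: the operator distance $\norm{\sigma-\Op_K(f)}_p$ is controlled by \eqref{eq:channel_approx_i}, the Husimi distance $\norm{\Hus_K\sigma - f}_p$ by \eqref{eq:channel_hus_approx_i}, and the elementary estimates $\frac{(J-i)(J+i+1)}{2J+1},\,\frac{(J+i)(J-i+1)}{2J+1}\leq J-\abs{i}+1$ convert these into the claimed rate $\frac{J-\abs{i}+1}{2K-J+i+1}$.

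For the convex case (item~2) I would avoid any regularity of $f$ by using the Berezin-Lieb inequalities \eqref{eq:Berezin-Lieb} on $\irrep_K$. For the lower bound on $\mathcal E$, the lower symbol of $\sigma$ is $\Hus_K\sigma$, so $\frac{1}{2K+1}\Tr_K\varphi(\sigma)\geq\int\varphi(\Hus_K\sigma)\domega\geq\int\varphi(f)\domega - \norm{\varphi}_{C_h^{0,\alpha}}\norm{\Hus_K\sigma-f}_1^\alpha$, where the second step uses $\alpha$-H\"older continuity and Jensen's inequality ($x\mapsto x^\alpha$ concave) together with \eqref{eq:channel_hus_approx_i}. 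For the upper bound, $f$ is a contravariant (upper) symbol of $\Op_K(f)$, so Berezin-Lieb gives $\frac{1}{2K+1}\Tr_K\varphi(\Op_K(f))\leq\int\varphi(f)\domega$ for free; it then remains to bound $\frac{1}{2K+1}\abs{\Tr_K\varphi(\sigma)-\Tr_K\varphi(\Op_K(f))}$ by a trace H\"older inequality of the form $\abs{\Tr[\varphi(A)-\varphi(B)]}\leq C_\alpha\norm{\varphi}_{C_h^{0,\alpha}}\norm{A-B}_\alpha^\alpha$, combined with \eqref{eq:channel_approx_i} at $p=\alpha$. Both directions produce the stated power $\bigl(\tfrac{J-\abs{i}+1}{2K-J+i+1}\bigr)^\alpha$.

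For the bounded-second-derivative case (item~1), convexity is unavailable, so I would instead subtract the matching affine part: fixing $c=\tfrac12$ and setting $r(x)=\varphi(x)-\varphi(c)-\varphi'(c)(x-c)$, the mass identity makes all affine contributions cancel, giving $\mathcal E = \frac{1}{2K+1}\Tr_K r(\sigma) - \int r(f)\domega$, with $r(c)=r'(c)=0$, $\norm{r''}_\infty = \norm{\varphi''}_\infty$ and $\norm{r'}_{\infty,[0,1]}\leq\tfrac12\norm{\varphi''}_\infty$. Splitting through $\Op_K(f)$, the operator term $\frac{1}{2K+1}\abs{\Tr_K r(\sigma)-\Tr_K r(\Op_K(f))}\leq\frac{1}{2K+1}\norm{r'}_\infty\norm{\sigma-\Op_K(f)}_1$ is handled by \eqref{eq:channel_approx_i} at $p=1$, and the remaining scalar term $\frac{1}{2K+1}\Tr_K r(\Op_K(f)) - \int r(f)\domega$ by part~(1) of \Cref{thm:traces}, which costs $\frac{1}{2K+1}\norm{\varphi''}_\infty\norm{\nabla f}_2^2$. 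The hard part is the one genuinely new estimate this forces, namely the Husimi energy bound $\norm{\nabla\Hus_J^{-i}(\rho)}_2^2\leq C(J-\abs{i}+1)$ uniform over density matrices $\rho$; I would prove it by diagonalizing, decomposing $B(\irrep_J)=\bigoplus_{\ell=0}^{2J}(\text{spin }\ell)$ and writing $\Hus_J^{-i}$ as a spherical-harmonic multiplier with explicit Clebsch-Gordan coefficients $c_\ell^{-i}$, so that $\norm{\nabla\Hus_J^{-i}(\rho)}_2^2 = \sum_\ell\ell(\ell+1)\abs{c_\ell^{-i}}^2\norm{\rho_\ell}_2^2$, and then bounding $\ell(\ell+1)\abs{c_\ell^{-i}}^2$ uniformly by $C(J-\abs{i}+1)$ from the near-diagonal concentration of Bloch coherent states. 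Equivalently, one may replace part~(1) of \Cref{thm:traces} by a convex splitting $r = r_1 - r_2$, $r_1(x)=r(x)+\tfrac{\norm{\varphi''}_\infty}{2}(x-c)^2$, and use Berezin-Lieb together with the $p=1$ bound of \Cref{thm:hus_op}, at the cost of the companion estimate $\norm{(1-4\Delta)\Hus_J^{-i}(\rho)}_1\leq C(J-\abs{i}+1)$.

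Finally, for the $\Phi^{(\lambda)}$ statements I would run the identical arguments with $f=\Hus_J^{(\lambda)}(\rho)=\sum_i\lambda_i\Hus_J^{-i}(\rho)$ and the approximation bounds \eqref{eq:channel_approx_general}, \eqref{eq:channel_hus_approx_general} (valid for $K\geq 2J$), convexity of the relevant norms in $\lambda$ reducing the regularity input to the same per-$i$ energy bounds. The continuity statement (item~3) follows by density: approximating $\varphi\in C([0,1])$ uniformly by $C^2$ functions $\varphi_n$, and noting that both $\frac{1}{2K+1}\Tr_K\varphi(\sigma)$ and $\int\varphi(f)\domega$ change by at most $\norm{\varphi-\varphi_n}_\infty$, so that item~1 applied to each $\varphi_n$ yields $\limsup_{J\to\infty}\abs{\mathcal E}\leq 2\norm{\varphi-\varphi_n}_\infty\to 0$.
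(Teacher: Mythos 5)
Your overall architecture coincides with the paper's proof (\Cref{thm:c2_and_holder_convergence_of_traces_channels}): sandwich through $\Op_K(\Hus_J^{-i}(\rho))$, subtract the affine part matched at $x=\tfrac12$ using the equal-trace identity in the $C^2$ case (the paper uses exactly your $\tilde\varphi(x)=\varphi(x)-\varphi(\tfrac12)-(x-\tfrac12)\varphi'(\tfrac12)$), Berezin--Lieb in the convex case, and density for continuity. However, two steps have genuine gaps. First, in the convex case your operator comparison invokes a trace--H\"older inequality of the form $\abs{\Tr[\varphi(A)-\varphi(B)]}\leq C_\alpha\norm{\varphi}_{C_h^{0,\alpha}}\norm{A-B}_\alpha^\alpha$ together with ``\eqref{eq:channel_approx_i} at $p=\alpha$''. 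This fails as written: \eqref{eq:channel_approx_i} is stated and proved (by interpolation between $p=1$ and $p=\infty$) only for $p\in[1,\infty]$, so no bound at the Schatten quasi-norm $p=\alpha<1$ is available, and the quasi-norm trace inequality itself is an Aleksandrov--Peller-type result, not something at hand here. The paper's \Cref{lemma:trace_bound_holder_functions} does the job elementarily (Lidskii plus Jensen): $\abs{\Tr[\varphi(A)-\varphi(B)]}\leq\norm{\varphi}_{C_h^{0,\alpha}}\norm{A-B}_1^\alpha N^{1-\alpha}$; with $N=2K+1$ the dimension factor combines with the prefactor $\tfrac{1}{2K+1}$ to give $\bigl(\tfrac{\norm{A-B}_1}{2K+1}\bigr)^\alpha$, and then \eqref{eq:channel_approx_i} at $p=1$ yields exactly the claimed rate. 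So the step is fixable, but you need the $N^{1-\alpha}$ lemma, not the quasi-norm inequality.

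Second, the energy bound $\norm{\nabla\Hus_J^{-i}(\rho)}_2^2\leq C(J-\abs{i}+1)$, which you correctly isolate as the crux of item~1, is asserted rather than proved: the ``near-diagonal concentration'' justification of the multiplier bound $\ell(\ell+1)\abs{c_\ell^{-i}}^2\leq C(J-\abs{i}+1)$, uniformly in $\ell\leq 2J$ and $\abs{i}\leq J$, is a heuristic, and making the Clebsch--Gordan asymptotics uniform in all parameters would be laborious. The paper obtains the bound in three lines with no asymptotics: Cauchy--Schwarz, $\norm{\nabla h}_2^2=\langle h,-\Delta h\rangle\leq\norm{h}_2\norm{\Delta h}_2$, combined with the equivariance intertwining $(-\Delta)\Hus_J^{-i}\rho=\Hus_J^{-i}(\cQ\rho)$ and the explicit ladder expansion \eqref{eq:ad_Casimir}, which give $\norm{\Delta\Hus_J^{-i}\rho}_2\leq 4(J(J+1)-i^2)(2J+1)^{-1/2}$ and $\norm{\Hus_J^{-i}\rho}_2\leq(2J+1)^{-1/2}$, hence the bound with constant $4$; the same identity gives the companion estimate $\norm{(1-4\Delta)\Hus_J^{-i}\rho}_1\leq C(J-\abs{i}+1)$ that your alternative convex-splitting route needs. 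Two smaller remarks: your lower bound in the convex case, via Berezin--Lieb applied to $\Hus_K\sigma$ and \eqref{eq:channel_hus_approx_i}, is a correct variant that differs from the paper (which stays on the $\Op_K$ side via \Cref{prop:convex_Holder_trace}) and avoids the $(1-4\Delta)$ estimate on that side; and your density argument for item~3 only produces $\mathcal E\to 0$ when $\tfrac{2J+1}{2K+1}\to 0$, i.e.\ in the regime of the paper's \Cref{prop:weak_star_convergence_fcts_of_channels} (fixed $J$, $K\to\infty$), not literally under $K\geq 2J$ with $J\to\infty$ --- though the paper's own reduction has the same character.
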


\Cref{thm:output_traces} is proved in \Cref{sec:channels}. Here is its application to output entropies of quantum channels:

\begin{corollary} \label{thm:output_entropies}
Let $K\geq 1$. Then for any density matrix $\rho\in B(\irrep_J)$:
\begin{align}
    S_{vN}(\Phi^{K+i}_{J,K}(\rho)) = \log\left(\frac{2K+1}{2J+1}\right)-(2J+1)\int_{S^2}\Hus_J^{-i}(\rho)\log \Hus_J^{-i}(\rho) \domega +\; \mathcal{E},
\end{align}
with $\abs{\mathcal{E}}\leq C\log(2K+1)\tfrac{(2J+1)(J-\abs{i}+1)}{2K-J+i+1}$.

If in addition $K\geq 2J$:
\begin{align}\label{eq:entropy_of_channel_approximation}
    S_{vN}(\Phi^{(\lambda)}_{J,K}(\rho)) = \log\left(\frac{2K+1}{2J+1}\right)-(2J+1)\int_{S^2}\Hus_J^{(\lambda)}(\rho)\log \Hus_J^{(\lambda)}(\rho) \domega +\; \mathcal{E},
\end{align}
with $\abs{\mathcal{E}}\leq C\log(2K+1)\tfrac{(2J+1)^2}{2K+1}$.
\end{corollary}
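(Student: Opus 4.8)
The plan is to specialize \Cref{thm:output_traces} to the entropy function $\varphi(x) = x\log x$ and then optimize the Hölder exponent. Write $\sigma = \Phi^{K+i}_{J,K}(\rho)$ and $\tilde\sigma = \frac{2K+1}{2J+1}\sigma$ for the output of the associated unital channel; since $\rho$ is a density matrix and $\Phi^{K+i}_{J,K}$ is trace-preserving we have $\Tr_K\sigma = 1$, hence $\Tr_K\tilde\sigma = \frac{2K+1}{2J+1}$. Setting $c = \frac{2J+1}{2K+1}$ and using the functional-calculus identity $\varphi(c\,\tilde\sigma) = c(\log c)\,\tilde\sigma + c\,\varphi(\tilde\sigma)$, I would first record the exact identity
\begin{equation*}
S_{vN}(\sigma) = -\Tr_K[\varphi(\sigma)] = \log\frac{2K+1}{2J+1} - \frac{2J+1}{2K+1}\Tr_K[\varphi(\tilde\sigma)],
\end{equation*}
where the logarithmic term is produced precisely by the $c(\log c)\tilde\sigma$ piece together with $\Tr_K\tilde\sigma = 1/c$. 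This isolates the desired leading term and reduces everything to controlling $\frac{1}{2K+1}\Tr_K[\varphi(\tilde\sigma)]$.

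Next I would apply \Cref{thm:output_traces} to $\tilde\sigma$. Here $\varphi(x) = x\log x$ is convex and continuous on $[0,1]$ (with $\varphi(0)=0$) and lies in $C^{0,\alpha}([0,1])$ for every $\alpha\in(0,1)$, so part (2) of the theorem applies and gives, for each such $\alpha$,
\begin{equation*}
\frac{1}{2K+1}\Tr_K[\varphi(\tilde\sigma)] = \int_{S^2}\varphi(\Hus_J^{-i}(\rho))\domega + \mathcal{E}', \qquad |\mathcal{E}'| \leq C\norm{\varphi}_{C_h^{0,\alpha}}\left(\frac{J-|i|+1}{2K-J+i+1}\right)^{\alpha}.
\end{equation*}
Multiplying by $(2J+1)$ and substituting into the identity above yields the asserted expansion with $\mathcal{E} = -(2J+1)\mathcal{E}'$ and $\varphi(\Hus_J^{-i}(\rho)) = \Hus_J^{-i}(\rho)\log\Hus_J^{-i}(\rho)$, so it only remains to bound $(2J+1)|\mathcal{E}'|$ uniformly in $\alpha$ and then make a good choice of $\alpha$.

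The crux is to convert this into a bound with a single logarithmic factor. The key elementary estimate I would establish is $\norm{\varphi}_{C_h^{0,\alpha}} \leq \frac{C}{1-\alpha}$ on $[0,1]$: the extremal difference quotient occurs near $x=0$, where maximizing $y^{1-\alpha}\log(1/y)$ gives the factor $\frac{1}{e(1-\alpha)}$, and a short two-case split (according to whether $y\leq 2x$ or $y>2x$) shows the contribution of general pairs is also $O(\frac{1}{1-\alpha})$. Writing $t = \frac{J-|i|+1}{2K-J+i+1}$ and $t^{\alpha} = t\,e^{(1-\alpha)\log(1/t)}$, I would choose $1-\alpha = 1/\log(1/t)$, valid when $t<1/e$, which makes $e^{(1-\alpha)\log(1/t)} = e$ and $\frac{1}{1-\alpha} = \log(1/t)$, so that
\begin{equation*}
(2J+1)|\mathcal{E}'| \leq C(2J+1)\,\frac{1}{1-\alpha}\,t^{\alpha} \leq C'(2J+1)\,t\,\log(1/t).
\end{equation*}
Since $1/t = \frac{2K-J+i+1}{J-|i|+1}\leq 2K+1$ we have $\log(1/t)\leq\log(2K+1)$, producing exactly $|\mathcal{E}|\leq C\log(2K+1)\frac{(2J+1)(J-|i|+1)}{2K-J+i+1}$.

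I expect the main obstacle to be this final optimization together with the sharp control of the $\alpha$-dependence of $\norm{\varphi}_{C_h^{0,\alpha}}$, since a crude Hölder bound would lose the logarithm. The remaining regime $t\geq 1/e$ (which forces $2J+1$ to be comparable to $K$) is handled separately from the a priori bounds $0\leq S_{vN}(\sigma)\leq\log(2K+1)$ and $\Hus_J^{-i}(\rho)\in[0,1]$, which make the stated estimate trivially true there. Finally, the statement for $\Phi^{(\lambda)}_{J,K}$ under $K\geq 2J$ follows by the identical argument, replacing part (2)'s bound by the corresponding one in the second family of estimates of \Cref{thm:output_traces} and substituting $t = \frac{2J+1}{2K+1}$, which gives $|\mathcal{E}|\leq C\log(2K+1)\frac{(2J+1)^2}{2K+1}$.
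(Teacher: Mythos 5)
Your proposal is correct and follows essentially the same route as the paper's proof: apply part (2) of \Cref{thm:output_traces} to the convex function $\varphi(x)=x\log x$ on $[0,1]$, extract the $\log\frac{2K+1}{2J+1}$ term via the scaling identity $\varphi(\lambda x)=\lambda\varphi(x)+x\varphi(\lambda)$, and remove the logarithmic loss by letting the H\"older exponent tend to $1$ at rate $1-\alpha\sim 1/\log$. The only cosmetic differences are that the paper controls the seminorm through the splitting $\varphi=\varphi_1+\varphi_2$ from \Cref{lemma:OpJ_functional_calculus_specialized_to_entropy} and makes the uniform choice $\alpha=1-1/\log(2K+1)$, which needs no case distinction because $1/t\leq 2K+1$ always yields $t^{-(1-\alpha)}\leq e$, whereas you bound $\norm{\varphi}_{C^{0,\alpha}_h}\leq C/(1-\alpha)$ on $[0,1]$ directly and take the $t$-adapted choice $1-\alpha=1/\log(1/t)$, disposing of the regime $t\geq 1/e$ by a priori bounds.
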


One of the problems motivating this work is to compute minimal output entropies of $\SU(2)$-equivariant quantum channels. Formula \eqref{eq:entropy_of_channel_approximation} gives the minimal output entropy of channels $\Phi_{J,K}^{(\lambda)}$, up to errors vanishing for $K \to \infty$, in terms of the minimum of the functional
\begin{equation}
    \rho \mapsto - (2J+1) \int_{S^2} \Hus_J^{(\lambda)}(\rho) \log \Hus_J^{(\lambda)}(\rho) \domega,
    \label{eq:generalized_classical_entropy}
\end{equation}
defined on the set of density matrices in $B(\irrep_J)$. The problem of minimizing \eqref{eq:generalized_classical_entropy} in the special case $\Hus_J^{(\lambda)} = \Hus_J$ was raised in \cite{lieb_proof_1978}, where a similar problem for Glauber coherent states was solved. Lieb conjectured in \cite{lieb_proof_1978} that for $\Hus_J^{(\lambda)} = \Hus_J$, the minimizers of \eqref{eq:generalized_classical_entropy} would be spin coherent states. This problem attracted attention of several researchers \cite{schupp_liebs_1999,bodmann_lower_2004}, but the first proof that coherent states are minimizers \cite{lieb_proof_2014}, due to Lieb and Solovej, appeared 36 years after the formulation of the problem (see also \cite{lieb_proof_2016} for a~generalization from $\SU(2)$ to $\SU(N)$). 

The reasoning in \cite{lieb_proof_2014} proceeds by showing that the eigenvalues of any output of $\Phi_{J,K}^{K-J}$ are majorized by the eigenvalues of the image of a coherent state projector. In particular any concave function of the channel output is minimized by coherent states. The relation between the entropy of $\Phi_{J,K}^{K-J} (\rho)$ and the classical entropy of $\Hus_J(\rho)$ then implies Lieb's conjecture, but because of the limit involved in the argument one cannot exclude existence of other minimizers. 

The proof of Lieb's conjecture was finished with the input of Frank \cite{frank_sharp_2023}, who proved that coherent states are the only minimizers. Frank adapted a~method developed by Kulikov \cite{kulikov_functionals_2022} to solve a similar problem for the group $\SU(1,1)$.

A natural question is whether coherent states minimize the output entropy of all $\SU(2)$-equivariant channels. The answer to this question is negative, as we verified by performing brute force minimization for channels with small input dimension (but arbitrary output dimension). Motivated by \eqref{eq:entropy_of_channel_approximation} we were also tempted to guess that rank one projectors $ \rho =\ketbra{ \omega ; i }{ \omega ; i }{J}$ were minimizers of the functional \eqref{eq:generalized_classical_entropy} in the case $\Hus_J^{(\lambda)} = \Hus_J^{i}$, but this is not true either. It seems that the landscape of minimal output entropies, as well as minimizers of classical entropies \eqref{eq:generalized_classical_entropy} is quite rich. It is currently not clear whether the latter minimization problem is actually easier than computation of the minimal output entropy of $\Phi_{J,K}^{(\lambda)}$.

During the final stages of this project there appeared a very nice work of Van Haastrecht \cite{van_haastrecht_limit_2024} containing results partially overlapping with ours. It established strong convergence $\Hus_J \Op_J \to 1$ on $C(S^2)$ and asymptotic trace formulas for channels outputs similar to those in our \Cref{thm:output_traces}. Van Haastrecht \cite{van_haastrecht_limit_2024} showed that the error term $\mathcal E$ in \eqref{eq:trace_formula_channels_with_error_term_intro} converges to $0$ as $J \to \infty$ for any continuous function $\varphi$. In fact, \eqref{eq:trace_formula_channels_with_error_term_intro} is equivalent to the main theorem in \cite{van_haastrecht_limit_2024}. The map $E_{\mu,k}$ that appears in \cite{van_haastrecht_limit_2024} is, in our notation, $(2J+1)^{-1}\Hus^{-i}_J \Op_J$, with $\mu=2J$ and $k=J+i$. Our results, which were obtained independently and use different techniques, improve on those of \cite{van_haastrecht_limit_2024} in some aspects. Firstly, we obtain approximations of channel outputs (\Cref{thm:channel_approx}), not only trace formulas (which amount to computing the spectrum). We also identify the approximate upper symbols of outputs of channels $\Phi_{J,K}^{(\lambda)}$ as the generalized Husimi functions $\Hus_J^{(\lambda)}$, which we find appealing: our results relate directly equivariant quantum channels to equivariant POVMs, with extreme channels corresponding to the extreme POVMs. Lastly, and this is in our view the main advantage of our work, we provide quantitative error bounds in all our main results.

\section{Preparations} \label{sec:prep}

Throughout proofs in this manuscript, $C$ is a positive constant whose value may change from line to line. 

\subsubsection*{Irreducible representations}

We consider the group $\SU(2)$. Its Lie algebra has a standard basis $i S_x,i S_y,i S_z$; we will also use $S_{\pm} = S_x \pm i S_y$, satisfying $[S_z, S_{\pm}] = \pm S_{\pm}$ and $[S_+, S_-] = 2 S_z$. For each non-negative half-integer $J$ we let $\irrep_J$ be the irreducible representation of $\SU(2)$ of dimension $2J+1$. $\irrep_J$ has an orthonormal eigenbasis $(\ket{m}_J)_{m=-J}^J$ such that $\ket{m}_J$ is an eigenvector of $S_z$ with eigenvalue $m$ and we have $S_\pm \ket{m}_J = \sqrt{(J \mp m) (J \pm m +1)} \ket{m \pm 1}_J$. 

An anti-unitary operator $U_J : \irrep_J \to \irrep_J$ commuting with the action of $\SU(2)$ may be defined by
\begin{equation}
U_J \sum_{m=-J}^J\alpha_m\ket{m}_J = \sum_{m=-J}^J(-1)^{J-m} \overline{\alpha}_m\ket{-m}_J.
\end{equation}
It satisfies $U_J^2 = (-1)^{2J}$. An invariant bilinear form $\beta$ on $\SU(2)$ is defined by 
\begin{equation}
    \beta(\ket{\psi}, \ket{\phi}) = \braket{U_J \psi}{ \phi}.
    \label{eq:beta_def}
\end{equation}
It is symmetric if $J$ is an integer and skew-symmetric otherwise. Since $\beta$ is nondegenerate, the dual space of $\irrep_J$, with the standard (contragradient) action of $\SU(2)$ \eqref{eq:contragredient}, is isomorphic to $\irrep_J$ as a representation. The transpose of an operator $\rho\in B(\irrep_J)$ with respect to $\beta$ is defined by 
\begin{align}\label{eq:definition_of_beta}
    \beta(\ket{\phi},\rho \ket{\psi}) = \beta(\rho^\beta \ket{\phi}, \ket{\psi}).
\end{align}
Equivalently, $\rho^\beta = U_J^{\phantom1} \rho^* U_J^{-1}$. 

In any representation of the Lie algebra of $\SU(2)$ we can define the Casimir operator
\begin{equation}
    Q = S_x^2 + S_y^2 + S_z^2 = S_z^2 + \frac{1}{2} S_+ S_- + \frac{1}{2} S_- S_+.
\end{equation}
$Q$ acts on $\irrep_J$ as multiplication by $J(J+1)$. In some calculations we will use the (invertible) operator $1+4Q$, which acts on $\irrep_J$ as multiplication by $(2J+1)^2$.

We are more interested in the action of the Casimir on the reducible representation $B(\irrep_J)$. In order to distinguish the Casimir $B(\irrep_J) \to B(\irrep_J)$ from multiplication by $Q$ regarded as an operator on $\irrep_J$, we introduce the~different font:
\begin{align}
    \cQ \rho &= [S_x,[S_x,\rho]] + [S_y,[S_y,\rho]] + [S_z,[S_z,\rho]] \label{eq:ad_Casimir}  \\
    & = 2J(J+1) \rho - 2 S_z \rho S_z - S_+ \rho S_- - S_- \rho S_+    .\nonumber
\end{align}
Since $S_x,S_y,S_z$ have operator norms $J$, directly from \eqref{eq:ad_Casimir} we have the bound
\begin{equation}
    \| \cQ \rho \|_p \leq  C J^2 \| \rho \|_p,
    \label{eq:Casimir_bound}
\end{equation}
where $\| \cdot \|_p$ is the Schatten $p$-norm on $B(\irrep_J)$, defined as $\| \rho \|_p = \big( \Tr \big[ (\rho^* \rho)^{p/2} \big] \big)^{1/p}$ for $p \in [ 1, \infty)$, and as the operator norm if $p = \infty$. The (non-optimal) constant $C$ in \eqref{eq:Casimir_bound} may be taken to be independent of $p$ and $J$.

For a tensor product of representations we have the Clebsch-Gordan decomposition:
\begin{align}
    \irrep_J\tensor \irrep_K \cong \bigoplus_{M=\abs{K-J}}^{K+J}\irrep_M,
    \label{eq:CG_decomp}
\end{align}
which is multiplicity free. Therefore, for every $M \in \{ |K-J| , \dots, K+J \}$ there exists an equivariant isometric embedding $\iota_{J,K}^M : \irrep_M \to \irrep_J \otimes \irrep_K$, unique up to a~complex phase factor. We let $q_{J,K}^M$ be the adjoint of $\iota_{J,K}^M$ and let $P^M_{J,K} = \iota_{J,K}^M q_{J,K}^M$. That is, $P^M_{J,K}$ is the orthogonal projection onto the $M$th summand in \eqref{eq:CG_decomp}. 

Matrix elements of $\iota^M_{J,K}$, i.e.\ the numerical coefficients in
\begin{equation}
    \iota^M_{J,K} \ket{m}_M = \sum_{j} C_{J,j; K,m-j}^{M,m} \ket{j}_J \otimes \ket{m-j}_K,
    \label{eq:CG_def}
\end{equation}
are called the Clebsch-Gordan coefficients. They can be taken real, and this reduces the freedom of choice of $\iota^M_{J,K}$ down to a sign. The standard choice which fixes $\iota^M_{J,K}$ completely is characterized by the condition $C_{J,J; K, M-J}^{M,M} > 0$. Below we give an upper bound satisfied by Clebsch-Gordan coefficients which will be useful later.

\begin{lemma} \label{lem:CG_bounds}
Let $c_l = C^{M,M}_{J,M-K+l;K,K-l}$, $l \in \{ 0 , \dots ,J+K-M \}$, and let
\begin{equation}
    \epsilon = \frac{(J-M+K)(J+M-K+1)}{-J+M+K+1}.
\end{equation}
Then
\begin{equation}
    |c_l|^2 \leq \epsilon^l |c_0|^2  , \qquad 0 \leq 1 - |c_0|^2 \leq \epsilon.  
\end{equation}
\end{lemma}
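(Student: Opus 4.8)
The plan is to realise the numbers $c_l$ as the components of the highest weight vector of $\irrep_M$ sitting inside $\irrep_J \tensor \irrep_K$, and to exploit the recursion that this vector satisfies. By \eqref{eq:CG_def} the vector $v := \iota^M_{J,K}\ket{M}_M = \sum_l c_l \ket{M-K+l}_J \tensor \ket{K-l}_K$ is a unit vector (since $\iota^M_{J,K}$ is an isometry) which, being the image of a highest weight vector under an equivariant map, is annihilated by $S_+ = S_+ \tensor \identity + \identity \tensor S_+$. First I would apply $S_+$ to $v$, using $S_+\ket{m}_J = \sqrt{(J-m)(J+m+1)}\ket{m+1}_J$ and the analogous formula on $\irrep_K$, and collect the coefficient of each product basis vector $\ket{M-K+l}_J \tensor \ket{K-l+1}_K$. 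This yields the two term recursion
\[ c_{l-1}\sqrt{(P-l+1)(Q+l-1)} + c_{l}\sqrt{l(R+P-l)} = 0, \qquad l \in \{1,\dots,J+K-M\}, \]
where I abbreviate $P = J-M+K$, $Q = J+M-K+1$ and $R = -J+M+K+1$, so that $\epsilon = PQ/R$, the top label satisfies $J+K-M = P$, and $2K-l+1 = R+P-l$ (note $P\ge 1$ in the nontrivial case, while $Q,R\ge 1$ always). Squaring the recursion gives the exact ratio
\[ \frac{|c_l|^2}{|c_{l-1}|^2} = \frac{(P-l+1)(Q+l-1)}{l(R+P-l)}. \]

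The heart of the matter, and the step I expect to be the main obstacle, is the uniform bound $|c_l|^2/|c_{l-1}|^2 \le \epsilon$ for every $l \in \{1,\dots,P\}$; the apparent difficulty is that the ratio is a quotient of an increasing and a decreasing factor, so its maximum over $l$ is not at an obvious endpoint and a naive convexity argument does not immediately close. The clean resolution I would use is to factor the ratio as $\frac{P+1-l}{R+P-l}\cdot\frac{Q-1+l}{l}$ and bound each factor separately. For the second factor, $\frac{Q-1+l}{l}\le Q$ is equivalent to $(Q-1)(l-1)\ge 0$, which holds since $Q\ge 1$ and $l\ge 1$. For the first factor, $\frac{P+1-l}{R+P-l}\le \frac{P}{R}$ is equivalent, after cross multiplying by the positive quantities $R$ and $R+P-l\ge R>0$, to $P(P-l)+R(l-1)\ge 0$, which holds because $1\le l\le P$. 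Multiplying the two bounds gives $|c_l|^2/|c_{l-1}|^2 \le \tfrac{P}{R}\cdot Q = \epsilon$.

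Telescoping these ratios immediately yields $|c_l|^2 \le \epsilon^l |c_0|^2$, which is the first assertion (phases are irrelevant since only the moduli $|c_l|^2$ appear). For the second assertion I would invoke the normalisation $\sum_{l=0}^{P}|c_l|^2 = \norm{v}^2 = 1$. Nonnegativity $0 \le 1-|c_0|^2 = \sum_{l\ge 1}|c_l|^2$ is then immediate. For the upper bound I would split into two cases: if $\epsilon \ge 1$ it is trivial, since $1-|c_0|^2 \le 1 \le \epsilon$; and if $\epsilon < 1$, the bound $|c_l|^2\le \epsilon^l|c_0|^2$ together with the geometric series gives $1 = \sum_{l=0}^{P}|c_l|^2 \le |c_0|^2\sum_{l\ge 0}\epsilon^l = \frac{|c_0|^2}{1-\epsilon}$, hence $|c_0|^2 \ge 1-\epsilon$, i.e. $1-|c_0|^2 \le \epsilon$. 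This completes the proof, with the only genuinely nontrivial input being the two-factor estimate for the ratio.
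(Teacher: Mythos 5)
Your proposal is correct and follows essentially the same route as the paper: the recursion for $c_l$ obtained from $S_+$ annihilating $\iota^M_{J,K}\ket{M}_M$, a bound of the squared ratio $|c_l|^2/|c_{l-1}|^2$ by $\epsilon$ via factorwise estimates, telescoping, and the geometric-series argument with the trivial case split at $\epsilon\geq 1$. The only difference is cosmetic: the paper bounds the numerator factor by $J-M+K$ and the denominator factor $2K-l$ below by $-J+M+K+1$ separately (the step you flag as the ``main obstacle'' is in fact immediate this way), whereas you verify the combined inequality $\frac{P+1-l}{R+P-l}\leq\frac{P}{R}$ by cross-multiplication.
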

\begin{proof}
 Since $\iota^M_{J,K} \ket{M}_M$ is annihilated by $S_+$, we have the recursion relation:
\begin{equation}
c_{l+1} = - \sqrt{\frac{(J-M+K-l)(J+M-K+l+1)}{(l+1)(2K-l)}} c_l,
\end{equation}
and thus:
\begin{equation}
    |c_{l+1}|^2 = \frac{(J-M+K-l)(J+M-K+l+1)}{(l+1)(2K-l)} |c_l|^2,
\end{equation}
for $l \leq J+K-M-1$. We bound:
\begin{equation}
 2K-l \geq -J +M+K+1, \qquad J-M+K-l \geq J-M+K,   
\end{equation}
and:
\begin{equation}
    \frac{J+M-K+l+1}{l+1} \leq J + M -K +1,
\end{equation}
getting $|c_{l+1}|^2 \leq \epsilon |c_l|^2$, and as a consequence:
\begin{equation}
    |c_l|^2 \leq \epsilon^l |c_0|^2.
\end{equation}
Next, assuming $\epsilon<1$,
\begin{equation}
    1 = \sum_l |c_l|^2 \leq |c_0|^2 \sum_l \epsilon^l \leq |c_0|^2 \frac{1}{1 - \epsilon},
\end{equation}
which implies $1-|c_0|^2 \leq \epsilon$. The same bound is also trivially satisfied for $\epsilon \geq 1$. 
\end{proof}

The following special case of \eqref{eq:CG_decomp} plays an important role:
\begin{equation}
    B(\irrep_J) \cong \irrep_J \otimes \irrep_J^* \cong \irrep_J \otimes \irrep_J \cong \bigoplus_{M=0}^{2J} \irrep_M.
\end{equation}

\subsubsection*{Analysis on the sphere}

We define $L^p$ norms\footnote{We use the same notation for $L^p$ norms on $S^2$ and $\SU(2)$ as for the Schatten norms of operators. We hope it is always clear from the context which is meant.} $\| \cdot \|_p$ of functions on the sphere $S^2$ in terms of the Euclidean measure with total area normalized to $1$. For $p=2$ we have the orthogonal decomposition
\begin{equation}
    L^2(S^2) \cong \bigoplus_{L=0}^\infty \mathcal H_L,
    \label{eq:sphere_decomp}
\end{equation}
in which only integer $L$ occur. The vector fields on $S^2$ representing the Lie algebra of $\SU(2)$ will be denoted $L_i$. The Casimir on $L^2(S^2)$ is $-\Delta$, where $\Delta$ is the Laplacian induced by the standard metric on the unit sphere. In terms of the $L_i$,
\begin{equation}
    - \Delta = L_x^2 + L_y^2 + L_z^2. 
\end{equation}
The projection $\Pi_L$ on the $L$th summand in \eqref{eq:sphere_decomp} is explicitly given by
\begin{equation}
    \Pi_L f(\omega) = \int (2L+1) P_L(\omega \cdot \omega') f(\omega') \domega ',
\end{equation}
where $P_L$ is the $L$th Legendre polynomial. We remark that this formula differs by a~factor $4 \pi$ from what can be found in some references because $\frac{1}{4 \pi}$ was absorbed in the definition of $\domega$.

In terms of the standard spherical cooordinates $(\theta, \phi)$ on $S^2$, the Poisson bracket of functions $f$ and $g$ takes the form:
\begin{equation}
    \{ f, g \} = \frac{1}{\sin \theta} \left( \frac{\partial f}{\partial \theta} \frac{\partial g}{\partial \phi} - \frac{\partial f}{\partial \phi} \frac{\partial g}{\partial \theta}\right). \label{eq:poisson_bracket} 
\end{equation}

The Sobolev space $W^{k,p}(S^2)$ with $k \in \mathbb N$ and $p \in [1,\infty]$ is defined as the space of functions in $L^p(S^2)$ whose (distributional) covariant derivatives up to order $k$ are in $L^p(S^2)$. For the norm on $W^{k,p}(S^2)$ we take:
\begin{align}
    \| f \|_{k,p}  = \begin{cases}\left( \sum_{n=0}^k \| \nabla^n f \|_p^p \right)^{\frac{1}{p}}, \qquad & p \neq \infty, \label{eq:Sobolev_norm} \\
    \max \{ \| \nabla^n f \|_\infty  \}_{n=0}^k , \qquad & p = \infty  .\end{cases}  
\end{align}
We remark that functions in $W^{2,1}(S^2)$ are continuous \cite{adams_cone_1977}. We consider also the space $C^{0,\alpha}(S^2)$ of $\alpha$-H\"older continuous functions, with the seminorm 
\begin{equation}
\norm{f}_{C_h^{0,\alpha}} = \sup_{\omega,\omega' \in S^2} \frac{|f(\omega)-f(\omega')|}{| \omega - \omega' |^{\alpha}}, \label{eq:Holder_seminorm}
\end{equation}
where $|\cdot|$ is the length of a vector in $\mathbb R^3$. 

\subsubsection*{Analysis on the group}

We have two commuting actions of $\SU(2)$ on function spaces $L^p(\SU(2))$: the left regular representation,
\begin{equation}
    (\pi_L(g) \psi)(h) = \psi(g^{-1} h ),
\end{equation}
and the right regular representation,
\begin{equation}
    (\pi_R(g) \psi)(h) = \psi (hg).
\end{equation}
Differentiating at the identity one obtains two sets of vector fields (first order differential operators),
\begin{equation}
    \xi_i^R =  \left. -i \frac{\dd}{\dd t} \pi_L(e^{ i t S_i}) \right|_{t=0}, \qquad -\xi_i^L =  \left. -i \frac{\dd}{\dd t} \pi_R(e^{i t S_i}) \right|_{t=0}, \qquad i = x , y , z, 
\end{equation}
spanning two copies of the Lie algebra of $\SU(2)$. Vector fields $\xi_i^L$ and $\xi_i^R$ have equal values at the neutral element of the group. Since the left and right actions commute, $\xi_i^R$ are right-invariant: $\xi_i^R \pi_R(g) = \pi_R(g) \xi_i^R$ for every $g \in \SU(2)$. Similarly, $\xi_i^L$ are left-invariant. The Laplacian on $\SU(2)$ is the Casimir for either of the two representations:
\begin{equation}
    - \Delta = (\xi_x^L)^2+(\xi_y^L)^2+(\xi_z^L)^2 = (\xi_x^R)^2+(\xi_y^R)^2+(\xi_z^R)^2.
\end{equation}
$\Delta$ is also the Laplacian of an invariant Riemannian metric on $\SU(2)$.

The sphere $S^2$ may be identified with the quotient space $\SU(2) / \mathrm{U}(1)$. That is, we identify the left coset $h \mathrm{U(1)}$ with the point $h \cdot \uparrow$, where $\uparrow$ is the north pole of $S^2$. We let $q : \SU(2) \to S^2$ be the quotient map. Right invariant fields descend to the quotient, in the sense that
\begin{equation}
    \xi_i^R (f \circ q) = (L_i f) \circ q
\end{equation}
for a function $f$ on $S^2$. 

Let $X$ be a Banach space on which $\SU(2)$ acts (strongly continuously) preserving the norm. We denote the application of $g \in \SU(2)$ to $ x \in X$ by $gx$. If $\varphi \in L^1(\SU(2))$, let:
\begin{equation}
    \varphi \rhd x = \int_{\SU(2)} \varphi(g) gx \dg, 
    \label{eq:conv_rep}
\end{equation}
where the integration is done with respect to the Haar probability measure on $\SU(2)$. The integral converges and we have:
\begin{equation}
    \| \varphi \rhd  x \| \leq \| \varphi \|_1 \| x \| . \label{eq:norm_bound_rhd}
\end{equation}

The action \eqref{eq:conv_rep} of $\varphi$ on $\psi$ in the left regular representation is the convolution:
\begin{equation}
    (\varphi * \psi)(h) = \int_{\SU(2)} \varphi(g) \psi(g^{-1} h) \dg  .
\end{equation}
If $\varphi, \psi \in L^1(\SU(2))$, then
\begin{equation}
    \varphi \rhd (\psi \rhd x) = (\varphi * \psi) \rhd x.
\end{equation}

A special role is played by class functions, i.e.\ functions $\varphi$ on $\SU(2)$ such that $\varphi(ghg^{-1}) = \varphi(h)$ for all $g,h \in \SU(2)$. If $\varphi$ is an integrable class function, then operators $\varphi \rhd$ commute with the action of $\SU(2)$, and $\varphi * \psi = \psi * \varphi$ holds for any $\psi \in L^1(\SU(2))$. An orthonormal eigenbasis of the space of square-integrable class functions is given by the characters of irreducible representations:
\begin{equation}
    \chi_J(g) = \Tr_{\irrep_J} (g).
\end{equation}
On a representation with orthogonal decomposition $\bigoplus_{M } \irrep_M^{\oplus \mu_M}$, the projection $\Pi_M$ on the $M$th summand (often called an isotypic component) is given by
\begin{equation}
    \Pi_M x = (2M+1) \chi_M \rhd x. 
\label{eq:projection_character}
\end{equation}

Every $g \in \SU(2)$ has eigenvalues $e^{ \pm i \alpha}$. To fix $\alpha$ uniquely we require $\alpha \in [0, \pi]$. Two elements $g,h \in \SU(2)$ are conjugate (i.e. $g = khk^{-1}$ for some $k \in \SU(2)$) if and only if they have the same eigenvalues. This justifies the following definition: for every class function $\varphi$ we let $R \varphi$ be the unique function on $[0, \pi]$ such that $\varphi(g) = R \varphi(\alpha)$ for $g$ with eigenvalues $e^{\pm i \alpha}$. For example, $R \chi_J(\alpha )= \frac{\sin((2J+1) \alpha)}{\sin (\alpha)}$. 

Integrals of class functions can be computed using Weyl's integration formula:
\begin{equation}
    \int_{\SU(2)} \varphi(g) \dg = \frac{2}{\pi} \int_0^\pi R \varphi(\alpha) \sin^2 (\alpha) \dd \alpha.
    \label{eq:Weyl_integration}
\end{equation}
Moreover, if $\varphi$ is a class function, then $\Delta \varphi$ is also a class function. We have the formula
\begin{equation}
    (R (1 - 4 \Delta) \varphi)(\alpha) = - \frac{1}{\sin (\alpha)} \frac{\dd^2}{\dd \alpha^2} \sin(\alpha) R \varphi(\alpha).
\label{eq:class_function_Laplacian}
\end{equation}
Using this formula we find a class function $G \in L^1(G)$ such that $(1-4 \Delta) G $ is the Dirac measure at the neutral element:
\begin{equation}
    RG (\alpha) = \frac{1}{2} \frac{\pi - \alpha}{\sin (\alpha)}. 
\end{equation}
We have $G \rhd = (1+4Q)^{-1}$. For example,
\begin{equation}
    (1 + 4 \cQ)^{-1} (\rho) = \int_{\SU(2)} G(g) g \rho g^{-1} \dg, \qquad \text{for } \rho \in B(\irrep_J).
\end{equation}
We will also use the symmetrization of $G$ with respect to the transformation $g \mapsto - g$:
\begin{align}
    G^{\mathrm{SO}}(g) &= \frac{1}{2} (G(g) + G(-g)) , \\
    RG^{\mathrm{SO}} (\alpha) & = \frac{1}{2} \left( RG(\alpha) + RG(\pi - \alpha) \right) 
 = \frac{\pi}{4} \frac{1}{\sin (\alpha)}. \nonumber 
\end{align}
We have $G \rhd = G^{\mathrm{SO}} \rhd$ on representations of $\mathrm{SO}(3)$, but $G^{\mathrm{SO}} \rhd$ annihilates $\irrep_J$ with $J \not \in \Z$. In particular,
\begin{equation}
    (1 + 4 \cQ)^{-1} (\rho) = \int_{\SU(2)} G^{\mathrm{SO}}(g) g \rho g^{-1} \dg, \qquad \text{for } \rho \in B(\irrep_J).
\end{equation}

\section{Quantum to classical maps and quantization} \label{sec:quantization}

Let $M$ be a measurable space. A positive operator-valued measure (POVM) on $M$ is a map $\Pi$ from measurable subsets of $M$ to bounded operators on some Hilbert space $\mathcal H$ such that $\bra{\psi} \Pi(\cdot) \ket{\psi}$ is a probability measure for every unit vector $\ket{\psi} \in \mathcal H$.

If $\Psi$ is a linear map from $B(\mathcal H)$ to $\mathcal M(M)$, the space of complex measures on $M$, which takes density matrices to probability measures, we call $\Psi$ a quantum to classical map. A POVM $\Pi$ induces a quantum to classical map $\rho \mapsto \Tr(\rho \Pi(\cdot))$, and every quantum to classical map arises this way from a unique POVM. The adjoint of $\Psi$ is the map $f \mapsto \int_M f(m) \dd \Pi(m)$, taking measurable functions to operators.

We are interested in equivariant quantum to classical maps with $M$ being a classical phase space (i.e.\ a symplectic manifold) with a~transitive action of $\SU(2)$. This singles out the homogeneous space $\SU(2)/\mathrm{U}(1)\cong S^2$, which is also the orbit of the projection onto the highest weight vector $\ket{J}_J\in \irrep_J$. The prototypical example of a quantum to classical map for the sphere is the Husimi map $\Hus_J$. We recall the definition of generalized Husimi maps $\Hus_J^i : B(\irrep_J) \to C^\infty(S^2)$:
\begin{equation}
\Hus_J^i (\rho) (\omega) =  {}_J \bra{\omega ; i } \rho \ket{\omega ; i}_J.
\end{equation}

Equivariant POVMs on homogeneous spaces of Lie groups were studied by many authors, e.g. \cite{davies_repeated_1970,holevo_probabilistic_2011,chiribella_extremal_2004}. The situation for $\SU(2)$ is relatively simple, as reviewed below. 

\begin{proposition} \label{prop:qtc_simplex}
$\SU(2)$-equivariant quantum to classical maps $B(\irrep_J) \to \mathcal M(S^2)$ form a~simplex with vertices $(2J+1) \Hus^i_J$, $i \in \{ - J, \dots, J \}$. 
\end{proposition}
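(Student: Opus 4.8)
The plan is to pass to the equivalent description in terms of POVMs. As noted just before the statement, every $\SU(2)$-equivariant quantum to classical map $\Psi : B(\irrep_J) \to \mathcal M(S^2)$ is of the form $\Psi(\rho) = \Tr[\rho\, \Pi(\cdot)]$ for a unique POVM $\Pi$, and equivariance of $\Psi$ is equivalent to the covariance relation $\Pi(gB) = g\,\Pi(B)\,g^{-1}$ for all $g \in \SU(2)$ and measurable $B \subseteq S^2$. So it suffices to show that equivariant POVMs form a simplex whose vertices are $(2J+1)\ketbra{\omega;i}{\omega;i}{J}\domega$. First I would prove that such a $\Pi$ is automatically absolutely continuous with respect to $\domega$. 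Consider the scalar measure $\nu(B) = \tfrac{1}{2J+1}\Tr[\Pi(B)]$; since $\Pi(S^2) = \identity$ it is a probability measure, and cyclicity of the trace applied to the covariance relation gives $\nu(gB) = \nu(B)$, so $\nu$ is the unique invariant probability measure, namely $\domega$. Because $0 \leq \Pi(B) \leq \identity$, if $\domega(B) = 0$ then $\Tr[\Pi(B)] = 0$ and positivity forces $\Pi(B) = 0$; hence $\Pi \ll \domega$.

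By the finite-dimensional (operator-valued, or entry-by-entry) Radon--Nikodym theorem there is a measurable positive density $F : S^2 \to B(\irrep_J)$ with $\Pi(d\omega) = F(\omega)\domega$ and $\int_{S^2} F(\omega)\domega = \identity$. Invariance of $\domega$ turns the covariance relation into $F(g\omega) = g\,F(\omega)\,g^{-1}$ for almost every $\omega$. The key point is to upgrade this to a genuinely equivariant density determined by a single seed. Fixing the north pole $\uparrow$ with stabilizer $\mathrm U(1) = \{e^{itS_z}\}$ and a section $\omega \mapsto g_\omega$ with $g_\omega \cdot \uparrow = \omega$, I would set $F_0 = \int_{\mathrm U(1)} h\,F(\uparrow)\,h^{-1}\,dh$ and $\tilde F(\omega) = g_\omega F_0 g_\omega^{-1}$; averaging over the stabilizer makes $\tilde F$ independent of the choice of $g_\omega$, and the almost-everywhere covariance shows $\tilde F = F$ a.e., so one may take $F = \tilde F$. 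By construction $F_0$ is positive and commutes with $S_z$, hence is diagonal: $F_0 = \sum_{m=-J}^{J} f_m \ketbra{m}{m}{J}$ with $f_m \geq 0$.

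Using $\ket{\omega;m}_J = g_\omega \ket{m}_J$ one then gets $F(\omega) = \sum_m f_m \ketbra{\omega;m}{\omega;m}{J}$, and the normalization $\int F\domega = \identity$ together with \eqref{eq:coherent_resolution_of_1} gives $\sum_m f_m = 2J+1$. Writing $\lambda_m = f_m/(2J+1)$, so that $(\lambda_m)_{m=-J}^J$ is a probability vector, the induced map is
\begin{equation*}
\Psi(\rho) = \Tr[\rho\,F(\cdot)]\domega = \sum_{m=-J}^J \lambda_m \,(2J+1)\,\Hus_J^m(\rho)\domega,
\end{equation*}
i.e.\ $\Psi = \sum_m \lambda_m\,(2J+1)\Hus_J^m$. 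Conversely every such convex combination is positive, equivariant, and sends density matrices to probability measures by \eqref{eq:coherent_resolution_of_1}, so it is an equivariant quantum to classical map. This yields an affine surjection from the standard simplex of probability vectors onto the set of equivariant quantum to classical maps. To finish I would establish injectivity: evaluating $\sum_m \lambda_m (2J+1)\Hus_J^m$ on the projectors $\ketbra{j}{j}{J}$ at the north pole and using $\braket{\uparrow;m}{j} = \delta_{mj}$ recovers each $\lambda_j$, so the maps $(2J+1)\Hus_J^m$ are linearly independent and the affine map is a bijection. An affine bijection with the standard simplex is exactly the assertion that the set is a simplex with the stated vertices.

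The step I expect to be the main obstacle is the passage from the almost-everywhere covariance of the Radon--Nikodym density $F$ to a bona fide equivariant density built from an $\mathrm U(1)$-invariant seed: the stabilizer-averaging argument resolves it, but one must verify carefully that $\tilde F$ agrees with $F$ a.e.\ and that $\Pi(d\omega) = \tilde F(\omega)\domega$ really reproduces the original POVM. Everything else is bookkeeping with \eqref{eq:coherent_resolution_of_1} and the Schur-type diagonalization of the seed.
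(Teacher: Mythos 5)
Your route is genuinely different from the paper's and, after one step is repaired, it works. The paper avoids operator-valued measure theory altogether: it first shows $\Psi(\rho) = f_\rho\, \domega$ with $f_\rho \in L^\infty(S^2)$ by dominating $\Psi(\rho) \leq \|\rho\|\,(2J+1)\domega$ for $\rho \geq 0$, then applies Schur's lemma to the equivariant map $\rho \mapsto f_\rho \in L^2(S^2)$ to conclude that $f_\rho$ lies in the finite-dimensional subspace $\bigoplus_{\ell=0}^{2J}\irrep_\ell$ and is therefore smooth. Smoothness is exactly what licenses the pointwise evaluation $\rho \mapsto f_\rho(\uparrow) = \Tr(A\rho)$ and the subsequent $S_z$-commutation and diagonalization of the seed $A$; in other words, the paper obtains your ``seed at the north pole'' with no almost-everywhere caveats. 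Your POVM/Radon--Nikodym route is the standard one in the covariant-POVM literature and is arguably more portable, and your explicit affine-independence check via $\braket{\uparrow;m}{j} = \delta_{mj}$ is a point in your favor: the paper leaves the uniqueness of the convex weights (needed for the word ``simplex'') implicit.

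The one step that fails as written is the definition $F_0 = \int_{\mathrm U(1)} h\,F(\uparrow)\,h^{-1}\,dh$. The Radon--Nikodym density $F$ is defined only up to $\domega$-null sets, so $F(\uparrow)$ --- the value at a single point --- is meaningless: a different representative of $F$ changes $F_0$ arbitrarily, and the covariance $F(g\omega) = g F(\omega) g^{-1}$, which holds for each fixed $g$ only for a.e.\ $\omega$, gives no information at the fixed point $\uparrow$. Stabilizer averaging does not cure this; you must average over the whole group. Set instead
\begin{equation*}
F_0 = \int_{\SU(2)} g^{-1} F(g \cdot \uparrow)\, g \dg ,
\end{equation*}
which is well defined because $g \mapsto F(g\cdot\uparrow)$ is an a.e.-defined integrable function (the push-forward of the Haar measure under $g \mapsto g\cdot\uparrow$ is $\domega$, and $\|F\| \leq \Tr F \in L^1$). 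Translation invariance of the Haar measure shows that $F_0$ commutes with the stabilizer $\mathrm U(1)$, so $\tilde F(\omega) := g_\omega F_0 g_\omega^{-1}$ is independent of the section and genuinely covariant; rewriting $\tilde F(\omega) = \int_{\SU(2)} k\, F(k^{-1}\omega)\, k^{-1} \dd k$ and applying Fubini to the two-variable a.e.\ covariance yields $\tilde F = F$ a.e., hence $\Pi(d\omega) = \tilde F(\omega)\domega$ reproduces the original POVM. With this substitution the remainder of your argument --- the diagonalization of $F_0$, the normalization $\sum_m f_m = 2J+1$ via \eqref{eq:coherent_resolution_of_1}, and the injectivity step --- goes through as you wrote it.
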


\begin{proof}
Let $\Psi\colon B(\irrep_J)\to \mathcal{M}(S^2)$ be a quantum to classical map. $\Psi(1)$ is an invariant measure of total mass $2J+1$, so it equals $(2J+1) \domega$. For a positive operator $\rho$ we have 
\begin{equation}
 \Psi(\rho) \leq \| \rho \| (2J+1) \domega  , 
\end{equation}
so~$\Psi(\rho) = f_\rho d \omega$ for some $f_\rho \in L^\infty(S^2, \domega)$ with $\| f_\rho \|_{\infty} \leq \| \rho \| (2J+1)$. By linearity of~$\Psi$, $\Psi(\rho) = f_\rho \domega $ with $f(\rho) \in L^\infty(S^2, \domega)\subseteq L^2(S^2,\domega)$ holds for every $\rho$. The map $\rho\mapsto f_\rho$ is equivariant, so by Schur's lemma its image is contained in the subspace $\bigoplus_{\ell = 0}^{2J}\irrep_\ell\subset L^2(S^2)$. In particular, $f_\rho$ is smooth. 

The map $\rho \mapsto f_\rho(\omega)$ is a positive linear functional taking $1$ to $2J+1$, so it is given by $\Tr(A(\omega) \rho)$ for some positive operator $A(\omega)$ with $\Tr(A(\omega))= 2J+1$. By equivariance, $A(\cdot)$ is uniquely determined by its value at the north pole $\omega = \,\uparrow$. Let $A = A(\uparrow)$. Since $\uparrow$ is invariant under rotations around the $z$ axis, $A$ commutes with $S_z$. Hence it is a~convex combination
\begin{equation}
    A = (2J+1) \sum_i \lambda_i \ket{i} \bra{i}, \qquad \lambda_i \geq 0 , \ \sum_i \lambda_i=1,
\end{equation}
from which we can conclude
\begin{equation}
    \Psi = (2J+1) \sum_i \lambda_i \Hus_J^i.
\end{equation}
\end{proof}

The adjoint of $(2J+1) \Hus_J^i$ is the map $\Op_J^i$, 
\begin{equation}
    \Op^i_J (f) = (2J+1)\int_{S^2} f(\omega) \ketbra{\omega;i}{\omega;i}{J} \domega.
\end{equation}
$\Op^i_J (f)$ is defined even for a distribution $f$, but to make useful statements we will need some regularity of $f$.

The following properties are immediate:

\begin{lemma}
$\Op^i_J$ is an equivariant map with the properties
\begin{enumerate}[label=\roman*)]
    \item $\Op^i_J(f)^*=\Op^i_J(\overline{f})$;
    \item $\Op^i_J$ is positivity preserving, that is $\Op^i_J(f)\geq 0$ whenever $f\geq 0$ a.e.;
    \item if $f$ is real valued, then $\sigma(\Op^i_J(f))\subseteq \image f$;
    \item if $f \in L^1(S^2)$, then 
    \begin{align}
        \Tr(\Op^i_J(f)) = (2J+1)\int_{S^2} f(\omega) \domega;
        \label{eq:Op_trace}
    \end{align}
    \item $\Op^i_J(1) = \identity$.
\end{enumerate}
\end{lemma}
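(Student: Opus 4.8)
The plan is to read everything off the defining formula
$\Op^i_J(f)=(2J+1)\int_{S^2}f(\omega)\,\ketbra{\omega;i}{\omega;i}{J}\domega$,
which presents $\Op^i_J(f)$ as a weighted integral of the rank-one orthogonal projectors $\ketbra{\omega;i}{\omega;i}{J}$, together with the resolution of the identity \eqref{eq:coherent_resolution_of_1}. First I would establish equivariance. The vector $\ket{\omega;i}_J$ is characterized up to a phase as the $i$-eigenvector of $\omega\cdot\vec S$, so applying $g\in\SU(2)$ (acting on $S^2$ by a rotation $R_g$) sends it to an $i$-eigenvector of $(R_g\omega)\cdot\vec S$, i.e.\ $g\ket{\omega;i}_J=e^{i\phi(g,\omega)}\ket{R_g\omega;i}_J$ for some phase. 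The key point is that this phase cancels at the level of projectors, giving $g\,\ketbra{\omega;i}{\omega;i}{J}\,g^{-1}=\ketbra{R_g\omega;i}{R_g\omega;i}{J}$. Substituting into the integral and changing variables $\omega\mapsto R_g\omega$, using rotation-invariance of $\domega$, yields $g\,\Op^i_J(f)\,g^{-1}=\Op^i_J(f\circ R_g^{-1})$, which is equivariance.

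Properties i), ii) and v) are then immediate. For i), each projector is self-adjoint and the only factor affected by the adjoint is the scalar $f(\omega)$, so taking the adjoint inside the integral replaces $f$ by $\overline f$. For ii), if $f\ge0$ a.e.\ then for every $\ket\psi$ one has $\bra\psi\Op^i_J(f)\ket\psi=(2J+1)\int f(\omega)\,\abs{\braket{\psi}{\omega;i}}^2\domega\ge0$, since the integrand is pointwise non-negative. Property v) is exactly the resolution of the identity \eqref{eq:coherent_resolution_of_1}, which, as noted there, holds verbatim with $\ket{\omega}_J$ replaced by $\ket{\omega;i}_J$.

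For iii) I would sandwich. Let $m$ and $M$ denote the essential infimum and essential supremum of the real-valued $f$. Then $f-m\ge0$ and $M-f\ge0$ a.e., so ii) together with v) gives the operator inequalities $m\identity\le\Op^i_J(f)\le M\identity$; hence the spectrum of $\Op^i_J(f)$ is contained in $[m,M]$, i.e.\ in the closed convex hull of $\image f$. For iv) I would use that each projector is trace class with unit trace, $\Tr\ketbra{\omega;i}{\omega;i}{J}=\braket{\omega;i}{\omega;i}=1$; since $f\in L^1(S^2)$ the integrand is integrable in trace norm, so trace and integral may be interchanged to give $\Tr\Op^i_J(f)=(2J+1)\int_{S^2}f(\omega)\domega$.

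None of the steps is genuinely hard. The only points that require a little care are the phase cancellation in the equivariance argument, which must be carried out for the projectors rather than for the vectors, and the precise reading of iii): the containment is to be understood as containment in the interval spanned by the essential range of $f$, so that $m$ and $M$ are the relevant endpoints and the claim concerns the closed convex hull of $\image f$. The interchange of trace and integral in iv) is justified by finite-dimensionality together with $f\in L^1$.
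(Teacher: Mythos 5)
Your proof is correct and matches the paper's approach: the paper states this lemma without proof, declaring the properties ``immediate'' from the defining integral formula and the resolution of the identity \eqref{eq:coherent_resolution_of_1}, and your verification (phase cancellation at the level of projectors for equivariance, quadratic forms for positivity, the sandwich $m\identity\le\Op^i_J(f)\le M\identity$ for the spectrum, and Fubini for the trace) is exactly that intended verification. Your clarifying remark on iii) is apt: the sandwich argument literally yields $\sigma(\Op^i_J(f))\subseteq[m,M]$ with $m,M$ the essential infimum and supremum, which coincides with $\image f$ when $f$ is continuous on the connected compact $S^2$ and otherwise should be read as the closed convex hull of the essential range, consistent with how the paper uses the lemma.
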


Next we provide estimates on Schatten norms of $\Op^i_J(f)$. We remark that essentially the same proof appears in \cite{simon_classical_1980}.

\begin{lemma} \label{lem:oph_norms}
For $J\in \tfrac{1}{2}\N$ and any $p\in [1,\infty]$:
\begin{align}
    \norm{\Op^i_J(f)}_p \leq (2J+1)^{1/p}\norm{f}_p,\quad f\in L^p(S^2),
\label{eq:Op_bound}
\end{align}
and the inequalities are saturated for the constant function $f\equiv 1$. Similarly,
\begin{equation}
    \| \Hus_J^i (\rho) \|_p \leq (2J+1)^{- 1/p} \| \rho \|_p,
    \label{eq:Hus_bound}
\end{equation}
with equality if $\rho=1$.
\end{lemma}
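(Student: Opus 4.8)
The plan is to prove the bound \eqref{eq:Op_bound} for $\Op_J^i$ at the two endpoints $p=1$ and $p=\infty$, interpolate to cover $1 \le p \le \infty$, and then deduce \eqref{eq:Hus_bound} by duality.

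For $p=1$ I would apply the triangle inequality for the trace norm directly to the integral representation of $\Op_J^i(f)$. Since each $\ketbra{\omega;i}{\omega;i}{J}$ is a rank-one orthogonal projection, its trace norm is $1$, so $\| \Op_J^i(f) \|_1 \le (2J+1) \int_{S^2} |f(\omega)| \domega = (2J+1) \| f \|_1$. For $p=\infty$ I would bound matrix elements: for unit vectors $\ket{\phi}, \ket{\psi}$ one has $\bra{\phi} \Op_J^i(f) \ket{\psi} = (2J+1) \int_{S^2} f(\omega) \braket{\phi}{\omega;i}_J \, {}_J\!\braket{\omega;i}{\psi} \domega$; pulling out $\| f \|_\infty$ and applying the Cauchy--Schwarz inequality in $L^2(S^2,\domega)$ produces the factor $\big(\int_{S^2} \abs{\braket{\phi}{\omega;i}_J}^2 \domega\big)^{1/2}$ and its analogue for $\ket{\psi}$, each equal to $(2J+1)^{-1/2}$ by the resolution of the identity \eqref{eq:coherent_resolution_of_1}. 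The powers of $2J+1$ cancel, giving $\abs{\bra{\phi} \Op_J^i(f) \ket{\psi}} \le \| f \|_\infty$, hence $\| \Op_J^i(f) \|_\infty \le \| f \|_\infty$. (Equivalently, $\Op_J^i$ is a unital positive map on the commutative algebra $L^\infty(S^2)$, therefore completely positive and an operator-norm contraction.)

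Regarding $\Op_J^i$ as a single linear operator defined on $L^1(S^2) + L^\infty(S^2)$ with values in $S^1 + S^\infty$, the endpoint estimates read $\| \Op_J^i \|_{L^1 \to S^1} \le 2J+1$ and $\| \Op_J^i \|_{L^\infty \to S^\infty} \le 1$. Since $[L^1, L^\infty]_\theta = L^p$ and $[S^1, S^\infty]_\theta = S^p$ with $\frac{1}{p} = 1 - \theta$, complex (Riesz--Thorin) interpolation yields $\| \Op_J^i(f) \|_p \le (2J+1)^{1-\theta} \| f \|_p = (2J+1)^{1/p} \| f \|_p$, which is \eqref{eq:Op_bound}. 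For \eqref{eq:Hus_bound} I would dualize: the identity $\Tr(\Op_J^i(f)\, \rho) = (2J+1) \int_{S^2} f(\omega)\, \Hus_J^i(\rho)(\omega) \domega$ (immediate from the definitions, cf. \eqref{eq:Op_trace}) shows that $(2J+1)\Hus_J^i$ is the Banach-space adjoint of $\Op_J^i$ for the trace pairing on Schatten classes and the integral pairing on $L^p$ spaces. Applying \eqref{eq:Op_bound} with conjugate exponent $p'$ and passing to the adjoint gives $\| (2J+1)\Hus_J^i(\rho) \|_p \le (2J+1)^{1/p'} \| \rho \|_p$, i.e. $\| \Hus_J^i(\rho) \|_p \le (2J+1)^{1/p' - 1} \| \rho \|_p = (2J+1)^{-1/p} \| \rho \|_p$; the endpoints $p \in \{1,\infty\}$ are covered either by the same argument or directly as in the $\Op_J^i$ case, noting that $\Hus_J^i(\rho)$ is smooth, hence in every $L^p$. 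The saturation claims are then immediate: $\Op_J^i(1) = \identity$ has $\| \identity \|_p = (2J+1)^{1/p} = (2J+1)^{1/p}\|1\|_p$, and $\Hus_J^i(\identity) = 1$ has $\| 1 \|_p = 1 = (2J+1)^{-1/p} \| \identity \|_p$.

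The only point requiring care is the interpolation step: one must confirm that $\Op_J^i$ is genuinely a single $p$-independent operator $L^1 + L^\infty \to S^1 + S^\infty$ and invoke the standard complex-interpolation identifications of the $L^p$ and Schatten scales, which are valid here since $S^2$ carries a finite measure and $\irrep_J$ is finite-dimensional. Everything else is elementary, so this is in essence the argument of \cite{simon_classical_1980}.
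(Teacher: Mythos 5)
Your proposal is correct and follows essentially the same route as the paper's proof: the $1\to 1$ bound by the triangle inequality for the trace norm, the $\infty\to\infty$ bound via matrix elements, Cauchy--Schwarz and the resolution of the identity \eqref{eq:coherent_resolution_of_1}, interpolation for intermediate $p$, and the bound \eqref{eq:Hus_bound} by passing to the adjoint $(2J+1)\Hus_J^i = (\Op_J^i)^*$. Your added care about the interpolation step (viewing $\Op_J^i$ as a single operator on $L^1+L^\infty$ and invoking the complex-interpolation identifications of the $L^p$ and Schatten scales) makes explicit what the paper leaves as a one-line remark, and the saturation checks via $\Op_J^i(1)=\identity$ and $\Hus_J^i(\identity)=1$ are exactly as intended.
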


\begin{proof}
Bounds on $\Hus_J^i$ are obtained from these on $\Op_J^i$ by taking adjoints. We consider $\Op_J^i$. The $1\to 1$ estimate follows from subadditivity of norms:
\begin{align}
    \norm{\Op^i_J(f)}_1\leq (2J+1)\int_{S^2} \abs{f(\omega)}\norm{\ketbra{\omega;i}{\omega;i}{}}_1 \domega = (2J+1) \int_{S^2}\abs{f(\omega)}\domega.
\end{align}
For the $\infty \to \infty$ estimate, notice that, for any $\ket{\psi}, \ket{\phi}\in \irrep_J$:
\begin{align}
\abs{\bra{\phi}\Op^i_J(f)\ket{\psi}}&\leq (2J+1)\int_{S^2}\abs{ \braket{\phi}{\omega;i} } \abs{ \braket{\omega;i}{\psi} } 
 \abs{f(\omega)}\domega\\
&\leq \norm{f}_\infty (2J+1) \left(\int_{S^2} \abs{\braket{\phi}{\omega;i}}^2\domega\right)^{\frac{1}{2}}\left(\int_{S^2}\abs{\braket{\omega;i}{\psi}}^2\domega\right)^{\frac{1}{2}} \nonumber \\
&=\norm{f}_\infty \norm{\phi}\norm{\psi}. \nonumber 
\end{align}
Thus $\norm{\Op^i_J(f)}_\infty\leq \norm{f}_\infty$.
The result for $p\in (1,\infty)$ follows from interpolation.
\end{proof}

\subsection{Inversion} \label{sec:inversion}

In this Subsection we consider operators $\Hus_J \Op_J$ and $\Op_J \Hus_J$. These operators are positive, positivity-preserving, and have $p \to p$ norm equal $1$ for every $p \in [1 , \infty]$. We will describe their convergence to identity as $J \to \infty$.

We remark that the formula \eqref{eq:Hus_Op_coeff} below appears in slightly different forms in \cite{varilly_moyal_1989,zhang_berezin_1998}, and that similar formulas for the compositions $\Hus_J^i \Op_J^i$ were studied in \cite{shmoish_spectrum_2021}. 

\begin{lemma}\label{lemma:HJ_OpJ_composition}
Let $J\in \tfrac{1}{2}\N$. Then:
\begin{align}
    \Hus_J\Op_J &= \sum_{\ell=0}^{2J} \frac{(2J)!(2J+1)!}{(2J-\ell)!(2J+\ell+1)!}\Pi_\ell, \label{eq:Hus_Op_coeff} \\
    \Op_J  \Hus_J &= \sum_{\ell=0}^{2J} \frac{(2J)!(2J+1)!}{(2J-\ell)!(2J+\ell+1)!}\Pi_\ell,
\end{align}
where $\Pi_\ell$ is the orthogonal projection onto $\irrep_\ell$ (in~$L^2(S^2)$ or in $B(\irrep_J)$). 
\end{lemma}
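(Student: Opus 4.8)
The plan is to reduce everything to Schur's lemma together with a single scalar computation. Both $\Hus_J\Op_J$ (on $L^2(S^2)$) and $\Op_J\Hus_J$ (on $B(\irrep_J)$) are $\SU(2)$-equivariant, and since $\Op_J$ maps $L^2(S^2)$ into $B(\irrep_J)\cong\bigoplus_{M=0}^{2J}\irrep_M$, both compositions annihilate the components $\irrep_\ell$ with $\ell>2J$. As each $\irrep_\ell$ occurs with multiplicity one, both in the relevant part of $L^2(S^2)$ and in $B(\irrep_J)$, Schur's lemma forces each composition to act as a scalar on every isotypic component, so that $\Hus_J\Op_J=\sum_{\ell=0}^{2J}c_\ell\Pi_\ell$ and $\Op_J\Hus_J=\sum_{\ell=0}^{2J}c'_\ell\Pi_\ell$ for some scalars $c_\ell,c'_\ell$.

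Next I would show $c_\ell=c'_\ell$. Writing $A=\Op_J$, so that $A^*=(2J+1)\Hus_J$, we have $\Hus_J\Op_J=\tfrac{1}{2J+1}A^*A$ and $\Op_J\Hus_J=\tfrac{1}{2J+1}AA^*$. The restriction of $A$ to the copy of $\irrep_\ell$ in $L^2(S^2)$ is an intertwiner into the copy of $\irrep_\ell$ in $B(\irrep_J)$; by Schur this restriction equals $a_\ell U_\ell$ with $a_\ell\ge 0$ and $U_\ell$ a fixed unitary identification of the two copies. Then $A^*A$ acts as $a_\ell^2$ on $\irrep_\ell\subset L^2(S^2)$ and $AA^*$ acts as $a_\ell^2$ on $\irrep_\ell\subset B(\irrep_J)$, giving $c_\ell=c'_\ell=a_\ell^2/(2J+1)$. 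It therefore suffices to compute the coefficients of $\Hus_J\Op_J$.

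For this I would use the integral-kernel representation. From the definitions, $(\Hus_J\Op_J f)(\omega)=(2J+1)\int_{S^2}\abs{\braket{\omega}{\omega'}}^2 f(\omega')\domega'$, and the coherent-state overlap is $\abs{\braket{\omega}{\omega'}}^2=\big(\tfrac{1+\omega\cdot\omega'}{2}\big)^{2J}$, a zonal function of $\omega\cdot\omega'$ of polynomial degree $2J$ (which explains again the truncation at $\ell=2J$). Expanding in Legendre polynomials and comparing with the projector formula $\Pi_\ell f(\omega)=\int(2\ell+1)P_\ell(\omega\cdot\omega')f(\omega')\domega'$, orthogonality of the $P_\ell$ under the normalized measure yields $c_\ell=(2J+1)\,\tfrac12\,2^{-2J}\int_{-1}^1(1+t)^{2J}P_\ell(t)\dd t$.

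The remaining, purely computational, step is the Legendre integral $\int_{-1}^1(1+t)^n P_\ell(t)\dd t=\frac{2^{n+1}(n!)^2}{(n-\ell)!(n+\ell+1)!}$, which follows from Rodrigues' formula by integrating by parts $\ell$ times and evaluating a Beta integral. Substituting $n=2J$ and simplifying (using $(2J+1)((2J)!)^2=(2J)!\,(2J+1)!$) gives $c_\ell=\frac{(2J)!\,(2J+1)!}{(2J-\ell)!\,(2J+\ell+1)!}$, the claimed formula. I expect the main (though still routine) obstacle to be organizing this last evaluation and confirming normalization conventions, in particular the factor relating $\int_{S^2}\cdots\domega'$ to $\tfrac12\int_{-1}^1\cdots\dd t$ under the area measure normalized to $1$, together with the overlap formula for $\abs{\braket{\omega}{\omega'}}^2$; the representation-theoretic skeleton is otherwise immediate.
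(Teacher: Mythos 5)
Your proposal is correct and takes essentially the same route as the paper: you expand the zonal kernel $(2J+1)\bigl(\tfrac{1+\omega\cdot\omega'}{2}\bigr)^{2J}$ of $\Hus_J\Op_J$ in Legendre polynomials, evaluate the coefficients via Rodrigues' formula (your Beta-integral identity with $n=2J$ reproduces the paper's value $\tfrac{2\,(2J)!^2}{(2J-\ell)!(2J+\ell+1)!}$ exactly), and transfer the result to $\Op_J\Hus_J$ by adjointness. Your explicit Schur-lemma bookkeeping, with $A=\Op_J$ restricted to each copy of $\irrep_\ell$ equal to $a_\ell U_\ell$ so that $A^*A$ and $AA^*$ carry the same scalars $a_\ell^2$, merely spells out what the paper compresses into the single remark that $\Hus_J$ and $\Op_J$ are adjoint to each other up to a multiplicative constant.
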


\begin{proof}
It will be sufficient to show the formula for the composition $\Hus_J \Op_J$, since $\Hus_J$ and $\Op_J$ are adjoint to each other up to a multiplicative constant.

We write $\Hus_J \Op_J$ as:
\begin{align}
    \Hus_J\Op_J (f)(\omega) &  = (2J+1) \int_{S^2}f(\omega')\abs{\braket{\omega}{\omega'}}^2\domega'  \\
    & = (2J+1)\int_{S^2}f(\omega')\Big(\frac{1+\cos(\theta)}{2}\Big)^{2J}\domega', \nonumber
\end{align}
where $\cos(\theta)=\omega\cdot \omega'$. We expand the cosine factor in terms of Legendre polynomials:
\begin{align*}
\Big(\frac{1+\cos(\theta)}{2}\Big)^{2J} = \sum_{\ell=0}^{2J} c_{J,\ell} (2\ell+1)P_\ell (\cos(\theta)), 
\end{align*}
where $2c_{J,\ell} = \int_{-1}^1 ((1+x)/2)^{2J}P_\ell(x)\dx$. This integral can be explicitly calculated using the Rodriguez formula:
\begin{align*}
    P_\ell(x) = \frac{1}{2^\ell \ell!} \frac{d^\ell}{dx^\ell} (x^2-1)^\ell,
\end{align*}
from which, integrating by parts $l$ times, we obtain that:
\begin{align*}
\int_{-1}^1 \Big(\frac{1+x}{2}\Big)^{2J}P_\ell(x)\dx = \frac{2(2J)!^2}{(2J-\ell)!(2J+\ell+1)!}.
\end{align*}
 The desired formula follows from this, since:
 \begin{align*}
(\Hus_J\Op_J f)(\omega) &= (2J+1) \int_{S^2} f(\omega') \sum_{\ell=0}^{2J}\frac{(2J)!^2}{(2J-\ell)!(2J+\ell+1)!}P_\ell(\omega \cdot \omega')\domega'\\
     &= \Big[\sum_{\ell=0}^{2J}\frac{(2J)!(2J+1)!}{(2J-\ell)!(2J+\ell+1)!}\Pi_\ell \Big] f (\omega),
 \end{align*}
as desired.
\end{proof}

\begin{proposition}\label{prop:HJ_OpJ_approximate_inversion_2_norm}
The sequence of operators $\iset{\Hus_J \Op_J}_{J}$ is monotonically increasing in $B(L^2(S^2))$. 
For~every $s \in [0,1]$ we~have the bounds
\begin{align}
    \| (1-\Hus_J \Op_J ) f \|_2 & \leq \frac{1}{(2J+1)^s} \| (-\Delta)^s f \|_2,  \\
    \| (1- \Op_J \Hus_J) \rho \|_2 & \leq \frac{1}{(2J+1)^s} \| \cQ^s \rho \|_2,
\end{align}
and
\begin{align}
        \norm{\left(1-\Hus_J \Op_J+\frac{\Delta}{2J+1}\right)f}_{2} & \leq\frac{1}{(2J+1)^{1+s}} \| 
(-\Delta)^{1+s} f\|,  \\
\norm{\left(1- \Op_J \Hus_J-\frac{\cQ}{2J+1}\right)\rho}_{2} 
& \leq\frac{1}{(2J+1)^{1+s}} \| 
\cQ^{1+s} \rho \|.
    \end{align}
\end{proposition}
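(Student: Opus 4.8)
The key structural observation, from which everything follows, is that \Cref{lemma:HJ_OpJ_composition} simultaneously diagonalizes all operators in sight. Both $\Hus_J\Op_J$ and $\Op_J\Hus_J$ equal $\sum_{\ell=0}^{2J} a_{J,\ell}\Pi_\ell$ with
\begin{equation*}
a_{J,\ell}=\frac{(2J)!\,(2J+1)!}{(2J-\ell)!\,(2J+\ell+1)!}=\prod_{j=1}^{\ell}\frac{N-j}{N+j},\qquad N:=2J+1,
\end{equation*}
where $\Pi_\ell$ is the orthogonal projection onto the copies of $\irrep_\ell$ in $L^2(S^2)$ (resp.\ in $B(\irrep_J)$ with the Hilbert--Schmidt inner product). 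On the very same orthogonal decompositions the Casimirs $-\Delta$ and $\cQ$ both act as $\sum_\ell \ell(\ell+1)\Pi_\ell$, so their fractional powers are $\sum_\ell(\ell(\ell+1))^{s}\Pi_\ell$. Consequently, in each of the four asserted inequalities every operator is diagonal in one fixed orthogonal decomposition, and by the Pythagorean identity the operator bound for all $f$ (resp.\ $\rho$) is equivalent to the scalar bound obtained by comparing eigenvalues on each $\im\Pi_\ell$. Thus I would reduce the whole proposition to proving, for every integer $\ell\ge 0$ and every $s\in[0,1]$, the two scalar inequalities $1-a_{J,\ell}\le b^{s}$ and $\lvert 1-a_{J,\ell}-b\rvert\le b^{1+s}$, where $b:=\ell(\ell+1)/N$ and $a_{J,\ell}:=0$ once $\ell>2J$. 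Monotonicity is then immediate: each factor $\tfrac{N-j}{N+j}$ increases with $N$, so $a_{J,\ell}$ increases with $J$ for fixed $\ell$, and $\Hus_{J'}\Op_{J'}-\Hus_J\Op_J=\sum_\ell(a_{J',\ell}-a_{J,\ell})\Pi_\ell$ is a nonnegative combination of orthogonal projections, hence positive.

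For the first pair of bounds I would write $1-a_{J,\ell}=1-\prod_{j=1}^\ell(1-y_j)$ with $y_j:=\tfrac{2j}{N+j}\in[0,1)$. The endpoint $s=0$ is trivial since $0\le 1-a_{J,\ell}\le 1$, and the endpoint $s=1$ reads $1-a_{J,\ell}\le b$. For $\ell\le 2J$ this is the Weierstrass bound $1-\prod_j(1-y_j)\le\sum_j y_j\le\sum_{j=1}^\ell\tfrac{2j}{N}=b$; for $\ell>2J$ one has $a_{J,\ell}=0$ and $b=\ell(\ell+1)/N>1$, so $1-a_{J,\ell}=1\le b$. The intermediate $s$ then follows from the elementary remark that $1-a_{J,\ell}\le\min(1,b)\le b^{s}$ (if $b\ge 1$ then $b^{s}\ge 1$; if $b\le 1$ then $b^{s}\ge b$).

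The second pair is the substantive part. Since $0\le 1-a_{J,\ell}\le b$ has just been shown, the quantity $1-a_{J,\ell}-b$ lies in $[-b,0]$, which settles $s=0$ at once and, by the same $\min$ argument applied to $b$ and $b^{2}$, reduces the intermediate $s$ to the single inequality $b-(1-a_{J,\ell})\le b^{2}$ (the case $s=1$). The first-order Weierstrass bound only controls $1-a_{J,\ell}$ from above, so to see that it is also close to $b$ from below I would invoke the second-order (Bonferroni) refinement $\prod_j(1-y_j)\le 1-\sum_j y_j+\sum_{i<j}y_iy_j$, valid for $y_j\in[0,1]$ by a one-line induction that also uses $\prod_j(1-y_j)\ge 1-\sum_j y_j$. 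This gives
\begin{equation*}
b-(1-a_{J,\ell})\le\Bigl(b-\sum_{j=1}^\ell y_j\Bigr)+\sum_{i<j}y_iy_j .
\end{equation*}
Here $\sum_{i<j}y_iy_j\le\tfrac12\bigl(\sum_j y_j\bigr)^{2}\le\tfrac12 b^{2}$, while $b-\sum_{j}y_j=\sum_{j=1}^\ell\tfrac{2j^{2}}{N(N+j)}\le\tfrac{2}{N^{2}}\sum_{j=1}^\ell j^{2}=\tfrac{\ell(\ell+1)(2\ell+1)}{3N^{2}}$, which is at most $\tfrac12 b^{2}=\tfrac{\ell^{2}(\ell+1)^{2}}{2N^{2}}$ exactly when $(3\ell+2)(\ell-1)\ge 0$, i.e.\ for all $\ell\ge 1$ (and trivially for $\ell=0$). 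Adding the two pieces yields $b-(1-a_{J,\ell})\le b^{2}$; for $\ell>2J$ one instead has $1-a_{J,\ell}=1$ and $b>1$, so $b-1\le b^{2}$ holds trivially.

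The only genuine obstacle is this last estimate $b-(1-a_{J,\ell})\le b^{2}$: unlike the first-order bound it cannot be read off from the single-term Weierstrass inequality, and requires the second Bonferroni term together with the elementary fact $\tfrac{2}{N^{2}}\sum_{j\le\ell}j^{2}\le\tfrac12 b^{2}$. Everything else is bookkeeping once \Cref{lemma:HJ_OpJ_composition} has reduced the four operator inequalities to scalar comparisons of the eigenvalues $a_{J,\ell}$ against $b=\ell(\ell+1)/N$.
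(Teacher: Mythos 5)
Your proof is correct, and while it shares the paper's overall skeleton---the spectral reduction via \Cref{lemma:HJ_OpJ_composition} to scalar comparisons of the eigenvalues $a_{J,\ell}$ against $b=\ell(\ell+1)/(2J+1)$ on each $\im \Pi_\ell$, followed by the $\min(b,b^2)\le b^{1+s}$ interpolation trick---the substantive middle step is genuinely different. The paper establishes the central estimate $\abs{1-a_{J,\ell}-b}\le b^2$ by rewriting $1-a_{J,\ell}-b$ as a ratio of binomial coefficients and invoking the hockey-stick identity together with an iterated Pascal recursion, arriving at the slightly sharper intermediate bound $\frac{\ell^3(\ell+1)}{(2J+\ell+1)(2J+1)}$; likewise its first-order bound $1-a_{J,\ell}\le\frac{\ell(\ell+1)}{2J+\ell+1}$ (marginally sharper than your $b$) is asserted rather than derived. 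You instead exploit the closed product form $a_{J,\ell}=\prod_{j=1}^{\ell}\frac{N-j}{N+j}$ with $N=2J+1$, which makes monotonicity in $J$ immediate (each factor increases with $N$) and yields both scalar bounds from the Weierstrass and second-order Bonferroni inequalities applied to $y_j=\frac{2j}{N+j}$, together with the elementary computations $b-\sum_j y_j=\sum_{j=1}^{\ell}\frac{2j^2}{N(N+j)}\le\frac{\ell(\ell+1)(2\ell+1)}{3N^2}\le\tfrac12 b^2$ (for $\ell\ge 1$, via $(3\ell+2)(\ell-1)\ge 0$) and $\sum_{i<j}y_iy_j\le\tfrac12 b^2$. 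Your route is more elementary and self-contained---no combinatorial identities, and it supplies a proof of the first-order scalar bound that the paper only states---at the cost of slightly weaker intermediate constants, which is immaterial since both arguments land on the same target $b^2$. All the delicate points (the product formula for $a_{J,\ell}$, the sign information $1-a_{J,\ell}-b\in[-b,0]$ needed to convert the one-sided Bonferroni estimate into an absolute-value bound, the $\ell>2J$ edge cases where $a_{J,\ell}=0$ and $b>1$, and the Pythagorean reduction on both $L^2(S^2)$ and the Hilbert--Schmidt space $B(\irrep_J)$) are handled correctly.
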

\begin{proof}
That the sequence is monotonically increasing and coverges to $\identity$ follows immediately from \Cref{lemma:HJ_OpJ_composition} and the fact that, for fixed $\ell$, the sequence
\begin{align}
    \frac{(2J)!(2J+1)!}{(2J-\ell)!(2J+\ell+1)!},\quad J\in \frac{1}{2}\N,
\end{align}
is monotonically increasing with limit $1$. We have a bound
\begin{equation}
    0\leq 1-\frac{(2J)!(2J+1)!}{(2J-\ell)!(2J+\ell+1)!} \leq \frac{\ell(\ell+1)}{2 J+\ell+1},
\end{equation}
and hence $1-\frac{(2J)!(2J+1)!}{(2J-\ell)!(2J+\ell+1)!} \leq \left( \frac{\ell (\ell+1)}{2J+1} \right)^s$ for every $s \in [0,1]$. Thus
\begin{equation}
\|(\Hus_J \Op_J -1) f\|_2^2 \leq \sum_{\ell=0}^{\infty} \left( \frac{\ell (\ell+1)}{2J+1} \right)^{2s} \| \Pi_\ell f \|_2^2  \leq \frac{1}{(2J+1)^{2s}} \| (-\Delta)^s f \|_2^2,
\end{equation}
and similarly for $\Op_J \Hus_J$. The second bound will follow if we prove:
    \begin{align}
        \abs{1-\frac{(2J)!(2J+1)!}{(2J-\ell)!(2J+\ell+1)!}-\frac{\ell(\ell+1)}{2J+1}}\leq \frac{\ell^2(\ell+1)^2}{(2J+1)^2}.
    \end{align}
    Notice that we may assume that $\ell\geq 1$, since the bound is trivial for $\ell=0$. To see that the above holds, we rewrite the left hand side as:
    \begin{align}
        \binom{2J+\ell+1}{\ell+1}^{-1}\abs{\binom{2J+\ell+1}{\ell+1}-\binom{2J+1}{\ell+1}-\ell\binom{2J+\ell+1}{\ell}}.
    \end{align}
    We make use of the identities:
    \begin{align}
        \binom{2J+\ell+1}{\ell+1}-\binom{2J+1}{\ell+1} = \sum_{n = 1}^{\ell} \binom{2J+n}{\ell}
    \end{align}
    and
    \begin{align}
        \ell \binom{2J+\ell+1}{\ell} &= \left[\binom{2J+\ell}{\ell}+\binom{2J+\ell}{\ell-1}\right]\\
        &\quad+\left[\binom{2J+\ell-1}{\ell}+\binom{2J+\ell-1}{\ell-1}+\binom{2J+\ell}{\ell-1}\right]+\ldots\nonumber\\ 
        &\quad+\left[\binom{2J+1}{\ell}+\binom{2J+1}{\ell-1}+\binom{2J+2}{\ell-1}+\ldots+\binom{2J+\ell}{\ell-1}\right]\nonumber\\
        &=\sum_{n = 1}^{\ell} \left[ \binom{2J+n}{\ell}+n\binom{2J+n}{\ell-1} \right] \nonumber
    \end{align}
    where the first identity is a consequence of the hockey-stick identity, and the second is obtained by iterating the recursive definition of binomial coefficients to each of the $\ell$ terms individually.

    Combining the two, and bounding each term in the resulting sum by the largest one:
    \begin{align}
        \binom{2J+\ell+1}{\ell+1}^{-1}\abs{\sum_{n = 1}^{\ell}n\binom{2J+n}{\ell-1}}&\leq \binom{2J+\ell+1}{\ell+1}^{-1}\ell^2\binom{2J+\ell}{\ell-1}\\
        &\leq \frac{\ell^3(\ell+1)}{(2J+\ell+1)(2J+1)}, \nonumber
    \end{align}
    from which the claim follows.
\end{proof}

\begin{lemma}\label{lemma:HJ_OpJ_as_integral_operator}
We have the formulas
\begin{align}
   \Hus_J \Op_J f (\omega) &= \int_{\SU(2)} F_J(g) f(g^{-1} \omega) \dg,    \\ 
   \Op_J \Hus_J \rho &= \int_{\SU(2)} F_J(g) g \rho g^{-1} \dg,
\end{align}
where $F_J \in L^1(\SU(2))$ is given by
\begin{equation}
    F_J(g) = \frac{(2J)!(2J+1)! 2^{4J}}{(4J)!} \cos^{4J} (\alpha)
\end{equation}
for $g \in \SU(2)$ with eigenvalues $e^{\pm i \alpha}$, $\alpha \in [0, \pi]$.
\end{lemma}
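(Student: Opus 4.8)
The plan is to recognize both compositions as the action $F_J \rhd$ of a single class function $F_J$ and then to identify its radial profile $RF_J$ by a Fourier computation. Since $\Hus_J$ and $\Op_J$ are equivariant, so are $\Hus_J\Op_J$ and $\Op_J\Hus_J$; by \Cref{lemma:HJ_OpJ_composition} each is diagonal in the respective isotypic decomposition with the same eigenvalues $c_\ell := \frac{(2J)!(2J+1)!}{(2J-\ell)!(2J+\ell+1)!}$. Using the projection formula \eqref{eq:projection_character}, $\Pi_\ell = (2\ell+1)\chi_\ell \rhd$, I would write
\[
\sum_{\ell=0}^{2J} c_\ell \Pi_\ell = F_J \rhd, \qquad F_J := \sum_{\ell=0}^{2J} (2\ell+1)\, c_\ell\, \chi_\ell,
\]
a real class function, since the characters of $\SU(2)$ are real. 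Unwinding the definition \eqref{eq:conv_rep} in the two relevant representations — the action on $L^2(S^2)$ where $g$ acts by $f \mapsto f(g^{-1}\cdot)$, and $B(\irrep_J)$ with the conjugation action $\rho \mapsto g\rho g^{-1}$ — then produces exactly the two claimed integral formulas, once $F_J$ is shown to have the stated profile. (On $L^2(S^2)$ one notes that $\chi_\ell \rhd = \tfrac{1}{2\ell+1}\Pi_\ell$, so $F_J\rhd = \sum_{\ell=0}^{2J} c_\ell \Pi_\ell$ exactly, annihilating the components $\irrep_L$ with $L>2J$, in agreement with \Cref{lemma:HJ_OpJ_composition}.)

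It remains to show $RF_J(\alpha) = \frac{(2J)!(2J+1)!\,2^{4J}}{(4J)!}\cos^{4J}(\alpha)$. Inserting $R\chi_\ell(\alpha) = \frac{\sin((2\ell+1)\alpha)}{\sin\alpha}$ and cancelling the common factor $(2J)!(2J+1)!$, this is equivalent to the trigonometric identity
\[
\frac{2^{4J}}{(4J)!}\,\cos^{4J}(\alpha)\,\sin\alpha = \sum_{\ell=0}^{2J} \frac{2\ell+1}{(2J-\ell)!(2J+\ell+1)!}\,\sin\bigl((2\ell+1)\alpha\bigr),
\]
which I would prove by expanding the left-hand side into a finite Fourier-sine series and matching coefficients. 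Writing $n=2J$ and using $\cos^{2n}\alpha = 2^{-2n}\bigl[\binom{2n}{n} + 2\sum_{j=1}^{n}\binom{2n}{n-j}\cos(2j\alpha)\bigr]$ together with $\cos(2j\alpha)\sin\alpha = \tfrac12[\sin((2j+1)\alpha) - \sin((2j-1)\alpha)]$, the left-hand side telescopes in $j$ to
\[
\frac{1}{(2n)!}\sum_{\ell=0}^{n}\Bigl[\binom{2n}{n-\ell} - \binom{2n}{n-\ell-1}\Bigr]\sin\bigl((2\ell+1)\alpha\bigr).
\]

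Coefficient matching then reduces everything to the elementary binomial identity
\[
\frac{1}{(2n)!}\left[\binom{2n}{n-\ell} - \binom{2n}{n-\ell-1}\right] = \frac{2\ell+1}{(n-\ell)!(n+\ell+1)!},
\]
which follows by writing both binomials over common factorials and simplifying $(n+\ell+1)-(n-\ell) = 2\ell+1$. The only genuine obstacle here is organizational rather than conceptual: keeping the telescoping bookkeeping straight — in particular the boundary cases $\ell=0$ and $\ell=n$, where the convention $\binom{2n}{-1}=0$ is used — and confirming that the functions $\sin((2\ell+1)\alpha)$ are linearly independent on $(0,\pi)$, so that matching coefficients is legitimate. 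Everything else is routine algebra.
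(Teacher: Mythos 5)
Your proof is correct and takes essentially the same route as the paper: both identify $\Hus_J\Op_J$ and $\Op_J\Hus_J$ as $F_J \rhd$ with $F_J = \sum_{\ell=0}^{2J}(2\ell+1)\,c_\ell\,\chi_\ell$ via \Cref{lemma:HJ_OpJ_composition} and \eqref{eq:projection_character}, and then pin down $RF_J$ by a finite Fourier computation resting on the same telescoping identity $(n+\ell+1)-(n-\ell)=2\ell+1$. The only difference is cosmetic: the paper expands $R\chi_\ell$ as the Dirichlet kernel $\sum_{m=-\ell}^{\ell}e^{2im\alpha}$, swaps sums and telescopes to reach the binomial expansion of $\cos^{4J}\alpha$, whereas you multiply by $\sin\alpha$ and match sine-series coefficients — the same computation run in the opposite direction.
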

\begin{proof}
By
\Cref{lemma:HJ_OpJ_composition} and \eqref{eq:projection_character}, $\Hus_J \Op_J$ and $\Op_J \Hus_J$ are both given as $F_J \rhd$, where \begin{align}
    F_J = \sum_{\ell =0}^{2J} \frac{(2J)!(2J+1)!}{(2J-\ell)!(2J+1+\ell)!} (2\ell+1) \chi_\ell. 
\end{align}
In particular $F_J$ is a class function. We calculate
\begin{align}
    RF_J (\alpha) &= \sum_{\ell =0}^{2J} \frac{(2J)!(2J+1)!}{(2J-\ell)!(2J+1+\ell)!} (2\ell+1) \sum_{m=-\ell}^\ell e^{2 i m  \alpha} \\
    & = \sum_{m=-2J}^{2J}  e^{2 i m  \alpha} \sum_{\ell = |m|}^{2J} \frac{(2J)!(2J+1)!}{(2J-\ell)!(2J+1+\ell)!}  \nonumber \\
    & = \sum_{m=-2J}^{2J}  e^{2 i m  \alpha} \frac{(2J)!(2J+1)!}{(2J-m)! (2J+m)!} = \frac{(2J)!(2J+1)! 2^{4J}}{(4J)!} \cos^{4J} (\alpha). \nonumber
\end{align}
\end{proof}

\begin{proposition}\label{prop:HJ_OpJ_approximate_inversion_all_p}
There exists a constant $C>0$ such that for every $p \in [1 , \infty]$ we have the bounds:
\begin{align}
    \| (1-\Hus_J \Op_J ) f \|_p & \leq \frac{C}{2J+1} \| (1-4\Delta) f \|_p,  \\
    \| (1- \Op_J \Hus_J) \rho \|_p & \leq \frac{C}{2J+1} \| (1+4\cQ) \rho \|_p.
    \end{align}
\end{proposition}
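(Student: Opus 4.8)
The plan is to exploit the convolution representation of $\Hus_J \Op_J$ and $\Op_J \Hus_J$ from \Cref{lemma:HJ_OpJ_as_integral_operator} and reduce both inequalities to a single $L^1$ estimate on the group. Set $H_J = (\delta_e - F_J) * G^{\mathrm{SO}}$, where $\delta_e$ is the Dirac measure at the identity (so that $\delta_e \rhd = \identity$). Since $\Hus_J \Op_J = F_J \rhd$ and $\Op_J \Hus_J = F_J \rhd$ (acting by rotation of functions, resp.\ by conjugation of operators), while $G^{\mathrm{SO}} \rhd = (1-4\Delta)^{-1}$ on $L^p(S^2)$ and $G^{\mathrm{SO}} \rhd = (1+4\cQ)^{-1}$ on $B(\irrep_J)$, the convolution identity $(\varphi \rhd)(\psi \rhd) = (\varphi * \psi)\rhd$ together with $\delta_e \rhd = \identity$ gives $(1 - \Hus_J \Op_J)(1-4\Delta)^{-1} = H_J \rhd$ and $(1 - \Op_J \Hus_J)(1+4\cQ)^{-1} = H_J \rhd$. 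Writing $f = (1-4\Delta)^{-1}(1-4\Delta)f$ and $\rho = (1+4\cQ)^{-1}(1+4\cQ)\rho$ and invoking \eqref{eq:norm_bound_rhd}, both desired bounds follow at once from
\[
\| H_J \|_1 \leq \frac{C}{2J+1}.
\]

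To estimate $\| H_J \|_1$ I would compute the radial profile $RH_J$ explicitly. Because $\Delta$ is bi-invariant and $(1-4\Delta)G^{\mathrm{SO}}$ is the $g \mapsto -g$ symmetrization $\delta_e^{\mathrm{SO}}$ of $\delta_e$, while $F_J$ is invariant under $g \mapsto -g$ (as $RF_J(\pi-\alpha)=RF_J(\alpha)$), one gets $(1-4\Delta) H_J = \delta_e^{\mathrm{SO}} - F_J$. Setting $w(\alpha) = \sin(\alpha)\, RH_J(\alpha)$ and applying \eqref{eq:class_function_Laplacian}, this becomes the ODE $w''(\alpha) = \sin(\alpha)\, RF_J(\alpha) = c_J \sin(\alpha)\cos^{4J}(\alpha)$ on $(0,\pi)$, with $c_J = \tfrac{(2J)!(2J+1)!\,2^{4J}}{(4J)!}$. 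The symmetrization makes $w$ symmetric about $\alpha = \pi/2$, so $w'(\pi/2) = 0$, and integrating once yields $w'(\alpha) = -\tfrac{c_J}{4J+1}\cos^{4J+1}(\alpha)$. The remaining constant is fixed by $w(0) = \pi/4$, which follows from $\sin(\alpha)\, RG^{\mathrm{SO}}(\alpha) = \pi/4$ together with boundedness near $\alpha = 0$ of $F_J * G^{\mathrm{SO}}$ (a consequence of $F_J \in L^\infty$, $G^{\mathrm{SO}} \in L^1$). This gives the closed form $w(\alpha) = \tfrac{\pi}{4} - \tfrac{c_J}{4J+1}\int_0^\alpha \cos^{4J+1}(\beta)\,\dd\beta$.

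With $w$ in hand, Weyl's integration formula \eqref{eq:Weyl_integration} gives $\| H_J \|_1 = \tfrac{2}{\pi}\int_0^\pi |w(\alpha)|\sin(\alpha)\,\dd\alpha$. Using the symmetry of $w$, the representation $w(\alpha) = (\tfrac{\pi}{4}-B_J) + \tfrac{c_J}{4J+1}\int_\alpha^{\pi/2}\cos^{4J+1}(\beta)\,\dd\beta$ (where $B_J = \tfrac{c_J}{4J+1}W_{4J+1}$ and $W_m = \int_0^{\pi/2}\cos^{m}(\beta)\,\dd\beta$), the triangle inequality and Fubini, one reduces the bound to
\[
\| H_J \|_1 \leq \tfrac{4}{\pi}\Big( \big| \tfrac{\pi}{4} - B_J \big| + B_J\big(1 - \tfrac{W_{4J+2}}{W_{4J+1}}\big)\Big),
\]
since $\int_0^{\pi/2}\big(\int_\alpha^{\pi/2}\cos^{4J+1}(\beta)\,\dd\beta\big)\sin(\alpha)\,\dd\alpha = W_{4J+1}-W_{4J+2}$. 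The elementary Wallis facts $W_{m+1}/W_{m-1} = m/(m+1)$ and monotonicity give $1 - W_{4J+2}/W_{4J+1} \leq \tfrac{1}{4J+2}$, which handles the second term since $B_J$ is bounded; and Stirling's formula (in the form $\binom{4J}{2J} = 2^{4J}(2\pi J)^{-1/2}(1 + O(1/J))$) gives $B_J = \tfrac{\pi}{4} + O(1/J)$, which handles the first.

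The main obstacle is the final $L^1$ estimate, not the reduction. The difficulty is that $w$ is \emph{not} uniformly small: $w(0) = w(\pi) = \pi/4$, so one cannot bound $|w|$ pointwise by $O(1/J)$. The smallness of $\| H_J \|_1$ instead comes from two competing effects, which the above rearrangement disentangles: in the bulk the integral $\int_0^\alpha \cos^{4J+1}(\beta)\,\dd\beta$ has already saturated to $W_{4J+1}$, so $w(\alpha) \approx \tfrac{\pi}{4} - B_J = O(1/J)$, while near the endpoints, where $|w| \approx \pi/4$, the weight $\sin(\alpha)$ vanishes and the transition region has width $O(1/\sqrt{J})$, so it too contributes $O(1/J)$. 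Extracting a genuine $O(1/J)$ (rather than $O(1/\sqrt{J})$) rate requires both the sharp cancellation $B_J = \tfrac{\pi}{4}+O(1/J)$ and the Fubini step that trades the pointwise size of $w$ for the integrable quantity $W_{4J+1}-W_{4J+2}$; this is where the care lies.
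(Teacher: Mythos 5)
Your proof is correct, and its overall architecture coincides with the paper's: the paper likewise uses \Cref{lemma:HJ_OpJ_as_integral_operator} together with \eqref{eq:norm_bound_rhd} to reduce both inequalities to the single estimate $\norm{F_J * G - G^{\mathrm{SO}}}_1 \leq \frac{C}{2J+1}$ (your $H_J$ is exactly $G^{\mathrm{SO}} - F_J * G$, since $F_J * G = F_J * G^{\mathrm{SO}}$ by the $g \mapsto -g$ symmetry of $F_J$), and it likewise obtains $F_J * G$ by solving $(1-4\Delta)(F_J * G) = F_J$ via \eqref{eq:class_function_Laplacian}, then integrates with Weyl's formula \eqref{eq:Weyl_integration}. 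Where you genuinely diverge is in how the radial equation is solved and estimated. The paper writes $R(F_J*G)$ as the finite sum $\frac{2^{8J}(2J)!^4(2J+1)}{(4J+1)!^2}\sum_{l=0}^{2J}\frac{(\frac12)_l}{l!}\cos^{2l}\alpha$, recognizes the binomial series $\frac{1}{\sin\alpha}=\sum_{l\geq 0}\frac{(\frac12)_l}{l!}\cos^{2l}\alpha$, and so splits the error into a constant mismatch times $\frac{1}{\sin\alpha}$ (handled by Stirling) plus an infinite tail, bounded termwise via $\Gamma(l+\frac12)^2 \leq \Gamma(l)\Gamma(l+1)$. You instead integrate the ODE $w''=c_J\sin\alpha\cos^{4J}\alpha$ in closed form, fixing the two constants by the $\alpha\mapsto\pi-\alpha$ symmetry and by $w(0)=\frac{\pi}{4}$ — both correctly justified, and the classical interpretation of the ODE on $(0,\pi)$ is legitimate since $F_J$ is smooth, hence $F_J*G^{\mathrm{SO}}$ is smooth by elliptic regularity — and then estimate by Fubini and Wallis ratios. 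The two splittings are the same cancellation in disguise: your $B_J=\frac{c_J}{4J+1}W_{4J+1}$ equals the paper's prefactor $\frac{2^{8J}(2J)!^4(2J+1)}{(4J+1)!^2}$ exactly, so the Stirling step $\abs{B_J-\frac{\pi}{4}}=O(1/J)$ is literally the paper's first term, while your bound $B_J\left(1-\frac{W_{4J+2}}{W_{4J+1}}\right)\leq \frac{B_J}{4J+3}$ replaces the paper's tail sum $\sum_{\ell\geq 2J+1}\frac{(\frac12)_\ell^2}{\ell!(\ell+1)!}=O(1/J)$. What your route buys is an entirely elementary endgame (only the recursion $W_{m+1}=\frac{m}{m+1}W_{m-1}$, no hypergeometric-type tail) and the transparent picture you articulate of the two competing error sources — bulk saturation versus the $O(1/\sqrt{J})$-wide endpoint layers suppressed by the $\sin\alpha$ weight; what the paper's series form buys is direct access to the Fourier/spectral coefficients of $F_J*G$, consonant with the eigenvalue formulas of \Cref{lemma:HJ_OpJ_composition}.
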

\begin{proof}
The claim is equivalent to $p \to p$ bounds on operators $(1-\Hus_J \Op_J)(1- 4 \Delta)^{-1}$, $(1- \Op_J \Hus_J)(1+ 4 \cQ)^{-1}$. Both these operators are given as $(F_J * G - G^{\mathrm{SO}}) \rhd$, so by \eqref{eq:norm_bound_rhd} it is enough to bound the $L^1$ norm of $F_J * G - G^{\mathrm{SO}}$. 

We find $ F_J*G$ by solving $(1-4\Delta) ( F_J*G)=F_J$ using \eqref{eq:class_function_Laplacian}. We get
\begin{align}
    R(F_J*G)(\alpha) & = \frac{2^{8J}(2J)!^4(2J+1)}{(4J+1)!^2} \sum_{l=0}^{2J} \frac{(\frac{1}{2})_l}{l!} \cos^{2l} \alpha \\
    &= \frac{2^{8J}(2J)!^4(2J+1)}{(4J+1)!^2} \left[ \frac{1}{\sin(\alpha)} -\sum_{l=2J+1}^{\infty} \frac{(\frac{1}{2})_l}{l!} \cos^{2l} \alpha \right]. \nonumber
\end{align}
Next we use \eqref{eq:Weyl_integration} and triangle inequality to bound the $L^1$ norm of $G*F_J - G^{\mathrm{SO}} $:
\begin{align}
    \| F_J*G - G^{\mathrm{SO}} \|_1 \leq & \left| \frac{2^{8J}(2J)!^4(2J+1)}{(4J+1)!^2} - \frac{\pi}{4} \right| \frac{2}{\pi} \int_0^\pi \frac{1}{\sin(\alpha)} \sin^2 (\alpha) \dd \alpha \\
    & + \frac{2^{8J}(2J)!^4(2J+1)}{(4J+1)!^2} \sum_{\ell=2J+1}^\infty \frac{(\frac{1}{2})_l}{l!} \frac{2}{\pi} \int_0^\pi \cos^{2l}(\alpha) \sin^2(\alpha) \dd \alpha. \nonumber
    \end{align}
Computing the integrals we find
\begin{align}
  \| F_J*G - G^{\mathrm{SO}} \|_1 \leq  & \left| \frac{2^{8J}(2J)!^4(2J+1)}{(4J+1)!^2} - \frac{\pi}{4} \right| \frac{4}{\pi} \\
  &+ \frac{2^{8J}(2J)!^4(2J+1)}{(4J+1)!^2} \sum_{\ell=2J+1}^\infty \frac{(\frac{1}{2})_l^2}{l! (l+1)!}. \nonumber
\end{align}
The first term is $O \left( \frac{1}{J} \right)$ by Stirling's formula. We bound the second term:
\begin{align}
    & \sum_{\ell=2J+1}^\infty \frac{(\frac{1}{2})_l^2}{l! (l+1)!} = \frac{1}{\pi} \sum_{\ell=2J+1}^\infty \frac{\Gamma(l + \frac{1}{2})^2}{l! (l+1)!} \leq \frac{1}{\pi} \sum_{\ell=2J+1}^\infty \frac{\Gamma(l) \Gamma(l+1)}{l! (l+1)!} \\
 = & \frac{1}{\pi} \sum_{\ell = 2J+1}^\infty \frac{1}{\ell (\ell+1)} \leq \frac{1}{\pi} \int_{2J}^\infty \frac{1}{\ell (\ell+1)} \dd \ell  = \frac{1}{\pi} \log \left( 1 + \frac{1}{2J} \right)
\end{align}
We conclude that, for some constant $C>0$,
\begin{equation}
    \| F_J*G - G^{\mathrm{SO}} \|_1 \leq \frac{C}{2J+1}.
    \end{equation}
\end{proof}

\subsection{Multiplicativity and traces} \label{sec:mult}

We explain in this section in which sense $\Op_J(fg)\approx \Op_J(f)\Op_J(g)$ for sufficiently regular functions $f, g$. We bound the error of this approximation in Schatten $p$-norms.

$\Op_J(f)$ has eigenvalues of order $1$ if $f$ is of order $1$. Therefore, its Schatten $p$-norm is order $(2J+1)^{\frac{1}{p}}$. We will show that the $p$-norm of $\Op_J(fg)- \Op_J(f)\Op_J(g)$ is bounded by one less power of $J$, i.e. by $(2J+1)^{-1 + \frac{1}{p}}$, for all $f,g$ in a suitable Sobolev space. We~obtain also the next term in the expansion of $\Op_J(f)\Op_J(g)$ for large $J$.

One of our goals is to understand the functional calculus for $\Op_J(f)$, where $f$ is a~real-valued function on the sphere. We show that for an analytic function $\varphi$ and $f$ in a suitable Sobolev space, the operator $\varphi(\Op_J(f))$ can be approximated by $\Op_J(\varphi \circ f)$, with an explicit error bound. Convergence without an explicit rate is obtained for all continuous $\varphi$ and $f$.

In proofs in this Subsection we will use off-diagonal Husimi functions:
\begin{align}
    \Hus_J^{a,b}(A)(\omega) = {}_J\bra{\omega;a}A\ket{\omega;b}_J.
    \label{eq:off-Hus}
\end{align}
There is some ambiguity in this notation: the eigenvectors $\ket{\omega;a}_J$ and $\ket{\omega;b}_J$ of $\omega\cdot S$ are defined at each point $\omega$ only up to a phase factor, and these factors do not cancel in \eqref{eq:off-Hus}. However, the absolute value of $\Hus_J^{a,b}(A)$ as well as the products $\Hus_J^{a,b}(A)\Hus_J^{b,a}(B)$ are well-defined. Using \eqref{eq:Hus_bound} and the Cauchy-Schwarz inequality one can show that
\begin{equation}
    \| \Hus_J^{a,b}(A) \|_p \leq (2J+1)^{- \frac{1}{p}} \| A \|_p.
    \label{eq:off_Hus_bound}
\end{equation}

\begin{lemma}\label{lemma:husimi_of_OpJfOpJg_bound}
    Fix $m\geq 2$ and let $J\in\tfrac{1}{2}\N$ be such that $2J\geq m$. Then:
    \begin{align}
        \Hus_J(\Op_J(f)\Op_J(g)) = \sum_{i=0}^{m-1}\Hus_J^{J,J-i}(\Op_J(f))\Hus_J^{J-i,J}(\Op_J(g)) +\mathcal{E},
    \end{align}
where for each $1\leq p, p_1, p_2\leq \infty$ with $1/p=1/p_1+1/p_2$:
\begin{align}
    \norm{\mathcal{E}}_{p} \leq \frac{C_{m}}{(2J+1)^m}\norm{f}_{m,p_1}\norm{g}_{m,p_2}.
\end{align}
In particular for each $1/p=1/p_1+1/p_2$:
\begin{align}\label{eq:Husimi_product_expansion_first_order}
\Hus_J(\Op_J(f)\Op_J(g)) = fg + \mathcal{E}_1(f,g),
\end{align}
where $\norm{\mathcal{E}_1(f,g)}_p\leq \frac{C}{2J+1}\norm{f}_{2,p_1}\norm{g}_{2,p_2}$, as well as:
\begin{align}\label{eq:Husimi_product_expansion_second_order}
\Hus_J(\Op_J(f)\Op_J(g)) = fg + \frac{1}{2J+1}[\Delta(fg)-\nabla f\cdot \nabla g+i\poisson{f}{g}]+\mathcal{E}_2(f,g),
\end{align}
with $\norm{\mathcal{E}_2(f,g)}_1\leq \frac{C}{(2J+1)^2}\norm{f}_{4,2}\norm{g}_{4,2}$.
\end{lemma}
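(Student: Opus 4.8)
The plan is to insert a resolution of the identity between the two factors. At each fixed $\omega \in S^2$ the eigenvectors of $\omega\cdot\vec S$ form an orthonormal basis, so $\identity = \sum_{a=-J}^J \ketbra{\omega;a}{\omega;a}{J}$. Writing $\ket{\omega}_J = \ket{\omega;J}_J$ and inserting this between $\Op_J(f)$ and $\Op_J(g)$ gives, for every $\omega$,
\[
\Hus_J(\Op_J(f)\Op_J(g))(\omega) = \sum_{i=0}^{2J}\Hus_J^{J,J-i}(\Op_J(f))(\omega)\,\Hus_J^{J-i,J}(\Op_J(g))(\omega).
\]
Truncating at $i=m-1$ defines $\mathcal E$ as the tail $\sum_{i=m}^{2J}$. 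By the triangle and H\"older inequalities (with $1/p=1/p_1+1/p_2$) it then suffices to estimate $\norm{\Hus_J^{J,J-i}(\Op_J(f))}_{p_1}$ and $\norm{\Hus_J^{J-i,J}(\Op_J(g))}_{p_2}$ for $i\geq m$ and to sum the resulting series. Since complex conjugation turns $\Hus_J^{J-i,J}(\Op_J(g))$ into $\Hus_J^{J,J-i}(\Op_J(\overline g))$, a single estimate for $\Hus_J^{J,J-i}(\Op_J(\cdot))$ controls both factors.

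The heart of the argument is an off-diagonal estimate that extracts $m$ derivatives of $f$ and produces decay in $i$. Let $S_-^\omega$ be the lowering operator for $\omega\cdot\vec S$, so that $\ket{\omega;J-i}_J = \sqrt{(2J-i)!/((2J)!\,i!)}\,(S_-^\omega)^i\ket{\omega}_J$ and, crucially, ${}_J\bra{\omega}S_-^\omega = 0$. I expand $\Op_J(f)\,(S_-^\omega)^m$ by the commutator formula $\Op_J(f)B^m = \sum_{k=0}^m(-1)^k\binom{m}{k}B^{m-k}\operatorname{ad}_B^k(\Op_J(f))$ with $B=S_-^\omega$. Sandwiched between ${}_J\bra{\omega}$ and $(S_-^\omega)^{i-m}\ket{\omega}_J$, every term with $m-k\geq 1$ is killed by ${}_J\bra{\omega}S_-^\omega=0$, leaving only $k=m$:
\[
\Hus_J^{J,J-i}(\Op_J(f))(\omega) = (-1)^m\sqrt{\tfrac{(2J-i)!}{(2J)!\,i!}}\;{}_J\bra{\omega}\bigl(\operatorname{ad}_{S_-^\omega}^m\Op_J(f)\bigr)(S_-^\omega)^{i-m}\ket{\omega}_J.
\]
By equivariance of $\Op_J$ each commutator $[S_j,\Op_J(f)]$ equals $\Op_J$ of a first-order differential operator applied to $f$; writing $S_-^\omega$ in the fixed basis $S_x,S_y,S_z$ with bounded coefficients $a_j(\omega)$, the $m$-fold commutator becomes $\sum_{\vec\jmath}a_{\vec\jmath}(\omega)\Op_J(\mathcal D_{\vec\jmath}f)$, a sum of $3^m$ terms in which each $\mathcal D_{\vec\jmath}f$ is a fixed, $\omega$-independent order-$m$ derivative combination with $\norm{\mathcal D_{\vec\jmath}f}_p\leq C\norm{f}_{m,p}$ and $\norm{a_{\vec\jmath}}_\infty\leq C$. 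This device is what keeps the quantized symbols independent of the evaluation point.

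Now collapse $(S_-^\omega)^{i-m}\ket{\omega}_J = \sqrt{(2J)!(i-m)!/(2J-i+m)!}\,\ket{\omega;J-(i-m)}_J$ and recognise each remaining pairing as $\Hus_J^{J,J-(i-m)}(\Op_J(\mathcal D_{\vec\jmath}f))(\omega)$ with the fixed function $\mathcal D_{\vec\jmath}f$. Combining $\norm{\Hus_J^{a,b}(A)}_p\leq(2J+1)^{-1/p}\norm{A}_p$ from \eqref{eq:off_Hus_bound} with $\norm{\Op_J(h)}_p\leq(2J+1)^{1/p}\norm{h}_p$ from \Cref{lem:oph_norms} gives $\norm{\Hus_J^{J,J-(i-m)}(\Op_J(h))}_p\leq\norm{h}_p$, so pulling out $\norm{a_{\vec\jmath}}_\infty$ and summing the $3^m$ terms yields
\[
\norm{\Hus_J^{J,J-i}(\Op_J(f))}_p \leq C_m\sqrt{\tfrac{(2J-i)!\,(i-m)!}{i!\,(2J-i+m)!}}\,\norm{f}_{m,p} \leq \frac{C_m}{(i-m+1)^{m/2}(2J-i+1)^{m/2}}\norm{f}_{m,p},
\]
using $i!/(i-m)!\geq(i-m+1)^m$ and $(2J-i+m)!/(2J-i)!\geq(2J-i+1)^m$. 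Feeding this and its analogue for $g$ into the H\"older bound, the tail is controlled by $\sum_{i=m}^{2J}(i-m+1)^{-m}(2J-i+1)^{-m}$, which for $m\geq 2$ is dominated by its two endpoints and is $\leq C_m(2J+1)^{-m}$; this gives the claimed bound on $\norm{\mathcal E}_p$.

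For the expansions I compute the retained terms. For \eqref{eq:Husimi_product_expansion_first_order} take $m=2$: the $i=0$ term $\Hus_J\Op_J(f)\cdot\Hus_J\Op_J(g)$ equals $fg+O((2J+1)^{-1})$ by \Cref{prop:HJ_OpJ_approximate_inversion_2_norm}, while the $i=1$ term is $O((2J+1)^{-1})$ by the commutator identity (one derivative of each factor); both go into $\mathcal E_1$. For \eqref{eq:Husimi_product_expansion_second_order} take $m\geq 3$, expand the $i=0$ term to second order via $\Hus_J\Op_J = 1+\Delta/(2J+1)+O((2J+1)^{-2})$ from \Cref{prop:HJ_OpJ_approximate_inversion_2_norm} (this is where $\norm{f}_{4,2}$ enters), keep the $i=1$ term to leading order, and absorb $i\geq 2$ into $\mathcal E_2$; rewriting the first-order derivative data in invariant form produces exactly $\Delta(fg)-\nabla f\cdot\nabla g+i\poisson{f}{g}$. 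The main obstacle I anticipate is precisely the off-diagonal estimate of the middle paragraphs: extracting exactly $m$ derivatives while arranging the commutators so that \eqref{eq:off_Hus_bound} applies to fixed symbols, and tracking the binomial prefactor sharply enough that the tail sums to $(2J+1)^{-m}$. A second, purely computational, difficulty is verifying that the combined $i=0,1$ contributions assemble into the invariant expression $\Delta(fg)-\nabla f\cdot\nabla g+i\poisson{f}{g}$.
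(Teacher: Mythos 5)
Your proposal is correct, and the mechanism of your central tail estimate is genuinely different from the paper's. Both arguments start from the same resolution-of-identity expansion, but the paper lifts everything to $\SU(2)$, where the phase ambiguity of $\Hus_J^{a,b}$ disappears globally, and runs the ladder with left-invariant vector fields: iterating \eqref{eq:left_inv_vf_raising_and_lowering} expresses the deep off-diagonal terms through $(\xi_-^L)^m$ and $(\xi_+^L)^m$ applied to shallower ones, after which $\xi^L$ is rewritten in terms of right-invariant fields, which act by commutators, $\xi_i^R\tilde\Hus^{a,b}_J(T)=\tilde\Hus^{a,b}_J([S_i,T])$, i.e.\ as derivatives of the symbol. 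You instead run the ladder pointwise on the sphere: the highest-weight annihilation ${}_J\bra{\omega}S_-^\omega=0$ together with the commutator expansion of $\Op_J(f)(S_-^\omega)^m$ isolates $\operatorname{ad}_{S_-^\omega}^m(\Op_J(f))$, which equivariance converts into $\omega$-independent order-$m$ derivatives of $f$ with bounded scalar coefficients. The combinatorics agree exactly: substituting $i\mapsto i-m$, the square of your prefactor $\sqrt{(2J-i)!\,(i-m)!/(i!\,(2J-i+m)!)}$ is precisely the paper's coefficient $(2J-i-m)!\,i!/((2J-i)!\,(i+m)!)$ (the paper estimates the product of the two factors jointly, you estimate each factor separately and multiply), and your splitting of $\sum_i(i-m+1)^{-m}(2J-i+1)^{-m}$ yields the same $C_m(2J+1)^{-m}$; note here that $2J-m+2\geq (2J+1)/m$ under the hypothesis $2J\geq m$, so the bound really is $O_m((2J+1)^{-m})$ uniformly. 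What the lift to the group buys the paper is a clean resolution of the phase ambiguity; in your pointwise version you should state that the frame defining $S_-^\omega$ (hence the coefficients $a_j(\omega)$ and the phases of $\ket{\omega;J-i}_J$) can be chosen measurably in $\omega$ and that the quantities you ultimately estimate are frame-independent absolute values, so \eqref{eq:off_Hus_bound} applies --- a technicality, not a gap.

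Two small repairs. First, for \eqref{eq:Husimi_product_expansion_first_order} at general $p$ the $i=0$ term needs the $L^{p_1}$ estimate $\norm{(1-\Hus_J\Op_J)f}_{p_1}\leq C(2J+1)^{-1}\norm{f}_{2,p_1}$, i.e.\ \Cref{prop:HJ_OpJ_approximate_inversion_all_p}, not the $L^2$ statement of \Cref{prop:HJ_OpJ_approximate_inversion_2_norm} that you cite; the latter, in its refined form with the $\Delta/(2J+1)$ correction, is the right tool only for the second-order expansion, where $p=1$ and $p_1=p_2=2$. Second, for \eqref{eq:Husimi_product_expansion_second_order} your ``take $m\geq 3$, absorb $i\geq 2$'' is internally inconsistent; truncation at $m=2$ already suffices, since the $m=2$ tail is $O((2J+1)^{-2})$ with two derivatives of each function, which is how the paper proceeds. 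Finally, the ``purely computational'' identification of the $i=1$ term that you flag but do not carry out is closed in the paper by evaluating at the north pole, where $\xi^L=\xi^R$, giving $-\tfrac{1}{2J}L_-(\Hus_J\Op_Jf)\,L_+(\Hus_J\Op_Jg)$, using $(L_-F\cdot L_+G)(\uparrow)=(-\nabla F\cdot\nabla G-i\poisson{F}{G})(\uparrow)$, and then invoking rotational symmetry; exactly this device, combined with $f\Delta g+g\Delta f=\Delta(fg)-2\nabla f\cdot\nabla g$ for the $i=0$ term, completes your assembly of $\Delta(fg)-\nabla f\cdot\nabla g+i\poisson{f}{g}$.
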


\begin{proof}
For given $f, g$ we can expand using the resolution of identity in terms of an eigenbasis of $\omega \cdot S$:
\begin{align}\label{eqtn:HJ_multiplicativity_expansion}
    \Hus_J(\Op_J(f)\Op_J(g)) = \sum_{i=0}^{2J} \Hus_J^{J,J-i}(\Op_J(f))\Hus_J^{J-i,J}(\Op_J(g)).
\end{align} 
The left hand side of \eqref{eqtn:HJ_multiplicativity_expansion} can be lifted to a function on $SU(2)$ as $\tilde \Hus_J(A) = \Hus_J(A)\circ q$.
We define $\tilde\Hus_J^{a,b}$ as:
\begin{align}\label{eq:off_diagonal_husimi_group}
\tilde\Hus_J^{a,b}(A)(h) = {}_J\bra{a}h^{-1}Ah\ket{b}_J,\quad h\in \SU(2).
\end{align}
Thus $\tilde\Hus_J^{a,b}$ is a lift of $\Hus_J^{a,b}$, but without a phase ambiguity. In particular,
\begin{align}
        \tilde \Hus_J^{J,J-i}(A)\tilde \Hus_J^{J-i,J}(B) &= \Hus_J^{J,J-i}(A) \Hus_J^{J-i,J}(B)\circ q.
\end{align}
Thus \eqref{eqtn:HJ_multiplicativity_expansion} extends to $\SU(2)$ as:
\begin{align}
        \tilde\Hus_J(\Op_J(f)\Op_J(g)) = \sum_{i=0}^{2J} \tilde\Hus_J^{J,J-i}(\Op_J(f))\tilde\Hus_J^{J-i,J}(\Op_J(g)).
\end{align}
Recall that $\iset{\xi_i^L}$ are the left-invariant vector fields on $SU(2)$ corresponding to $\iset{S_i}$. One can verify the identities
\begin{align}
    \xi_i^L ({}_J\bra{\varphi}h^{-1}Th\ket{\psi}_J) = -{}_J\bra{\varphi}h^{-1}ThS_i\ket{\psi}_J+{}_J\bra{\varphi}S_ih^{-1}Th\ket{\psi}_J,
\end{align}
from which it follows that:
\begin{align}\label{eq:left_inv_vf_raising_and_lowering}
    \xi_-^L \tilde \Hus^{J,J-i}_J(T)(h) = -\sqrt{(2J-i)(i+1)}\tilde \Hus^{J,J-i-1}_J(T)(h),\\
    \xi_+^L \tilde \Hus^{J-i,J}_J(T)(h) = \sqrt{(2J-i)(i+1)}\tilde \Hus^{J-i-1,J}_J(T)(h).\nonumber
\end{align}
Since for any $f \in L^p(S^2)$, $p \in [1 , \infty]$, we have $\| f \|_p = \| f \circ q \|_p$, an iteration of the above identities gives the bound:
\begin{align}
\bigg\|\sum_{i=m}^{2J}\!\Hus^{J,J-i}_J\!&(\Op_J(f))\!\Hus^{J-i,J}_J\!(\Op_J(g))\bigg\|_p\!\leq \!\sum_{i=m}^{2J}\norm{\tilde\Hus^{J,J-i}_J\!(\Op_J(f))\tilde\Hus^{J-i,J}_J\!(\Op_J(g))}_p\\
&=\sum_{i=0}^{2J-m}\!\frac{(2J\!-\!i\!-\!m)!i!}{(2J\!-\!i)!(i\!+\!m)!}\norm{(\xi_-^L)^m\tilde\Hus^{J,J-i}_J\!(\Op_J(f))(\xi_+^L)^m\tilde\Hus^{J-i,J}_J\!(\Op_J(g))}_p.\nonumber
\end{align}
We bound each norm using the Hölder inequality for $1/p=1/p_1+1/p_2$ by:
\begin{align}
\norm{(\xi_-^L)^m\tilde\Hus^{J,J-i}_J\!(\Op_J(f))}_{p_1}\norm{(\xi_+^L)^m\tilde\Hus^{J-i,J}_J\!(\Op_J(g))}_{p_2},
\label{eq:Hus_exp_Holder}
\end{align}
where the norms are still taken over $\SU(2)$. We have $\xi_i^R \widetilde \Hus_J^{a,b}(T) =  \widetilde \Hus_J^{a,b}([S_i,T])$. One can express $\xi_i^L$ as linear combinations of right invariant vector fields with coefficients which are smooth functions on $\SU(2)$. Therefore, for example the first factor in \eqref{eq:Hus_exp_Holder} can be bounded by a sum of terms of the form
\begin{equation}
    c \| \xi_{i_1}^R \cdots \xi_{i_\ell}^R \tilde\Hus^{J,J-i}_J\!(\Op_J(f)) \|_{p_1} = c \| \tilde\Hus^{J,J-i}_J\!(\Op_J( L_{i_1} \cdots L_{i_\ell}f)) \|_{p_1},
\end{equation}
with $\ell \leq m$. Hence using \eqref{eq:Op_bound} and the analogous bound of \eqref{eq:off_Hus_bound} for $\widetilde \Hus^{a,b}$ we obtain
\begin{equation}
    \norm{(\xi_-^L)^m\tilde\Hus^{J,J-i}_J\!(\Op_J(f))}_{p_1} \leq C_m \| f \|_{m,p_1},
\end{equation}
and similarly for the second factor in \eqref{eq:Hus_exp_Holder}. 
We obtain the bound:
\begin{align}
& \bigg\|\sum_{i=m}^{2J}\!\Hus^{J,J-i}_J\!(\Op_J(f))\!\Hus^{J-i,J}_J\!(\Op_J(g))\bigg\|_p  \\
\leq & C_m^2\,\norm{f}_{m,p_1}\norm{g}_{m,p_2}\sum_{i=0}^{2J-m}\frac{(2J-i-m)!i!}{(2J-i)!(i+m)!}. \nonumber
\end{align}
Changing indices we upper bound the sum by:
\begin{align}
    & \sum_{i=0}^{2J-m}\frac{(2J-i-m)!i!}{(2J-i)!(i+m)!}\leq 2\sum_{i=0}^{\lceil J-\frac{m}{2}\rceil}\frac{(2J-i-m)!i!}{(2J-i)!(i+m)!} \\
    & \leq \frac{2}{(J\!-\!1)\!\ldots\! (J\!-\!m)}\sum_{i=0}^{\lceil J-\tfrac{m}{2}\rceil} \frac{i!}{(i+m)!}. \nonumber
\end{align}
Bounding the remaining finite sum with the series $\sum_{i=0}^\infty$, which converges, we complete the proof of the first claim.

We next want to prove the expansions \eqref{eq:Husimi_product_expansion_first_order} and \eqref{eq:Husimi_product_expansion_second_order}. We write:
\begin{align}
\label{eq:random_label_17629}
    \Hus_J(\Op_J(f)\Op_J(g)) &= \Hus_J(\Op_J(f))\Hus_J(\Op_J(g))+\Hus_J^{J,J-1}(\Op_J(f))\Hus_J^{J-1,J}(\Op_J(g))  \nonumber \\&\qquad+ \sum_{m=2}^{2J}\Hus_J^{J,J-m}(\Op_J(f))\Hus_J^{J-m,J}(\Op_J(g)).
\end{align}
We have just proved that the last sum satisfies
\begin{align}
    \norm{\sum_{m=2}^{2J}\Hus_J^{J,J-m}(\Op_J(f))\Hus_J^{J-m,J}(\Op_J(g))}_p\leq \frac{C}{(2J+1)^2}\norm{f}_{2,p_1}\norm{g}_{2,p_2}.
\end{align}
To get $\eqref{eq:Husimi_product_expansion_first_order}$, by an analogous argument,
\begin{align}
    \norm{\Hus_J^{J,J-1}(\Op_J(f))\Hus_J^{J-1,J}(\Op_J(g))}_{p}\leq \frac{C}{2J+1}\norm{f}_{1,p_1}\norm{g}_{1,p_2}.
\end{align}
Left is the first term of \eqref{eq:random_label_17629}, for which we write $\Hus_J\Op_J = 1+ (\Hus_J\Op_J-1)$:
\begin{align}
    \Hus_J(\Op_J(f))\Hus_J(\Op_J(g)) &= fg + [f(\Hus_J\Op_J-1)g+g(\Hus_J\Op_J-1)f]\\
    &\qquad+(\Hus_J\Op_J-1)f(\Hus_J\Op_J-1)g \nonumber.
\end{align}
Since $\norm{(\Hus_J\Op_J-1)f}_{p_1}\leq C(2J+1)^{-1}\norm{f}_{2,p_1}$ by \Cref{prop:HJ_OpJ_approximate_inversion_all_p}, and likewise for $g$, the claimed expansion follows.

In order to prove the expansion of \eqref{eq:Husimi_product_expansion_second_order} we write $\Hus_J\Op_J = 1+ \frac{\Delta}{2J+1}+\mathcal{E'}$ and expand the first term of \eqref{eq:random_label_17629} as:
\begin{align}
\Hus_J(\Op_J(f))\Hus_J(\Op_J(g)) = & fg + \frac{1}{2J+1}(g\Delta f+f\Delta g)+\frac{1}{(2J+1)^2}(\Delta f\Delta g) \\&+ \mathcal{E}'f\left(g+\frac{\Delta g}{2J+1}\right)+\mathcal{E}'g\left(f+\frac{\Delta f}{2J+1}\right)+ \mathcal{E}'f\mathcal{E}'g.\nonumber
\end{align}
Using the bound $\norm{\mathcal{E}'(f)}_2\leq \frac{1}{2J+1}\norm{(-\Delta)^2f}_{2}$ of \Cref{prop:HJ_OpJ_approximate_inversion_2_norm}, all but the first two terms are bounded as needed. 

We also need to expand the second term of \eqref{eq:random_label_17629}. We start by recognizing:
\begin{align}
 \Hus_J^{J,J-1}(\Op_J(f))\Hus_J^{J-1,J}(\Op_J(g))(\uparrow)& = -\frac{1}{2J} \xi_-^L \widetilde \Hus_J \Op_J(f) (e) \xi_+^L \widetilde \Hus_J \Op_J(g)(e) \label{eq:Husimi_shifted_north} \\
 = & -\frac{1}{2J} L_- \Hus_J\Op_J( f)(\uparrow) L_+ \Hus_J\Op_J( g)(\uparrow), \nonumber
\end{align}
where we used the fact that $\xi_i^L = \xi_i^R$ at the neutral element. Next, we note that for two functions $F,G$ on $S^2$ we have 
\begin{align}
     (L_- F \cdot  L_+ G)(\uparrow)  = (- \nabla F\cdot \nabla G - i\poisson{F}{G} )(\uparrow).
\end{align}
Using this in \eqref{eq:Husimi_shifted_north}, with $F = \Hus_J \Op_J f$ and $G = \Hus_J \Op_J g$, we obtain 
\begin{align}
& \Hus_J^{J,J-1}(\Op_J(f))\Hus_J^{J-1,J}(\Op_J(g))(\uparrow) \label{eq:Husimi_shifted_north2} \\
=&  \frac{1}{2J} ( \nabla \Hus_J \Op_J f \cdot \nabla \Hus_J \Op_J g+  i \poisson{\Hus_J \Op_J f}{ \Hus_J \Op_J g})(\uparrow).   \nonumber
\end{align}
By the symmetry, \eqref{eq:Husimi_shifted_north2} holds everywhere on $S^2$, not only at the north pole. Now we use \Cref{prop:HJ_OpJ_approximate_inversion_all_p} on each term and replace $\frac{1}{2J}$ by $\frac{1}{2J+1}$ at the cost of introducing an error term of order $(2J+1)^{-2}$. We get:
\begin{align}
    \Bigg\|\frac{1}{2J}\Hus_J^{J,J-1}(\Op_J(f))\Hus_J^{J-1,J}(\Op_J(g))&-\frac{1}{2J\!+\!1}(\nabla f \!\cdot\! \nabla g +i\poisson{f}{g})\Bigg\|_1\\
    &\leq\! \frac{C}{(2J\!+\!1)^2}\norm{f}_{3,2}\norm{g}_{3,2}. \nonumber
\end{align}
Combining everything, we have:
\begin{align}
    \Hus_J(\Op_J(f)\Op_J(g)) = fg + \frac{1}{2J+1}(f\Delta g + g\Delta f +\nabla f \cdot \nabla g + i\poisson{f}{g}) + \mathcal{E}_2(f,g).
\end{align}
Of course $f\Delta g + g\Delta f + \nabla f \cdot \nabla g = \Delta(fg)-\nabla f\cdot \nabla g$.
\end{proof}

\begin{proposition}\label{proposition:multiplicativity_OpJ_zeroth_order_p_bound}
If $J\in\tfrac{1}{2}\N$ and $1\leq p,p_1,p_2 \leq \infty$ with $1/p=1/p_1+1/p_2$, then:
\begin{align}
    \norm{\Op_J(fg)-\Op_J(f)\Op_J(g)}_p\leq \frac{C}{2J+1}\norm{\Op_J}_{p\to p}\norm{f}_{2,p_1}\norm{g}_{2,p_2},
    \label{eq:mult_bound}
\end{align}
for all $f\in W^{2,p_1}(S^2)$, $g\in W^{2,p_2}(S^2)$. In particular, if $p_1 \in [1 , \infty)$ and $f \in L^{p_1}(S^2)$ or $p_1 = \infty$ and $f \in C(S^2)$, and likewise for $p_2$ and $g$, then
\begin{equation}
    (2J+1)^{- \frac{1}{p}} \| \Op_J(fg) - \Op_J(f) \Op_J(g) \|_p \to 0.
    \label{eq:mult_conv}
\end{equation}
\end{proposition}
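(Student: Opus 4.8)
The plan is to interpolate between $\Op_J(f)\Op_J(g)$ and $\Op_J(fg)$ through the operator $\Op_J\Hus_J(\Op_J(f)\Op_J(g))$, writing
\begin{align*}
\Op_J(f)\Op_J(g)-\Op_J(fg) &= \bigl(1-\Op_J\Hus_J\bigr)\bigl(\Op_J(f)\Op_J(g)\bigr) \\
&\quad + \Op_J\bigl(\Hus_J(\Op_J(f)\Op_J(g))-fg\bigr).
\end{align*}
The second term is immediate: by \Cref{lemma:husimi_of_OpJfOpJg_bound} the function $\mathcal{E}_1 = \Hus_J(\Op_J(f)\Op_J(g))-fg$ satisfies $\norm{\mathcal{E}_1}_p \leq \frac{C}{2J+1}\norm{f}_{2,p_1}\norm{g}_{2,p_2}$, so by the definition of the operator norm $\norm{\Op_J(\mathcal{E}_1)}_p \leq \norm{\Op_J}_{p\to p}\norm{\mathcal{E}_1}_p$ already gives a contribution of the claimed form.

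For the first term I would invoke \Cref{prop:HJ_OpJ_approximate_inversion_all_p}, which reduces the task to estimating $\norm{(1+4\cQ)(\Op_J(f)\Op_J(g))}_p$. The key ingredient here is a Leibniz rule for the Casimir acting by iterated commutators: expanding $[S_i,[S_i,AB]]$ yields
\[
\cQ(AB) = (\cQ A)B + A(\cQ B) + 2\sum_i [S_i,A][S_i,B].
\]
Because $\Op_J$ is equivariant, it intertwines the infinitesimal generators and the Casimir, so $[S_i,\Op_J(f)] = \Op_J(L_i f)$ and $\cQ\,\Op_J(f) = \Op_J(-\Delta f)$. Substituting $A=\Op_J(f)$, $B=\Op_J(g)$ turns every summand into a product of two quantizations, which I bound by Hölder's inequality for Schatten norms together with \Cref{lem:oph_norms}, each product having $p$-norm at most $(2J+1)^{1/p}$ times a product of $L^{p_1}$ and $L^{p_2}$ norms. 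Using the pointwise identity $\sum_i |L_i f|^2 = |\nabla f|^2$ (so $\norm{L_i f}_{p_1}\leq\norm{\nabla f}_{p_1}\leq\norm{f}_{2,p_1}$) for the commutator terms and the Laplacian terms, one obtains $\norm{(1+4\cQ)(\Op_J(f)\Op_J(g))}_p \leq C(2J+1)^{1/p}\norm{f}_{2,p_1}\norm{g}_{2,p_2} = C\norm{\Op_J}_{p\to p}\norm{f}_{2,p_1}\norm{g}_{2,p_2}$, recalling $\norm{\Op_J}_{p\to p}=(2J+1)^{1/p}$. Combining the two terms proves \eqref{eq:mult_bound}.

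The main obstacle is exactly this Casimir estimate: a naive application of \eqref{eq:Casimir_bound} would cost a factor of order $J^2$ and overshoot the target by two powers of $J$, so it is essential to exploit the Leibniz structure and convert it, via equivariance, into genuine derivatives of $f$ and $g$ of order at most two. Everything else is routine bookkeeping with Hölder's inequality.

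Finally, the convergence statement \eqref{eq:mult_conv} would follow by a standard density argument. I would choose smooth $f_n\to f$ in $L^{p_1}$ and $g_n\to g$ in $L^{p_2}$ (in $C(S^2)$ when $p_i=\infty$) and split
\[
\Op_J(fg)-\Op_J(f)\Op_J(g) = \Op_J(fg-f_ng_n) + \bigl[\Op_J(f_ng_n)-\Op_J(f_n)\Op_J(g_n)\bigr] + \bigl[\Op_J(f_n)\Op_J(g_n)-\Op_J(f)\Op_J(g)\bigr].
\]
After multiplying by $(2J+1)^{-1/p}$, \Cref{lem:oph_norms} bounds the first and third terms by $\norm{fg-f_ng_n}_p$ and by $\norm{f-f_n}_{p_1}\norm{g_n}_{p_2}+\norm{f}_{p_1}\norm{g_n-g}_{p_2}$ respectively, both uniformly in $J$ and both $\to 0$ as $n\to\infty$, while the middle (smooth) term tends to $0$ as $J\to\infty$ for each fixed $n$ by \eqref{eq:mult_bound}. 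An $\varepsilon$-argument (first fix $n$ large, then send $J\to\infty$) then gives the claim.
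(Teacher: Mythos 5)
Your proof is correct and follows essentially the same route as the paper's: the identical decomposition through $\Op_J\Hus_J(\Op_J(f)\Op_J(g))$, with \Cref{lemma:husimi_of_OpJfOpJg_bound} handling the main term and the Leibniz expansion of $(1+4\cQ)$ on the product, combined with \Cref{prop:HJ_OpJ_approximate_inversion_all_p}, \Cref{lem:oph_norms} and H\"older's inequality, handling the remainder. Your density argument for \eqref{eq:mult_conv} simply spells out what the paper calls a standard argument, and it is carried out correctly.
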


\begin{proof}
    This is almost immediate from the previous discussion. Since:
    \begin{align}\label{eqtn:OpJfOpJg_two_term_expansion}
        \Op_J(f)\Op_J(g) = (\Op_J\Hus_J)(\Op_J(f)\Op_J(g)) + (1-\Op_J\Hus_J)(\Op_J(f)\Op_J(g)),    
    \end{align}
    we have by \Cref{lemma:husimi_of_OpJfOpJg_bound} the bound:
    \begin{align}
        \norm{\Op_J\Hus_J(\Op_J(f)\Op_J(g))-\Op_J(fg)}_p\leq \frac{C}{2J+1}\norm{\Op_J}_{p\to p}\norm{f}_{2,p_1}\norm{g}_{2,p_2}.
    \end{align}
    We are done if we show that the second term of $\eqref{eqtn:OpJfOpJg_two_term_expansion}$ is bounded by a similar quantity. A computation using the Leibniz property of commutators shows that:
    \begin{align}
        (1+4\mathcal{Q})(\Op_J(f)\Op_J(g)) = & \Op_J(f)\Op_J(g) - 4\Op_J(\Delta f)\Op_J(g)-4\Op_J(f)\Op_J(\Delta g) \nonumber \\& + 8\sum_{i=1}^3 \Op_J(L_i f)\Op_J(L_i g),
    \end{align}
    so that, by \Cref{prop:HJ_OpJ_approximate_inversion_all_p} and an application of the Hölder inequality:
    \begin{align}
        \norm{(1-\Hus_J\Op_J)(\Op_J(f)\Op_J(g))}_p&\leq \frac{C}{2J+1}\norm{(1+4\mathcal{Q})(\Op_J(f)\Op_J(g))}_p\\
        &\leq \frac{C}{2J+1}\norm{\Op_J}_{p_1\to p_1}\norm{\Op_J}_{p_2\to p_2}\norm{f}_{2,p_1}\norm{g}_{2,p_2} \nonumber \\
        &=\frac{C}{2J+1}\norm{\Op_J}_{p\to p}\norm{f}_{2,p_1}\norm{g}_{2,p_2}. \nonumber 
    \end{align}
    This concludes the proof of \eqref{eq:mult_bound}. Then \eqref{eq:mult_conv} follows by a standard density argument. 
\end{proof}

\begin{proposition}\label{lemma:multiplicativity_holder_continuous_functions}
Let $f\in C^{0,\alpha}(S^2)$, $g\in C^{0,\beta}(S^2)$ be Hölder continuous, $\alpha,\beta \in [0,1]$. Then
\begin{align}
    \bigg\|\Op_J(fg)- \frac{1}{2} (\Op_J(f)\Op_J(g) &+ \Op_J(g)\Op_J(f))\bigg\|_1 \\
  & \leq C {(2J+1)^{1-\alpha/2-\beta/2}}\norm{f}_{C_h^{0,\alpha}}\norm{g}_{C^{0,\alpha}_h}.  \nonumber
\end{align}
\end{proposition}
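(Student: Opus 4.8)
The plan is to write all three operators in the claim as double integrals against the rank-one kernel $\braket{\omega}{\omega'}\ketbra{\omega}{\omega'}{J}$ and to exploit a cancellation that is specific to the \emph{symmetrized} product. Inserting the resolution of identity \eqref{eq:coherent_resolution_of_1} to the right of $\Op_J(f)$ gives
\begin{equation*}
\Op_J(f)\Op_J(g)=(2J+1)^2\int_{S^2}\int_{S^2}f(\omega)g(\omega')\braket{\omega}{\omega'}\ketbra{\omega}{\omega'}{J}\domega\domega',
\end{equation*}
and the analogous formula with $f,g$ swapped holds for $\Op_J(g)\Op_J(f)$. For $\Op_J(fg)$ I would insert the resolution of identity once on the right and once on the left and average the two results, producing the same kernel with scalar coefficient $\tfrac12\bigl[(fg)(\omega)+(fg)(\omega')\bigr]$. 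Subtracting, the coefficient multiplying $\braket{\omega}{\omega'}\ketbra{\omega}{\omega'}{J}$ in the operator
\begin{equation*}
T\coloneqq\tfrac12\bigl(\Op_J(f)\Op_J(g)+\Op_J(g)\Op_J(f)\bigr)-\Op_J(fg)
\end{equation*}
collapses, after a short algebraic simplification, to exactly $-\tfrac12\bigl[f(\omega')-f(\omega)\bigr]\bigl[g(\omega')-g(\omega)\bigr]$. This is the crucial point: in the symmetric combination the two \enquote{first order} contributions $f(\omega)[g(\omega')-g(\omega)]$ and $g(\omega)[f(\omega')-f(\omega)]$ cancel, leaving a product of two increments, each controllable by a Hölder seminorm. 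For the unsymmetrized product only one increment survives, which is precisely why $\tfrac12(\Op_J(f)\Op_J(g)+\Op_J(g)\Op_J(f))$, and not $\Op_J(f)\Op_J(g)$, appears in the statement.

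Next I would bound $\norm{T}_1$ by the triangle inequality for the trace norm, applied to the (Bochner) integral. Since $\norm{\ketbra{\omega}{\omega'}{J}}_1=1$ for unit coherent states, this yields
\begin{equation*}
\norm{T}_1\leq\frac{(2J+1)^2}{2}\int_{S^2}\int_{S^2}\abs{f(\omega')-f(\omega)}\,\abs{g(\omega')-g(\omega)}\,\abs{\braket{\omega}{\omega'}}\domega\domega'.
\end{equation*}
Inserting the Hölder bounds $\abs{f(\omega')-f(\omega)}\leq\norm{f}_{C_h^{0,\alpha}}\iabs{\omega-\omega'}^{\alpha}$ and $\abs{g(\omega')-g(\omega)}\leq\norm{g}_{C_h^{0,\beta}}\iabs{\omega-\omega'}^{\beta}$, together with $\abs{\braket{\omega}{\omega'}}=\bigl(\tfrac{1+\omega\cdot\omega'}{2}\bigr)^{J}$, reduces the problem to estimating the scalar integral $\int_{S^2}\int_{S^2}\iabs{\omega-\omega'}^{\alpha+\beta}\bigl(\tfrac{1+\omega\cdot\omega'}{2}\bigr)^{J}\domega\domega'$.

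By rotational invariance the inner integral is independent of $\omega$, so it suffices to evaluate it with $\omega$ at the north pole. Writing $\omega\cdot\omega'=\cos\theta$, using $\iabs{\omega-\omega'}=2\sin(\theta/2)$ and the normalized area element, it becomes a one-dimensional integral proportional to $\int_0^{\pi/2}\sin^{\alpha+\beta+1}u\,\cos^{2J+1}u\,\dd u$, a Beta integral equal to $\tfrac12 B\bigl(\tfrac{\alpha+\beta}{2}+1,\,J+1\bigr)$. A Laplace-type estimate (the mass concentrates at $u=0$, where $\cos^{2J+1}u\approx e^{-(2J+1)u^2/2}$ and $\sin^{\alpha+\beta+1}u\approx u^{\alpha+\beta+1}$) shows this is of order $(2J+1)^{-1-\frac{\alpha+\beta}{2}}$, with a constant bounded uniformly for $\alpha+\beta\in[0,2]$. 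Multiplying by the prefactor $(2J+1)^2$ gives the claimed rate $(2J+1)^{1-\frac{\alpha+\beta}{2}}$.

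The conceptual heart of the argument is the exact cancellation produced by symmetrization; the rest is a kernel estimate. The one genuinely delicate point is the last step: a crude bound on the Beta integral (for instance by $B(1,J+1)=\tfrac1{J+1}$) loses the fractional gain and only yields the power $(2J+1)^{1}$, so one must extract the correct exponent $-1-\tfrac{\alpha+\beta}{2}$ while keeping the constant bounded as $\alpha,\beta$ and $J$ vary. This is most safely done by estimating the one-dimensional integral directly, splitting $[0,\pi/2]$ into a neighborhood of $u=0$ and its complement, rather than quoting asymptotics of the Gamma function.
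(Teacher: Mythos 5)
Correct, and essentially the paper's own proof: the paper likewise uses $\Op_J(1)=\identity$ to symmetrize (your two insertions of the resolution of identity), obtains the same kernel coefficient $\tfrac12\big(f(\omega)-f(\omega')\big)\big(g(\omega)-g(\omega')\big)$ multiplying $\braket{\omega}{\omega'}\ket{\omega}\bra{\omega'}$, bounds the trace norm via $\norm{\ket{\omega}\bra{\omega'}}_1=1$ and the H\"older seminorms, and reduces to the same Beta integral $B(J+1,\alpha/2+\beta/2+1)=O\big(J^{-1-\alpha/2-\beta/2}\big)$. Your closing remark on keeping the constant uniform in $\alpha,\beta$ (by estimating the one-dimensional integral directly) is a minor sharpening of the paper's unelaborated asymptotic, not a different method.
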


\begin{proof}
We write
\begin{align}
    & \Op_J(fg)- \frac{1}{2} (\Op_J(f)\Op_J(g) + \Op_J(g)\Op_J(f))  \\
     &=  \frac{1}{2} \left( \Op_J(fg) \Op_J(1) + \Op_J(1) \Op_J (fg) - \Op_J(f) \Op_J(g) - \Op_J(g) \Op_J(f) \right). \nonumber
\end{align}
Expressing each $\Op(\cdot)$ as an integral over the sphere we see that this equals
\begin{align}
      & \frac{(2J+1)^2}{2} \int_{S^2 \times S^2} (f(\omega)-f(\omega'))(g(\omega) - g(\omega')) {}_J\vert{\omega}\rangle_{J\,J} \langle{\omega}\vert{\omega'}\rangle_{J\,J} \langle{\omega'}\vert \domega \domega' . \nonumber
\end{align}
We use triangle inequality and $\| \ket{\omega}_J {}_J \bra{\omega'} \|_1=1$ to get
\begin{align}
    & \| \Op_J(fg)- \frac{1}{2} (\Op_J(f)\Op_J(g) + \Op_J(g)\Op_J(f)) \|_1  \label{eq:random_numer_412} \\
     \leq &  \frac{(2J+1)^2}{2} \int_{S^2 \times S^2} |f(\omega)-f(\omega')||g(\omega) - g(\omega')| |{}_J \langle{\omega}\vert{\omega'}\rangle{}_{J} |  \domega \domega' . \nonumber
\end{align}
Let $\theta$ be the angle between $\omega$ and $\omega'$, so that $|\omega-\omega'| = \sqrt{2(1-\cos(\theta))}$. Using the Hölder's condition, we bound the right hand side of \eqref{eq:random_numer_412} by 
\begin{align}
2^{\alpha/2+\beta/2-1}(2J+1)^2 &\norm{f}_{C_h^{0,\alpha}}\norm{g}_{C^{0,\beta}_h}\int_{S^2\times S^2} (1-\cos(\theta))^{\alpha/2+\beta/2}\Big(\frac{1+\cos(\theta)}{2}\Big)^{J}\domega\domega' 
\nonumber \\
&=2^{\alpha+\beta-2}(2J+1)^2 \int_0^1 t^{\alpha/2+\beta/2}(1-t)^{J}\dt\\
&=2^{\alpha+\beta-2}(2J+1)^2 B(J+1, \alpha/2+\beta/2+1). \nonumber
\end{align}
We have $B(J+1, \alpha/2+\beta/2+1) = O(J^{-1-\alpha/2-\beta/2})$ as $J\to\infty$, finishing the proof.
\end{proof}

\begin{proposition}
For any $J\in\tfrac{1}{2}\N$ and $f,g\in W^{4,2}(S^2)$ we have the expansion
\begin{align}
\Op_J(f)\Op_J(g) = \Op_J \left(fg+\frac{i\poisson{f}{g}-\nabla f\cdot \nabla g}{2J+1} \right) + \mathcal{E}(f,g),
\end{align}
where $\norm{\mathcal{E}(f,g)}_1\leq C(2J+1)^{-2}\norm{\Op_J}_{1\to 1}\norm{f}_{{4,2}}\norm{g}_{{4,2}}$. 
\end{proposition}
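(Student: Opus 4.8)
The plan is to split the product through the Berezin transform $\Op_J\Hus_J$, writing
\begin{equation*}
\Op_J(f)\Op_J(g) = \Op_J\Hus_J(\Op_J(f)\Op_J(g)) + \left(1-\Op_J\Hus_J\right)(\Op_J(f)\Op_J(g)),
\end{equation*}
and to expand each summand to second order in $(2J+1)^{-1}$. For the first summand I would feed the second-order Husimi expansion \eqref{eq:Husimi_product_expansion_second_order} of \Cref{lemma:husimi_of_OpJfOpJg_bound} into $\Op_J$, obtaining
\begin{equation*}
\Op_J\Hus_J(\Op_J(f)\Op_J(g)) = \Op_J\!\left(fg + \frac{\Delta(fg) - \nabla f\cdot\nabla g + i\poisson{f}{g}}{2J+1}\right) + \Op_J(\mathcal{E}_2),
\end{equation*}
where $\norm{\Op_J(\mathcal{E}_2)}_1 \leq \norm{\Op_J}_{1\to1}\norm{\mathcal{E}_2}_1 \leq C(2J+1)^{-2}\norm{\Op_J}_{1\to1}\norm{f}_{4,2}\norm{g}_{4,2}$. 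Note the appearance of an unwanted $(2J+1)^{-1}\Op_J(\Delta(fg))$ term, which must be cancelled by the second summand.

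For the second summand I would write $1-\Op_J\Hus_J = \frac{\cQ}{2J+1} + \bigl(1-\Op_J\Hus_J - \frac{\cQ}{2J+1}\bigr)$ and treat the leading piece via the Leibniz identity for the Casimir used in the proof of \Cref{proposition:multiplicativity_OpJ_zeroth_order_p_bound},
\begin{equation*}
\cQ(\Op_J(f)\Op_J(g)) = -\Op_J(\Delta f)\Op_J(g) - \Op_J(f)\Op_J(\Delta g) + 2\sum_{i}\Op_J(L_i f)\Op_J(L_i g).
\end{equation*}
Replacing each product by $\Op_J$ of the pointwise product via \Cref{proposition:multiplicativity_OpJ_zeroth_order_p_bound} (an $O((2J+1)^{-1})$ error, harmless after the prefactor $(2J+1)^{-1}$), and using the identity $\sum_i (L_i f)(L_i g) = -\nabla f\cdot\nabla g$ on the round sphere (the same relation behind \Cref{lemma:husimi_of_OpJfOpJg_bound}) together with $\Delta(fg) = g\Delta f + f\Delta g + 2\nabla f\cdot\nabla g$, collapses the leading piece to $-\frac{1}{2J+1}\Op_J(\Delta(fg))$ modulo a term of order $(2J+1)^{-2}\norm{\Op_J}_{1\to1}$. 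This $-\frac{1}{2J+1}\Op_J(\Delta(fg))$ exactly cancels the unwanted term from the first summand, leaving precisely $\Op_J\bigl(fg + \frac{i\poisson{f}{g} - \nabla f\cdot\nabla g}{2J+1}\bigr)$ plus controlled errors.

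The hard part is the remaining term $R = \bigl(1-\Op_J\Hus_J - \frac{\cQ}{2J+1}\bigr)(\Op_J(f)\Op_J(g))$, which must be shown to be $O((2J+1)^{-1})$ in $\norm{\cdot}_1$. The refined bound of \Cref{prop:HJ_OpJ_approximate_inversion_2_norm} controls this operator only in $\norm{\cdot}_2$, and converting to $\norm{\cdot}_1$ on a $(2J+1)$-dimensional space would lose a fatal factor $(2J+1)^{1/2}$. I would therefore upgrade the first-order $L^1$ inversion estimate of \Cref{prop:HJ_OpJ_approximate_inversion_all_p} by one further order: using \Cref{lemma:HJ_OpJ_as_integral_operator} to write $\Op_J\Hus_J = F_J\rhd$ and $\frac{\cQ}{2J+1}(1+4\cQ)^{-2} = \frac{1}{4(2J+1)}(G^{\mathrm{SO}} - G^{\mathrm{SO}}*G^{\mathrm{SO}})\rhd$, I would bound the $L^1(\SU(2))$ norm of the class-function kernel
\begin{equation*}
H_J = G^{\mathrm{SO}}*G^{\mathrm{SO}} - F_J*G^{\mathrm{SO}}*G^{\mathrm{SO}} - \frac{1}{4(2J+1)}\bigl(G^{\mathrm{SO}} - G^{\mathrm{SO}}*G^{\mathrm{SO}}\bigr)
\end{equation*}
of the operator $\bigl(1-\Op_J\Hus_J - \frac{\cQ}{2J+1}\bigr)(1+4\cQ)^{-2}$, showing $\norm{H_J}_1 \leq C(2J+1)^{-2}$. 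This is the analogue of the computation in \Cref{prop:HJ_OpJ_approximate_inversion_all_p} carried one order higher, performed through the radial profiles $R(\cdot)$, Weyl's integration formula \eqref{eq:Weyl_integration}, and Stirling's formula; this asymptotic bookkeeping is where the real work lies, and I expect it to be the main obstacle. Granting it, \eqref{eq:norm_bound_rhd} gives $\norm{R}_1 \leq C(2J+1)^{-2}\norm{(1+4\cQ)^2(\Op_J(f)\Op_J(g))}_1$, and expanding $(1+4\cQ)^2$ by the Leibniz rule into products $\Op_J(D_1 f)\Op_J(D_2 g)$ with $D_1, D_2$ differential operators of total order at most $4$, each estimated by $\norm{\Op_J(D_1 f)}_2\norm{\Op_J(D_2 g)}_2 \leq (2J+1)\norm{D_1 f}_2\norm{D_2 g}_2$ via \eqref{eq:Op_bound}, yields $\norm{(1+4\cQ)^2(\Op_J(f)\Op_J(g))}_1 \leq C(2J+1)\norm{f}_{4,2}\norm{g}_{4,2}$ and hence $\norm{R}_1 \leq C(2J+1)^{-1}\norm{f}_{4,2}\norm{g}_{4,2} = C(2J+1)^{-2}\norm{\Op_J}_{1\to1}\norm{f}_{4,2}\norm{g}_{4,2}$, since $\norm{\Op_J}_{1\to1} = 2J+1$. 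Collecting the three error contributions $\Op_J(\mathcal{E}_2)$, the multiplicativity remainder, and $R$ completes the proof.
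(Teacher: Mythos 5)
Your argument contains a genuine gap at precisely the point you flag as the main obstacle: the second-order $L^1(\SU(2))$ kernel estimate $\norm{H_J}_1 \leq C(2J+1)^{-2}$ for the class function representing $\bigl(1-\Op_J\Hus_J-\frac{\cQ}{2J+1}\bigr)(1+4\cQ)^{-2}$ is nowhere proved --- you grant it, and neither the paper nor your proposal supplies it. It is a second-order upgrade of \Cref{prop:HJ_OpJ_approximate_inversion_all_p}, and while it is plausibly true (it is consistent with the eigenvalue bound $\bigl|1-\gamma_{J,\ell}-\frac{\ell(\ell+1)}{2J+1}\bigr| \leq \ell^2(\ell+1)^2(2J+1)^{-2}$, $\gamma_{J,\ell}=\frac{(2J)!(2J+1)!}{(2J-\ell)!(2J+\ell+1)!}$, established inside \Cref{prop:HJ_OpJ_approximate_inversion_2_norm}), passing from that spectral information to an $L^1$ kernel bound is exactly the kind of asymptotic computation that constitutes the proof of \Cref{prop:HJ_OpJ_approximate_inversion_all_p}, carried one order higher; without it, what you have written is a program, not a proof. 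Everything surrounding the gap does check out: your cancellation of $\frac{1}{2J+1}\Op_J(\Delta(fg))$ via the Casimir--Leibniz identity is correct (your sign $\sum_i (L_i f)(L_i g) = -\nabla f\cdot\nabla g$ is right under the paper's convention, where the $L_i$ are $-i$ times the real rotation fields and $-\Delta = \sum_i L_i^2$), and your final accounting $\norm{(1+4\cQ)^2(\Op_J(f)\Op_J(g))}_1 \leq C(2J+1)\norm{f}_{4,2}\norm{g}_{4,2}$ with $\norm{\Op_J}_{1\to 1}=2J+1$ is sound.

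The gap is, moreover, avoidable: the paper's proof needs no new inversion estimate. Instead of applying $1-\Op_J\Hus_J$ directly to the product, the paper first splits $(1-\Op_J\Hus_J)(\Op_J(f)\Op_J(g)) = (1-\Op_J\Hus_J)\Op_J(fg) + (1-\Op_J\Hus_J)\bigl(\Op_J(f)\Op_J(g)-\Op_J(fg)\bigr)$. For the first piece, the intertwining identity $(1-\Op_J\Hus_J)\Op_J(h) = \Op_J\bigl((1-\Hus_J\Op_J)h\bigr)$ moves the deficit operator to the \emph{function} side, where the refined second-order bound of \Cref{prop:HJ_OpJ_approximate_inversion_2_norm} applies to $h = fg$; your objection that the $L^2$ bound loses a fatal $(2J+1)^{1/2}$ is valid only on the operator side --- on $S^2$ the measure is normalized, so $\norm{\cdot}_{1}\leq\norm{\cdot}_{2}$ costs nothing, and one concludes with $\norm{\Op_J(\cdot)}_1 \leq \norm{\Op_J}_{1\to 1}\norm{\cdot}_1$. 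The second piece is already $O((2J+1)^{-1})$ small by \Cref{proposition:multiplicativity_OpJ_zeroth_order_p_bound}, so the first-order bound of \Cref{prop:HJ_OpJ_approximate_inversion_all_p}, combined with the $(1+4\cQ)$ Leibniz expansion into differences $\Op_J(Af\,Bg)-\Op_J(Af)\Op_J(Bg)$, suffices there. This rearrangement is the one idea your proposal is missing, and it renders the heavy kernel asymptotics you postpone entirely unnecessary.
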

\begin{proof}
We write:
\begin{align}
\Op_J(f)\Op_J(g) = \Op_J(\Hus_J(\Op_J(f)\Op_J(g))) + (1-\Op_J\Hus_J)(\Op_J(f)\Op_J(g)). \label{eq:random_equation_237}
\end{align}
It follows immediately from \Cref{lemma:husimi_of_OpJfOpJg_bound} that:
\begin{align}
    \Op_J(\Hus_J(\Op_J(f)\Op_J(g))) &= \Op_J(fg) + \frac{1}{2J+1}\Op_J(\Delta(fg)-\nabla f \cdot \nabla g +i\poisson{f}{g})\nonumber \\
    &\quad + \Op_J(\mathcal{E}_2(f,g)),
\end{align}
where $\norm{\Op_J(\mathcal{E}_2(f,g))}_1\leq C(2J+1)^{-2}\norm{\Op_J}_{1\to 1}\norm{f}_{4,2}\norm{g}_{4,2}$. 

Next we rewrite the second term in \eqref{eq:random_equation_237} as:
\begin{align}\label{eq:OpJf_OpJg_second_term}
    (1-\Op_J\Hus_J)(\Op_J(f)\Op_J(g)) &= (1-\Op_J\Hus_J)(\Op_J(fg)) \\&\quad+ (1-\Op_J\Hus_J)(\Op_J(f)\Op_J(g)-\Op_J(fg)). \nonumber
\end{align}
We expand:
\begin{align}
(1-\Op_J\Hus_J)(\Op_J(fg)) =&\Op_J\bigg(\bigg(1-\Hus_J\Op_J+\frac{\Delta}{2J+1}\bigg)(fg)\bigg) \\ &- \frac{1}{2J+1}\Op_J(\Delta(fg)), \nonumber
\end{align}
and we use \Cref{prop:HJ_OpJ_approximate_inversion_2_norm} to bound the first part. On the other hand, the second term of \eqref{eq:OpJf_OpJg_second_term} is bounded by
\begin{align}
& \bigg\|(1-\Op_J\Hus_J)(\Op_J(f)\Op_J(g)-\Op_J(fg))\bigg\|_1\\
&\leq \frac{C}{2J+1}\norm{(1+4\mathcal{Q})(\Op_J(fg)-\Op_J(f)\Op_J(g))}_1. \nonumber
\end{align}
We use the fact that:
\begin{align}
    \mathcal{Q}(\Op_J(fg)-\Op_J(f)\Op_J(g)) = & \Op_J((-\Delta f)g)-\Op_J(-\Delta f)\Op_J(g) \\ 
    &+  \Op_J(f(-\Delta g))-\Op_J(f)\Op_J(-\Delta g) \nonumber \\ 
    &+ 2\sum_{i=1}^3 \big(\Op_J(L_ifL_ig)\!-\!\Op_J(L_if)\Op_J(L_ig)\big), \nonumber
\end{align}
which is a consequence of the Leibniz property of the commutators. Carrying out the computation and using the triangle inequality we need to bound terms of the form:
\begin{align}
    \norm{\Op_J(Af Bg)-\Op_J(Af)\Op_J(Bg)}_1,
\end{align}
where $A$ and $B$ are combinations of up to two of $\iset{L_x,L_y,L_z}$. This follows from \Cref{proposition:multiplicativity_OpJ_zeroth_order_p_bound} with $p=1$, $p_1=p_2=2$:
\begin{align}    
\norm{\Op_J(Af Bg)-\Op_J(Af)\Op_J(Bg)}_1&\leq \frac{C}{2J+1}\norm{\Op_J}_{1\to 1}\norm{Af}_{2,2}\norm{Bg}_{2,2} \\
&\leq \frac{C}{2J+1}\norm{\Op_J}_{1\to 1}\norm{f}_{4,2}\norm{g}_{4,2}.\nonumber
\end{align}
Absorbing all the terms which we have bounded into the error term $\mathcal{E}(f,g)$, we have:
\begin{align}
    \Op_J(f)\Op_J(g) &= \Op_J(fg)+\frac{1}{2J+1}\Op_J(\Delta(fg)\!-\!\nabla f\! \cdot \!\nabla g \!+\! i\poisson{f}{g}\! -\! \Delta(fg)) + \mathcal{E}(f,g)\nonumber\\
    &=\Op_J(fg)+\frac{1}{2J+1}\Op_J(-\nabla f\!\cdot\! \nabla g +i\poisson{f}{g} ) + \mathcal{E}(f,g),
\end{align}
and the bound on $\mathcal{E}(f,g)$ is of the correct form. This concludes the proof.
\end{proof}

The following argument is essentially taken from \cite{laptev_szego_1996}.

\begin{proposition}\label{prop:C2_trace_bound_Berezin_Lieb}
Let $f\in W^{1,2}(S^2)$ be a real-valued function with image in a (possibly unbounded) interval $I$, and let $\varphi$ be a real-valued function on $I$ with second derivative in $L^\infty(I)$. Then:
\begin{align}
    \abs{\Tr_J \left[\varphi(\Op_J(f))-\Op_J(\varphi(f))\right]}\leq \frac{1}{2}\norm{\varphi''}_\infty \norm{\nabla f}_2^2. \label{eq:bound_2nd_derivative}
\end{align}
\end{proposition}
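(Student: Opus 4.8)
The plan is to diagonalize the finite-dimensional self-adjoint operator $\Op_J(f)$ and estimate the trace eigenvalue by eigenvalue, reducing the whole statement to a single ``variance'' quantity that I will identify with a Dirichlet form. First I would fix an orthonormal eigenbasis $\ket{e_k}_J$ of $\Op_J(f)$ with eigenvalues $\lambda_k$, and attach to each $k$ the measure $\dd\mu_k(\omega) = (2J+1)\,\abs{{}_J\braket{\omega}{e_k}}^2 \domega$ on $S^2$. The resolution of identity \eqref{eq:coherent_resolution_of_1} shows that each $\mu_k$ is a probability measure and that $\sum_k \dd\mu_k = (2J+1)\domega$. Moreover $\lambda_k = {}_J\bra{e_k}\Op_J(f)\ket{e_k}_J = \int_{S^2} f \dd\mu_k$, so $\lambda_k$ is the barycenter of $f$ with respect to $\mu_k$; since $I$ is an interval containing the range of $f$, we get $\lambda_k \in I$.

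Next I would apply Taylor's formula with second-order Lagrange remainder to $\varphi(f(\omega))$ around $\lambda_k$ and integrate against $\mu_k$. The first-order term vanishes because $\int (f-\lambda_k)\dd\mu_k = 0$, and since the remainder is evaluated at points between $f(\omega)$ and $\lambda_k$, all lying in $I$, this leaves
\begin{equation*}
\abs{ \varphi(\lambda_k) - \int_{S^2} \varphi(f)\dd\mu_k } \leq \tfrac{1}{2}\norm{\varphi''}_\infty \int_{S^2} (f-\lambda_k)^2 \dd\mu_k .
\end{equation*}
Summing over $k$, the left-hand pieces assemble into $\Tr_J[\varphi(\Op_J(f))] = \sum_k \varphi(\lambda_k)$ and into $\sum_k \int \varphi(f)\dd\mu_k = (2J+1)\int_{S^2}\varphi(f)\domega = \Tr_J[\Op_J(\varphi(f))]$, using $\sum_k \dd\mu_k = (2J+1)\domega$ and \eqref{eq:Op_trace}. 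Thus the whole problem reduces to bounding $\sum_k \int_{S^2}(f-\lambda_k)^2\dd\mu_k$ by $\norm{\nabla f}_2^2$.

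The crux, and the step I expect to be the main obstacle, is this variance estimate. Expanding the square and using $\int f\dd\mu_k = \lambda_k$ together with $\sum_k\dd\mu_k=(2J+1)\domega$ gives the clean identity
\begin{equation*}
\sum_k \int_{S^2}(f-\lambda_k)^2\dd\mu_k = (2J+1)\int_{S^2} f^2\domega - \sum_k \lambda_k^2 = \Tr_J[\Op_J(f^2)] - \Tr_J[\Op_J(f)^2].
\end{equation*}
I would then evaluate the second trace through the adjointness of $\Op_J$ and $(2J+1)\Hus_J$: for real $f$ one has $\Tr_J[\Op_J(f)^2] = (2J+1)\pair{f}{\Hus_J\Op_J f}$ in $L^2(S^2)$, while $\Tr_J[\Op_J(f^2)] = (2J+1)\norm{f}_2^2$, so the variance equals $(2J+1)\pair{f}{(1-\Hus_J\Op_J)f}$. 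Decomposing $f=\sum_\ell \Pi_\ell f$ and inserting the explicit spectrum from \Cref{lemma:HJ_OpJ_composition} turns this into $(2J+1)\sum_\ell (1-h_{J,\ell})\norm{\Pi_\ell f}_2^2$ with $h_{J,\ell}=\frac{(2J)!(2J+1)!}{(2J-\ell)!(2J+\ell+1)!}$.

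Finally, the elementary estimate $0\leq 1-h_{J,\ell}\leq \frac{\ell(\ell+1)}{2J+\ell+1}\leq \frac{\ell(\ell+1)}{2J+1}$ established inside the proof of \Cref{prop:HJ_OpJ_approximate_inversion_2_norm} yields
\begin{equation*}
\sum_k \int_{S^2}(f-\lambda_k)^2\dd\mu_k \leq \sum_\ell \ell(\ell+1)\norm{\Pi_\ell f}_2^2 = \pair{f}{-\Delta f} = \norm{\nabla f}_2^2,
\end{equation*}
since $-\Delta$ acts as multiplication by $\ell(\ell+1)$ on $\irrep_\ell$. Combining this with the summed Taylor estimate from the second paragraph finishes the proof. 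The only genuine work is recognizing the variance as the Dirichlet form, which is precisely where the quantitative spectral information about $\Hus_J\Op_J$ enters; the rest is bookkeeping with the coherent-state measures $\mu_k$ and the trace formula \eqref{eq:Op_trace}.
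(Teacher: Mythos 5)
Your proof is correct, and while its second half coincides with the paper's, the reduction to the variance is genuinely different. The paper invokes the Berezin--Lieb upper inequality as a black box, applying it to the two auxiliary convex functions $\psi_\pm(\lambda)=\tfrac12\norm{\varphi''}_\infty\lambda^2\pm\varphi(\lambda)$ to get $\pm\Tr_J[\varphi(\Op_J(f))-\Op_J(\varphi(f))]\leq\tfrac12\norm{\varphi''}_\infty\Tr_J[\Op_J(f^2)-\Op_J(f)^2]$ in one stroke; you instead unfold that inequality: your measures $\dd\mu_k=(2J+1)\abs{{}_J\braket{\omega}{e_k}}^2\domega$ are exactly the measures in the standard Jensen-type proof of Berezin--Lieb, and replacing Jensen by a second-order Taylor expansion about the barycenters $\lambda_k$ produces the two-sided quantitative estimate directly, with no convexity trick. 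From that point on the arguments merge: your identity $\sum_k\int(f-\lambda_k)^2\dd\mu_k=\Tr_J[\Op_J(f^2)-\Op_J(f)^2]=(2J+1)\pair{f}{(1-\Hus_J\Op_J)f}$ is the same computation the paper performs, and your use of $0\leq 1-h_{J,\ell}\leq\ell(\ell+1)/(2J+\ell+1)$ together with \Cref{lemma:HJ_OpJ_composition} is precisely \Cref{prop:HJ_OpJ_approximate_inversion_2_norm} with $s=\tfrac12$ (note that $h_{J,\ell}=0$ for $\ell>2J$, where $(2J+1)(1-h_{J,\ell})\leq\ell(\ell+1)$ still holds since $\ell^2\geq 2J+1$). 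What each route buys: the paper's version is shorter once Berezin--Lieb is available and the same machinery is reused for the H\"older/convex bound in \Cref{prop:convex_Holder_trace}; yours is self-contained, gives both signs at once, and makes transparent that the error is the total coherent-state variance of $f$ about the spectrum of $\Op_J(f)$. Two cosmetic repairs: under the stated hypothesis ($\varphi''\in L^\infty(I)$, not pointwise twice differentiable) you should use the integral rather than Lagrange form of the Taylor remainder (the paper needs the analogous reading to get convexity of $\psi_\pm$), and when $I$ is open or half-open the claim $\lambda_k\in I$ needs the one-line strictness remark that $f<\sup I$ a.e.\ forces $\int_{S^2} f\dd\mu_k<\sup I$ because $\mu_k$ is a probability measure.
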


\begin{proof}
This is a consequence of the upper bound in the Berezin-Lieb inequality. Formulated in terms of $\Op_J$, it states that for real valued $f$ and convex $\psi$:
\begin{align}
\Tr_J[\psi(\Op_J(f))]\leq \Tr_J[\Op_J(\psi(f))].
\end{align}
We apply this to the convex functions $\psi_{\pm}(\lambda) = \norm{\varphi''}_\infty \frac{1}{2} \lambda^2 \pm \varphi(\lambda)$, obtaining:
\begin{align}\label{eq:upper_bound_C2_from_berezin_lieb}
    \pm\Tr_J[\varphi(\Op_J(f))-\Op_J(\varphi(f))]\leq \frac{\norm{\varphi''}_\infty}{2}\Tr_J[\Op_J(f^2)-\Op_J(f)^2].
\end{align}
We can replace the left hand side by its absolute value. We bound the right hand side by first rewriting:
\begin{align}
    \Tr_J[\Op_J(f^2)-\Op_J(f)^2] &= (2J+1)^2\int_{S^2\times S^2} \!\!\!\!\!\!\!(f(\omega)^2-f(\omega)f(\omega')) \Tr_J[P_J(\omega)P_J(\omega')]\domega \domega' \nonumber \\
    &=(2J+1)\int_{S^2}f(\omega)(1-\Hus_J\Op_J)f(\omega)\domega\\
    &=(2J+1)\norm{(1-\Hus_J\Op_J)^{1/2}f}_2^2, \nonumber
\end{align}
and by \Cref{prop:HJ_OpJ_approximate_inversion_2_norm} we get the upper bound:
\begin{align}
    \Tr_J[\Op_J(f^2)-\Op_J(f)^2]\leq 
 \| (-\Delta)^{\frac{1}{2}} f \|_2^2= \norm{\nabla f}_2^2,
\end{align}
which together with \eqref{eq:upper_bound_C2_from_berezin_lieb} gives the desired result.
\end{proof}

Of course, the above also works for complex-valued $\varphi$, by separately bounding the real and imaginary parts. Then we have to remove the factor $\frac12$ on the right hand side of \eqref{eq:bound_2nd_derivative}.

If $\varphi$ is a convex function, the Berezin-Lieb inequalities give us a good bound on the speed of convergence under a weaker regularity assumption, namely that $\varphi$ is H\"older continuous. 

\begin{proposition} \label{prop:convex_Holder_trace}
    Let $f\in W^{2,1}(S^2)$ be real valued, and let $\varphi \in C^{0,\alpha}(\image(f))$ be a real valued convex function, $\alpha\in (0,1]$. Then for every $J\in \tfrac{1}{2}\N$ we have:
    \begin{align}
        0\leq \frac{1}{2J+1}\Tr_J[\Op_J(\varphi(f))-\varphi(\Op_J(f))]\leq \frac{C}{(2J+1)^{\alpha}}\norm{\varphi}_{C^{0,\alpha}_h}\norm{(1-4\Delta)f}_{1}^\alpha.
    \end{align}
\end{proposition}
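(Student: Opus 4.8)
The plan is to derive both inequalities from the Berezin-Lieb inequalities, exactly in the spirit of \Cref{prop:C2_trace_bound_Berezin_Lieb}, and then to convert the resulting classical quantity into a bound on $(1-\Hus_J\Op_J)f$ in $L^1$. Writing $\rho = \Op_J(f)$, its lower symbol is $\Hus_J(\rho) = \Hus_J\Op_J f$ while $f$ itself serves as an upper symbol, so for convex $\varphi$ the inequalities \eqref{eq:Berezin-Lieb} read
\begin{equation}
\int_{S^2}\varphi(\Hus_J\Op_J f)\domega \leq \frac{1}{2J+1}\Tr_J[\varphi(\Op_J(f))] \leq \int_{S^2}\varphi(f)\domega.
\end{equation}
Combined with the trace property $\frac{1}{2J+1}\Tr_J[\Op_J(\varphi(f))] = \int_{S^2}\varphi(f)\domega$, the right-hand inequality yields at once the lower bound $0 \leq \frac{1}{2J+1}\Tr_J[\Op_J(\varphi(f)) - \varphi(\Op_J(f))]$, and the left-hand inequality gives
\begin{equation}
\frac{1}{2J+1}\Tr_J[\Op_J(\varphi(f)) - \varphi(\Op_J(f))] = \int_{S^2}\varphi(f)\domega - \frac{1}{2J+1}\Tr_J[\varphi(\Op_J(f))] \leq \int_{S^2}\bigl[\varphi(f) - \varphi(\Hus_J\Op_J f)\bigr]\domega.
\end{equation}

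Before estimating the integrand I would check that $\varphi$ may legitimately be evaluated at $\Hus_J\Op_J f$. By \Cref{lemma:HJ_OpJ_as_integral_operator} the operator $\Hus_J\Op_J$ is averaging against the probability density $F_J \geq 0$, so its output takes values in the convex hull of $\image(f)$; since $f \in W^{2,1}(S^2) \subset C(S^2)$, that hull is exactly the closed interval $\image(f)$ on which $\varphi$ is defined and Hölder. I would then bound the integrand pointwise by the Hölder condition, $|\varphi(f(\omega)) - \varphi(\Hus_J\Op_J f(\omega))| \leq \norm{\varphi}_{C^{0,\alpha}_h}\,|(1-\Hus_J\Op_J)f(\omega)|^\alpha$, integrate, and apply Jensen's inequality for the concave map $t \mapsto t^\alpha$ against the probability measure $\domega$:
\begin{equation}
\int_{S^2}\bigl|(1-\Hus_J\Op_J)f\bigr|^\alpha\domega \leq \left(\int_{S^2}\bigl|(1-\Hus_J\Op_J)f\bigr|\domega\right)^\alpha = \norm{(1-\Hus_J\Op_J)f}_1^\alpha.
\end{equation}
Finally I would invoke the quantitative $L^1$ estimate of \Cref{prop:HJ_OpJ_approximate_inversion_all_p} (equivalently \Cref{thm:hus_op} at $p=1$), $\norm{(1-\Hus_J\Op_J)f}_1 \leq \frac{C}{2J+1}\norm{(1-4\Delta)f}_1$, raise it to the power $\alpha \leq 1$, and absorb $C^\alpha \leq \max\{C,1\}$ into a universal constant to obtain the claimed bound.

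The argument is short once these pieces are assembled, and I expect the only genuinely delicate point to be the reduction to an $L^1$ estimate. Because $\alpha < 1$ the quantity that arises naturally is an $L^\alpha$ integral, which is not directly controlled by the $L^2$ convergence rate of \Cref{prop:HJ_OpJ_approximate_inversion_2_norm}; the clean route is to pass through Jensen's inequality to $L^1$ and then through the all-$p$ estimate of \Cref{prop:HJ_OpJ_approximate_inversion_all_p}, which is precisely what forces the regularity hypothesis $f \in W^{2,1}(S^2)$ and the appearance of $\norm{(1-4\Delta)f}_1$. The domain verification for $\Hus_J\Op_J f$ is a secondary subtlety, but it is essential for the Hölder step to make sense and is easily dispatched using positivity and unitality of $\Hus_J\Op_J$.
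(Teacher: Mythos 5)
Your proposal is correct and follows essentially the same route as the paper's proof: both sides come from the Berezin--Lieb inequalities combined with $\frac{1}{2J+1}\Tr_J[\Op_J(\varphi(f))]=\int_{S^2}\varphi(f)\domega$, the upper bound is then reduced via the H\"older condition and Jensen's inequality to $\norm{(1-\Hus_J\Op_J)f}_1^\alpha$, which is controlled by \Cref{prop:HJ_OpJ_approximate_inversion_all_p}, with $C^\alpha$ absorbed into a universal constant exactly as in the paper. Your explicit verification that $\Hus_J\Op_J f$ stays in $\image(f)$ (using positivity, unitality, and $W^{2,1}\subset C$) is a small point the paper leaves implicit, and it is a welcome addition.
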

\begin{proof}
    By the Berezin-Lieb inequality we know that:
    \begin{align}
        0 & \leq  \frac{1}{2J+1}\Tr_J[\Op_J(\varphi(f))-\varphi(\Op_J(f))] \\  & \leq \int_{S^2} 
\left[ \varphi(f(\omega))-\varphi(\Hus_J\Op_J f(\omega)) \right] \domega  \nonumber \\
& \leq \norm{\varphi}_{C^{0,\alpha}_h}\int_{S^2}\abs{(1-\Hus_J\Op_J)f}^\alpha\domega. \nonumber
    \end{align}
    By Jensen's inequality and \Cref{prop:HJ_OpJ_approximate_inversion_all_p}:
    \begin{align}
\norm{\varphi}_{C^{0,\alpha}_h}\int_{S^2}\abs{(1-\Hus_J\Op_J)f}^\alpha\domega 
& \leq \norm{\varphi}_{C^{0,\alpha}_h}\norm{(1-\Hus_J\Op_J)f}^\alpha_1 \\
& \leq \norm{\varphi}_{C^{0,\alpha}_h} \left( \frac{C}{2J+1} \right)^\alpha \| (1-4\Delta)f \|_{1}^\alpha. \nonumber
    \end{align}
    Now we use $\sup_{\alpha \in [0,1]}  C^\alpha < \infty$.
\end{proof}

\begin{corollary}\label{lemma:OpJ_functional_calculus_specialized_to_entropy}
    Let $\varphi(x) = x\log(x)$. For $f\in W^{2,1}(S^2)$ real valued, with $0\leq f \leq \Lambda$, and $J\in \tfrac{1}{2}\N$ with $J \geq 1$:
    \begin{align}\label{eq:OpJ_functional_calculus_specialized_to_entropy}
    0 \leq & \int_{S^2} \varphi ( f (\omega)) \domega -  \frac{1}{2J+1}\Tr_J[\Op_J(\varphi(f))]    \\
     \leq &  C \log(2J+1)\norm{(1-4\Delta)f}^{1-\frac{1}{\log(2J+1)}}_{1}+{C}\max\{0 ,  1+\log(\Lambda) \} \norm{(1-4\Delta)f}_{1}. \nonumber
    \end{align}
\end{corollary}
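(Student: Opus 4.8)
The plan is to follow the mechanism of \Cref{prop:convex_Holder_trace}, but to account for the fact that $\varphi(x)=x\log x$ is convex yet fails to be Lipschitz at the origin: I would split $\varphi$ into a bounded piece carrying the singularity at $0$ and a Lipschitz piece carrying the growth for large $x$, estimate each with its own exponent, and then optimize the H\"older exponent in $J$.

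First I would record the reduction behind \Cref{prop:convex_Holder_trace}. Since $f\ge0$, property (iii) gives $\Op_J(f)\ge0$ with spectrum in $\image(f)\subseteq[0,\Lambda]$, so $\varphi(\Op_J(f))$ is well defined. Writing $g=\Hus_J\Op_J(f)$, the Berezin--Lieb inequalities \eqref{eq:Berezin-Lieb} applied to $\rho=\Op_J(f)$ yield $\int_{S^2}\varphi(g)\domega\le\frac1{2J+1}\Tr_J[\varphi(\Op_J(f))]\le\int_{S^2}\varphi(f)\domega$; together with $\frac1{2J+1}\Tr_J[\Op_J(\varphi(f))]=\int_{S^2}\varphi(f)\domega$ from \eqref{eq:Op_trace}, this shows that the displayed quantity is nonnegative and bounded above by $\int_{S^2}(\varphi(f)-\varphi(g))\domega$. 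Because $\Hus_J\Op_J$ is positivity preserving and unital, $g$ again takes values in $[0,\Lambda]$, so it remains to control the oscillation of $\varphi$ on $[0,\Lambda]$ against $\norm{f-g}_1=\norm{(1-\Hus_J\Op_J)f}_1$, which \Cref{prop:HJ_OpJ_approximate_inversion_all_p} bounds by $\tfrac{C}{2J+1}\norm{(1-4\Delta)f}_1$.

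The key step is the decomposition $\varphi=\varphi_1+\varphi_2$ with $\varphi_1(x)=\min\{x\log x,0\}$ (equal to $x\log x$ on $[0,1]$ and to $0$ for $x\ge1$) and $\varphi_2(x)=\max\{x\log x,0\}$ (equal to $0$ on $[0,1]$ and to $x\log x$ for $x\ge1$). Then $\varphi_1$ isolates the non-Lipschitz behaviour at $0$ while staying bounded, so its $\alpha$-H\"older seminorm on $[0,\Lambda]$ is controlled uniformly in $\Lambda$; whereas $\varphi_2$ is Lipschitz with constant $1+\log\Lambda$ and vanishes on $\image(f)$ when $\Lambda\le1$. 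For the first piece I would prove $\norm{\varphi_1}_{C_h^{0,\alpha}}\le\frac{C}{1-\alpha}$ with universal $C$: integrating $\varphi_1'(t)=\log t+1$ and observing that the integral $\int_x^y|\log t+1|\,\dd t$ is largest when the interval abuts $0$ reduces this to maximizing $u^{1-\alpha}(2+\log(1/u))$ over $u\in(0,1]$, whose maximum is $O(1/(1-\alpha))$. For the second, $\norm{\varphi_2}_{C_h^{0,1}}=1+\log\Lambda$ is immediate. Consequently $\int_{S^2}(\varphi_1(f)-\varphi_1(g))\domega\le\norm{\varphi_1}_{C_h^{0,\alpha}}\int_{S^2}|f-g|^\alpha\domega\le\frac{C}{1-\alpha}\norm{(1-\Hus_J\Op_J)f}_1^\alpha$ by Jensen's inequality (the measure is a probability measure), and likewise $\int_{S^2}(\varphi_2(f)-\varphi_2(g))\domega\le(1+\log\Lambda)\norm{(1-\Hus_J\Op_J)f}_1$.

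Finally I would insert \Cref{prop:HJ_OpJ_approximate_inversion_all_p} and optimize. Choosing $\alpha=1-\frac1{\log(2J+1)}$, which lies in $(0,1)$ since $J\ge1$, gives $\frac1{1-\alpha}=\log(2J+1)$ and $(2J+1)^{-\alpha}=e\,(2J+1)^{-1}$, so the $\varphi_1$-contribution becomes $\frac{C\log(2J+1)}{2J+1}\norm{(1-4\Delta)f}_1^{1-1/\log(2J+1)}$ and the $\varphi_2$-contribution becomes $\frac{C(1+\log\Lambda)}{2J+1}\norm{(1-4\Delta)f}_1$; replacing $1+\log\Lambda$ by $\max\{0,1+\log\Lambda\}$ absorbs the case $\Lambda\le1$, and bounding $\frac1{2J+1}\le1$ recovers the stated right-hand side. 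The main obstacle is precisely this balancing: a single H\"older exponent would couple the $\Lambda$-growth to the bad exponent through a factor $\Lambda^{1-\alpha}$, while the Lipschitz estimate alone is unavailable at $0$, so the decomposition is what cleanly separates the $\log(2J+1)$-singularity from the linear $\log\Lambda$-dependence. The one genuinely technical point is the $\Lambda$-uniform seminorm bound $\norm{\varphi_1}_{C_h^{0,\alpha}}\le C/(1-\alpha)$, which makes the optimization in $\alpha$ produce the characteristic exponent $1-1/\log(2J+1)$ and the logarithmic prefactor.
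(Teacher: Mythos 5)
Your proof is correct, and it reaches the stated bound by the same overall mechanism as the paper (Berezin--Lieb, the Berezin-transform estimate of \Cref{prop:HJ_OpJ_approximate_inversion_all_p}, Jensen's inequality, and the choice $\alpha=1-1/\log(2J+1)$), but with a structurally different decomposition that changes where convexity enters. The paper splits $\varphi$ at $x=1/e$, where $\varphi'$ vanishes, precisely so that both pieces are convex; it can then invoke \Cref{prop:convex_Holder_trace} as a black box separately for each piece. Your split at $x=1$ produces a non-convex $\varphi_1=\min\{x\log x,0\}$ (its derivative drops from $1$ to $0$ at $x=1$), so feeding your pieces into \Cref{prop:convex_Holder_trace} would be illegitimate; you avoid this by applying Berezin--Lieb once to the full convex $\varphi$, reducing everything to the classical quantity $\int_{S^2}(\varphi(f)-\varphi(g))\domega$ with $g=\Hus_J\Op_J f$ (valued in $[0,\Lambda]$ since $\Hus_J\Op_J$ is positivity preserving and unital), and only then splitting $\varphi$ for the pointwise H\"older estimate, where convexity plays no role. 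This restructuring is sound, and your seminorm bounds check out: $\norm{\varphi_2}_{C_h^{0,1}}\leq\max\{0,1+\log\Lambda\}$ on $[0,\Lambda]$, while for $\varphi_1$ the decreasing majorant $\abs{\varphi_1'(t)}\leq 1+\log(1/t)$ on $(0,1]$ lets you shift any interval to abut $0$, so the needed constant is $\sup_{h\in(0,1]}h^{1-\alpha}\left(2+\log(1/h)\right)\leq 2+\tfrac{1}{e(1-\alpha)}\leq C/(1-\alpha)$, uniformly in $\Lambda$ because $\varphi_1$ is constant on $[1,\infty)$; the optimization in $\alpha$ then matches the paper's, and in fact you prove the slightly stronger bound carrying the extra factor $(2J+1)^{-1}$, exactly as the paper's own proof does before the statement drops it. What your route buys is a decoupling of the decomposition from convexity: any splitting with controlled H\"older seminorms works once Berezin--Lieb has been applied globally to $\varphi$. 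What the paper's route buys is reusability: by arranging both pieces to be convex, it disposes of each in one stroke via \Cref{prop:convex_Holder_trace}, without redoing the Jensen and Berezin-transform steps inline.
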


\begin{proof}
    We write $\varphi(x) = \varphi_1(x)+\varphi_2(x)$, where:
    \begin{align}
        \varphi_1(x) = \begin{cases}\varphi(x),\quad &x<1/e\\-1/e,&x\geq 1/e\end{cases},\quad\textup{and}\quad \varphi_2(x)=\begin{cases}0,\quad &x<1/e\\\varphi(x)+1/e,&x\geq 1/e\end{cases}.
    \end{align}
    Functions $\varphi_1, \varphi_2$ are convex, with $\varphi_1$ non-increasing and $\varphi_2$ non-decreasing. We estimate the Lipschitz seminorm of $\varphi_2$ using the derivative:
    \begin{align}
\norm{\varphi_2}_{C^{0,1}_h}= \sup_{x\in[0,\Lambda]}{\varphi_2'} = \max \{0, 1+\log(\Lambda) \}.
    \end{align}
Next we estimate Hölder seminorms of $\varphi_1$. We can restrict attention to the interval $[0,1/e]$. On this interval, one can show that $\abs{\varphi_1'(x)/(x^\alpha)'}\leq 1/\alpha(1-\alpha)$. In particular then, for $x\leq y$:
    \begin{align}
        \abs{\varphi_1(y)-\varphi_1(x)} & \leq \int_{x}^y |\varphi_1'(t)|\dt \leq \frac{1}{\alpha(1-\alpha)}\int_x^y \alpha t^{\alpha-1}\dt \\
        & \leq \frac{y^\alpha-x^\alpha}{\alpha(1-\alpha)}\leq \frac{\abs{y-x}^\alpha}{\alpha(1-\alpha)} . \nonumber
    \end{align}
    This gives $\norm{\varphi_1}_{C^{0,\alpha}_h}\leq \tfrac{1}{\alpha(1-\alpha)}$.

    \Cref{prop:convex_Holder_trace} applied separately to $\varphi_1$ and $\varphi_2$ now gives:
    \begin{align}
        \frac{1}{2J+1}\Tr_J[\Op_J(\varphi(f))-\varphi(\Op_J(f))]&\leq \frac{C}{(2J+1)^{\alpha}} \frac{1}{\alpha(1-\alpha)}\norm{(1-4\Delta)f}_{1}^\alpha \\
        &\quad + \frac{C}{2J+1}
\max \{0, 1+\log(\Lambda) \}
\norm{(1-4\Delta)f}_{1}. \nonumber
    \end{align}
    We choose $\alpha = 1-1/\log(2J+1) \geq 1 - \frac{1}{\log(3)} > 0$ to optimize dependence on $J$ and find that:
    \begin{align}
        \frac{1}{2J+1}\Tr_J[\Op_J(\varphi(f))\!-\!\varphi(\Op_J(f))]&\leq C\frac{\log(2J+1)}{2J+1}\norm{(1-4\Delta)f}_1^{1-\frac{1}{\log(2J+1)}}\\
        &\quad+\frac{C}{2J+1}
   \max \{0, 1+\log(\Lambda) \}
    \norm{(1-4\Delta)f}_{1}. \nonumber
    \end{align}
\end{proof}

\begin{corollary}
Let $f\in C(S^2)$ be real valued and let $\varphi \in C(\image f)$. Then
\begin{equation}
\lim_{J \to \infty}    \frac{1}{2J+1}\Tr_J[\varphi(\Op_J(f))] = \int_{S^2}\varphi(f(\omega))\domega.
\end{equation}
Equivalently, for each $J\in \tfrac{1}{2}\N$ consider the probability measure
\begin{equation}
    \mu_{J,f}:=\frac{1}{2J+1}\sum_{i=1}^{2J+1}\delta_{\lambda_i(\Op_J(f))},
\end{equation}
where $\lambda_i(\Op_J(f))$ is the $i$-th eigenvalue of $\Op_J(f)$. Then $\mu_{J,f}$ converges in the weak-$*$ topology to the pushforward measure $\domega \circ f^{-1}$ in the limit $J \to \infty$.
\end{corollary}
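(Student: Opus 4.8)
The plan is to reduce the statement to the $C^2$ trace bound of \Cref{prop:C2_trace_bound_Berezin_Lieb} by approximating both $\varphi$ and $f$ by smooth objects. Since $S^2$ is connected and $f$ is continuous, $\image f$ is a compact interval $[a,b]$; and since $f$ is real-valued, $\Op_J(f)$ is self-adjoint with spectrum contained in $\image f$ by the spectral-containment property of $\Op_J$, so that $\varphi(\Op_J(f))$ is well defined. Using the exact trace identity \eqref{eq:Op_trace} one has $\frac{1}{2J+1}\Tr_J[\Op_J(\varphi(f))] = \int_{S^2}\varphi(f)\domega$, so it suffices to prove that
\[
\frac{1}{2J+1}\Tr_J\bigl[\varphi(\Op_J(f)) - \Op_J(\varphi(f))\bigr] \xrightarrow[J\to\infty]{} 0.
\]

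Fix $\epsilon>0$. First I would choose, by the Weierstrass approximation theorem, a polynomial $\psi$ with $\sup_{[a,b]}|\varphi-\psi|<\epsilon$, and let $L$ be the Lipschitz constant of $\psi$ on $[a-1,b+1]$. Next I would choose a smooth function $\tilde f$ on $S^2$ with $\|f-\tilde f\|_\infty<\delta$ for some $\delta\in(0,1]$ to be fixed, so that $\Op_J(\tilde f)$ is self-adjoint with spectrum in $[a-\delta,b+\delta]$. I then telescope
\[
\begin{aligned}
\varphi(\Op_J(f)) - \Op_J(\varphi(f)) = {} & \bigl[\varphi(\Op_J(f)) - \psi(\Op_J(f))\bigr] \\
& + \bigl[\psi(\Op_J(f)) - \psi(\Op_J(\tilde f))\bigr] \\
& + \bigl[\psi(\Op_J(\tilde f)) - \Op_J(\psi(\tilde f))\bigr] \\
& + \Op_J\bigl(\psi(\tilde f) - \varphi(f)\bigr),
\end{aligned}
\]
and bound the normalized trace of each bracket separately.

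The first bracket is $(\varphi-\psi)(\Op_J(f))$, whose eigenvalues lie in $(\varphi-\psi)(\image f)$, so its normalized trace is at most $\sup_{[a,b]}|\varphi-\psi|<\epsilon$. For the second bracket I would use the operator Lipschitz estimate for traces: setting $M(t)=\Op_J(\tilde f)+t\,\Op_J(f-\tilde f)$, whose spectrum stays in $[a-1,b+1]$, and differentiating $t\mapsto\Tr_J\psi(M(t))$ yields $|\Tr_J[\psi(\Op_J(f))-\psi(\Op_J(\tilde f))]|\leq L\,\|\Op_J(f-\tilde f)\|_1$; the $1\to1$ bound in \eqref{eq:Op_bound} together with $\|\cdot\|_1\leq\|\cdot\|_\infty$ on the normalized sphere gives $\|\Op_J(f-\tilde f)\|_1\leq(2J+1)\|f-\tilde f\|_\infty<(2J+1)\delta$, hence a normalized trace at most $L\delta$. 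The third bracket is controlled by \Cref{prop:C2_trace_bound_Berezin_Lieb} applied to the smooth $\tilde f\in W^{1,2}(S^2)$ and $\psi\in C^2$, giving a bound $\tfrac{1}{2(2J+1)}\|\psi''\|_\infty\|\nabla\tilde f\|_2^2$ which vanishes as $J\to\infty$ for fixed $\psi,\tilde f$. The last bracket has normalized trace exactly $\int_{S^2}(\psi(\tilde f)-\varphi(f))\domega$ by \eqref{eq:Op_trace}, which is at most $\|\psi(\tilde f)-\psi(f)\|_\infty+\|\psi(f)-\varphi(f)\|_\infty\leq L\delta+\epsilon$.

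Choosing first $\psi$ (hence $L$), then $\delta=\epsilon/L$ (hence $\tilde f$), and finally letting $J\to\infty$, I obtain $\limsup_J\bigl|\tfrac{1}{2J+1}\Tr_J[\varphi(\Op_J(f))]-\int_{S^2}\varphi(f)\domega\bigr|=O(\epsilon)$, and since $\epsilon$ is arbitrary the limit is zero. The equivalent measure-theoretic statement is then immediate, since $\int\varphi\,d\mu_{J,f}=\frac{1}{2J+1}\Tr_J[\varphi(\Op_J(f))]$ and $\int\varphi\,d(\domega\circ f^{-1})=\int_{S^2}\varphi(f)\domega$, so the established trace convergence for every $\varphi\in C(\image f)$ is precisely the weak-$*$ convergence $\mu_{J,f}\to\domega\circ f^{-1}$. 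I expect the main obstacle to be the simultaneous low regularity of $f$ and $\varphi$; the decisive point is that the three approximation errors decouple and can be ordered correctly (fix $\psi$, then $\tilde f$, then $J$), with the operator functional-calculus step handled through the trace-norm Lipschitz bound so that its contribution stays uniform in $J$.
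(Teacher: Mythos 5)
Your proposal is correct and is essentially the paper's own proof: the paper disposes of this corollary with the single line ``follows from \Cref{prop:C2_trace_bound_Berezin_Lieb} by a standard density argument,'' and your write-up is exactly that density argument made explicit (polynomial approximation of $\varphi$, smooth approximation of $f$, the $C^2$ trace bound as the key lemma, with the spectral containment, the exact trace identity \eqref{eq:Op_trace}, and a trace-norm Lipschitz estimate --- which you derive directly but could equally cite from \Cref{lemma:trace_bound_holder_functions} with $\alpha=1$ --- supplying the glue). The ordering of choices (first $\psi$, then $\delta=\epsilon/L$ and $\tilde f$, then $J\to\infty$) is handled correctly, so no gaps remain.
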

\begin{proof}
    Follows from \Cref{prop:C2_trace_bound_Berezin_Lieb} by a standard density argument.
\end{proof}

\section{Application to channels} \label{sec:channels}

Due to the multiplicity-free nature of the decomposition of the tensor product of two representations, the set of $\SU(2)$-equivariant quantum channels with fixed input and output representations $\irrep_J$ and $\irrep_K$ is a simplex. This result is originally due to \cite{nuwairan_su2-irreducibly_2013}. In~terms of our notation:
\begin{proposition}[\cite{nuwairan_su2-irreducibly_2013}] \label{prop:simplex}
Equivariant quantum channels from $\irrep_J$ to $\irrep_K$ form a simplex with vertices $\Phi_{J,K}^M$, where
\begin{align}
    \Phi^{M}_{J,K} (\rho) = \frac{2J+1}{2M+1}\Tr_J [P^M_{J,K}(\rho^\beta \tensor \identity_K)],
\label{eq:channel_formula_P}
\end{align}
for $M\in\iset{\abs{K-J},\ldots, K+J}$. That is, equivariant quantum channels from $\irrep_J$ to $\irrep_K$ are of the form
\begin{align}
    \Phi_{J,K}^{(\lambda)} := \sum_{M=\iabs{K-J}}^{K+J}\lambda_M\Phi^M_{J,K},
\end{align}
with uniquely determined $\lambda_M \geq 0$ satisfying $\sum_M \lambda_M =1$.
\end{proposition}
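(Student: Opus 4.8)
The plan is to use channel--state (Choi--Jamio\l{}kowski) duality, adapted to the equivariant setting through the invariant bilinear form $\beta$, and then to invoke Schur's lemma together with the multiplicity-freeness of the decomposition \eqref{eq:CG_decomp}. First I would attach to each equivariant linear map $\Phi \colon B(\irrep_J) \to B(\irrep_K)$ its Choi operator $C_\Phi = (\mathrm{id} \otimes \Phi)(\ket{\Omega}\bra{\Omega}) \in B(\irrep_J \otimes \irrep_K)$, where $\ket{\Omega} = \sum_m \ket{m}_J \otimes U_J \ket{m}_J \in \irrep_J \otimes \irrep_J$ is the $\SU(2)$-invariant singlet vector and $\mathrm{id}$ is the identity map on $B(\irrep_J)$. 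The advantage of building $C_\Phi$ from $\ket{\Omega}$, rather than from the standard maximally entangled state, is that it folds the dualization of the input representation into $U_J$ (equivalently, into $\beta$): complete positivity of $\Phi$ becomes $C_\Phi \geq 0$, trace preservation becomes $\Tr_{\irrep_K} C_\Phi = \identity_{\irrep_J}$ (partial trace over the output factor), and --- crucially --- equivariance of $\Phi$ becomes the commutation relation $[C_\Phi,\, g \otimes g] = 0$ for every $g \in \SU(2)$, with $g \otimes g$ the diagonal action on $\irrep_J \otimes \irrep_K$; this last equivalence uses that $U_J$ commutes with the group action, so that $\ket{\Omega}$ is genuinely invariant. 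The inverse of the correspondence is the recovery formula $\Phi(\rho) = \Tr_{\irrep_J}[(\rho^\beta \otimes \identity_K)\, C_\Phi]$, in which the $\beta$-transpose $\rho^\beta = U_J \rho^* U_J^{-1}$ enters precisely because $\ket{\Omega}$ was built from $U_J$.

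Since the coefficient matrix of $\ket{\Omega}$ is antidiagonal and invertible (the vector is maximally entangled), the Choi correspondence is a linear bijection; restricted to self-adjoint operators it identifies equivariant Hermitian-preserving maps with self-adjoint operators on $\irrep_J \otimes \irrep_K$ commuting with the diagonal action. By Schur's lemma and the multiplicity-freeness of \eqref{eq:CG_decomp}, every such operator is a real combination $C_\Phi = \sum_{M=|K-J|}^{K+J} c_M P^M_{J,K}$ of the orthogonal projections onto the isotypic subspaces $\irrep_M$. As the $P^M_{J,K}$ have mutually orthogonal ranges, $C_\Phi \geq 0$ holds if and only if $c_M \geq 0$ for all $M$, so complete positivity turns into positivity of the coordinates $c_M$.

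For the trace-preservation constraint I would compute the partial trace $\Tr_{\irrep_K} P^M_{J,K}$. An operator commuting with $g \otimes g$ has partial trace over $\irrep_K$ commuting with the $\irrep_J$-action, hence proportional to $\identity_{\irrep_J}$ by Schur; comparing full traces gives $\Tr_{\irrep_K} P^M_{J,K} = \tfrac{2M+1}{2J+1}\, \identity_{\irrep_J}$. Thus $\Tr_{\irrep_K} C_\Phi = \identity_{\irrep_J}$ reads $\sum_M c_M (2M+1) = 2J+1$. The admissible region $\{\, c_M \geq 0,\ \sum_M c_M(2M+1) = 2J+1 \,\}$ is the intersection of the positive orthant with an affine hyperplane whose normal has strictly positive entries, hence a compact simplex with vertices $c_{M_0} = \tfrac{2J+1}{2M_0+1}$ (all other coordinates zero). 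Because the Choi correspondence is an affine bijection, the equivariant channels form a simplex with the corresponding vertices; substituting the vertex operator $C = \tfrac{2J+1}{2M+1} P^M_{J,K}$ into the recovery formula yields exactly \eqref{eq:channel_formula_P}. Setting $\lambda_M = c_M \tfrac{2M+1}{2J+1}$ gives $\lambda_M \geq 0$, $\sum_M \lambda_M = 1$ and $\Phi^{(\lambda)}_{J,K} = \sum_M \lambda_M \Phi^M_{J,K}$, with uniqueness of $\lambda$ following from the linear independence of the $P^M_{J,K}$.

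I expect the main obstacle to be the bookkeeping around $\beta$ and the conjugate representation: one must check carefully that the invariant singlet $\ket{\Omega}$ converts equivariance into commutation with the \emph{diagonal} action $g \otimes g$ (and not with $\bar g \otimes g$), and that the recovery formula then reproduces both the $\beta$-transpose $\rho^\beta$ appearing in \eqref{eq:channel_formula_P} and the correct constant $\tfrac{2J+1}{2M+1}$. Once this identification is pinned down, the remaining ingredients --- Schur's lemma, the positivity characterization via orthogonality of the ranges of the $P^M_{J,K}$, and the partial-trace normalization --- are routine. As a cross-check one can confirm the equivalence of \eqref{eq:channel_formula_P} with the Stinespring-type forms \eqref{eq:channel_formula_2}--\eqref{eq:channel_formula_3}, in which complete positivity is manifest.
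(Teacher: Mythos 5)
Your proof is correct and follows essentially the same route as the paper's proof in \Cref{Appendix:QC}: Choi--Jamio\l{}kowski duality turning equivariant channels into positive, invariant operators of normalized trace, then Schur's lemma plus the multiplicity-freeness of the Clebsch--Gordan decomposition to write the Choi operator as $\sum_M c_M P^M_{J,K}$ with $c_M \geq 0$ and a single affine trace constraint, which is the simplex. Your construction of the Choi operator from the invariant singlet $\ket{\Omega}$ built with $U_J$ is simply the concrete implementation of the paper's canonical identification $B(B(\irrep_J),B(\irrep_K)) \cong B(\irrep_K \otimes \irrep_J^*)$ followed by the $\beta$-identification $\irrep_J^* \cong \irrep_J$, which is exactly how the paper accounts for the appearance of $\rho^\beta$ and the factor $\tfrac{2J+1}{2M+1}$ in \eqref{eq:channel_formula_P}.
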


See \Cref{Appendix:QC} for a proof. We remark that $\rho^\beta$ appears in \eqref{eq:channel_formula_P} because, when applying \eqref{eq:channel_from_Choi}, we use $\beta$ to identify representations of $\SU(2)$ with their dual spaces. Two equivalent formulas for $\Phi^M_{J,K}$ were given in \eqref{eq:channel_formula_2} and \eqref{eq:channel_formula_3}.

The adjoint $\Phi^*$ of a map $\Phi : B(\irrep_J) \to B(\irrep_K)$ is defined by
\begin{equation}
    \Tr_{\irrep_K}(T^* \Phi(S)) = \Tr_{\irrep_J}(\Phi^*(T)^* S). 
\end{equation}
We note that
\begin{equation}
    (\Phi_{J,K}^M)^* = \frac{2J+1}{2K+1} \Phi_{K,J}^M.
\end{equation}

The argument below is essentially taken from \cite{perez-garcia_contractivity_2006}.

\begin{lemma}
For any $\SU(2)$-equivariant quantum channel $\Phi\colon B(\irrep_J)\to B(\irrep_K)$:
\begin{align}\label{eqtn:channel_p_p_norm}
    \norm{\Phi}_{p\to p} = \left(\frac{2J+1}{2K+1}\right)^{1-1/p},\quad p\in [1,\infty].
\end{align}
\end{lemma}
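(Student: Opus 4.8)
The plan is to compute the two endpoint norms $p=1$ and $p=\infty$, obtain the matching lower bound by testing $\Phi$ on the identity operator, and then fill in the intermediate $p$ by interpolation. By the simplex description $\Phi = \Phi^{(\lambda)}_{J,K}$ is a convex combination of the vertex channels, which is the only structural input I will need beyond equivariance and trace preservation.

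First I would establish the lower bound. Since $\identity_J$ is $\SU(2)$-invariant and $\Phi$ is equivariant, $\Phi(\identity_J)$ is an invariant element of $B(\irrep_K)$, so by Schur's lemma $\Phi(\identity_J) = c\,\identity_K$ for some scalar $c$. Trace preservation forces $c(2K+1) = \Tr_K \Phi(\identity_J) = \Tr_J \identity_J = 2J+1$, hence $\Phi(\identity_J) = \frac{2J+1}{2K+1}\identity_K$. Computing Schatten norms gives $\norm{\Phi(\identity_J)}_p / \norm{\identity_J}_p = \frac{2J+1}{2K+1}\,(2K+1)^{1/p}/(2J+1)^{1/p} = \big(\frac{2J+1}{2K+1}\big)^{1-1/p}$, so that $\norm{\Phi}_{p\to p} \geq \big(\frac{2J+1}{2K+1}\big)^{1-1/p}$ for every $p$.

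For the upper bound I would treat the two endpoints separately. At $p=1$, a CPTP map is a contraction: for Hermitian $\rho = \rho_+ - \rho_-$, positivity preservation together with trace preservation give $\norm{\Phi(\rho)}_1 \leq \Tr\Phi(\rho_+) + \Tr\Phi(\rho_-) = \norm{\rho}_1$, and the extension to arbitrary $\rho$ uses complete positivity; combined with the lower bound this yields $\norm{\Phi}_{1\to 1} = 1$. At $p=\infty$ I would use the duality $\norm{\Phi}_{\infty\to\infty} = \norm{\Phi^*}_{1\to 1}$. By the relation $(\Phi^M_{J,K})^* = \frac{2J+1}{2K+1}\Phi^M_{K,J}$ recorded above, the adjoint of $\Phi^{(\lambda)}_{J,K}$ equals $\frac{2J+1}{2K+1}$ times the channel $\Phi^{(\lambda)}_{K,J}\colon B(\irrep_K)\to B(\irrep_J)$ with the same weights; applying the $p=1$ result to this channel gives $\norm{\Phi^*}_{1\to 1} = \frac{2J+1}{2K+1}$, whence $\norm{\Phi}_{\infty\to\infty} = \frac{2J+1}{2K+1}$.

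Finally I would interpolate. Complex (Stein--Riesz--Thorin) interpolation between Schatten classes, with $1/p = 1-\theta$, gives $\norm{\Phi}_{p\to p} \leq \norm{\Phi}_{1\to 1}^{1/p}\,\norm{\Phi}_{\infty\to\infty}^{1-1/p} = \big(\frac{2J+1}{2K+1}\big)^{1-1/p}$, which matches the lower bound and completes the proof. The one genuinely delicate point is the $p=1$ endpoint for non-Hermitian inputs, which requires complete positivity (indeed $2$-positivity suffices) rather than mere positivity; the non-commutative interpolation and the duality step are then routine, and the remainder is bookkeeping.
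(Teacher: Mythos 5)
Your proof is correct, and its skeleton --- endpoint norms at $p=1$ and $p=\infty$, duality between them, interpolation for Schatten classes, and sharpness by testing on $\identity_J$ --- coincides with the paper's; the difference lies in which endpoint is computed directly and by what tool. The paper computes $\norm{\Phi}_{\infty\to\infty} = \norm{\Phi(\identity)}_\infty = \frac{2J+1}{2K+1}$ directly by citing \cite[Corollary 1]{russo_note_1966} (valid for all positive maps), and then gets $\norm{\Phi}_{1\to 1} = \norm{\Phi^*}_{\infty\to\infty} = 1$ from unitality of the adjoint of a channel together with a second application of the same fact; no simplex structure enters. You instead compute the $1\to 1$ norm directly by the elementary positive-decomposition argument (with the Hermitian dilation and $2$-positivity for non-Hermitian inputs) and transfer it to $p=\infty$ by duality, using the relation $(\Phi^M_{J,K})^* = \frac{2J+1}{2K+1}\Phi^M_{K,J}$ and hence the simplex decomposition $\Phi = \Phi^{(\lambda)}_{J,K}$. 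That structural input is avoidable: the adjoint of any CP trace-preserving map is CP and unital, and since $\Phi(\identity_J) = \frac{2J+1}{2K+1}\identity_K$ one checks directly that $\frac{2K+1}{2J+1}\Phi^*$ is trace preserving, so your $p=1$ argument applies to it verbatim without invoking the vertex channels. One side remark of yours is inaccurate, though harmless here: trace-norm contractivity of a trace-preserving map does \emph{not} genuinely require $2$-positivity rather than mere positivity --- for a positive trace-preserving $\Psi$ one has $\norm{\Psi}_{1\to 1} = \norm{\Psi^*}_{\infty\to\infty} = \norm{\Psi^*(\identity)}_\infty = 1$ by the very Russo--Dye fact the paper cites; since your channels are completely positive anyway, this does not affect the validity of your proof. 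On balance, your route is marginally more self-contained at $p=1$ (no citation needed there) at the cost of an unnecessary appeal to the simplex classification, while the paper's is shorter and works for any equivariant channel without knowing the extreme points.
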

\begin{proof}
It is known \cite[Corollary 1]{russo_note_1966} that $\inorm{\Phi}_{\infty\to \infty}=\inorm{\Phi(\identity)}_\infty$ for any positive map $\Phi$, and in our case $\Phi(\identity)=\tfrac{2J+1}{2K+1} \identity$. On the other hand,
\begin{align}
    \norm{\Phi}_{1\to 1} = \sup_{\| S \|_\infty = \| T \|_1 =1} \left| \Tr \left( S \Phi(T) \right) \right|  = \sup_{\| S \|_\infty = \| T \|_1 =1}\left| \Tr \left( \Phi^*(S) T \right) \right| = \norm{\Phi^*}_{\infty\to\infty}.
\end{align}
The adjoint of a channel is a unital map, so $\norm{\Phi^*}_{\infty\to\infty}=1$. By interpolation we recover \eqref{eqtn:channel_p_p_norm} as an upper bound. Calculating $\| \Phi(\identity) \|_p$ one finds that the upper bound is attained for every $p$.
\end{proof}

Our strategy is to replace $\Phi(\rho)$ with the quantization of some function. The following lemma identifies the correct function.

\begin{proposition}\label{theorem:convergence_of_husimi}
For every $\rho \in B(\irrep_J)$ and $p\in[1,\infty]$ we have:
\begin{align}
(2J+1)^{1/p}\norm{\frac{2K+1}{2J+1}\Hus_K\Phi^{K+i}_{J,K}(\rho)-\Hus^{-i}_J(\rho)}_{p}\leq 2\frac{(J+i)(J-i+1)}{2K-J+i+1}\norm{\rho}_p.
\label{eq:Hus_of_Phi_bound}
\end{align}

In particular, if $2J \leq K$ and $\lambda = \iset{\lambda_i}_i$ satisfy $\lambda_i\geq 0$, $\sum_i \lambda_i=1$, we have:
\begin{align}
    (2J+1)^{1/p}\norm{\frac{2K+1}{2J+1}\Hus_K(\Phi^{(\lambda)}_{J,K}(\rho))-\Hus^{(\lambda)}_J(\rho)}_{p}\leq \frac{(2J+1)^{2}}{2K+1}  \norm{\rho}_p.
    \label{eq:Hus_of_Phi_bound_2}
    \end{align}

\end{proposition}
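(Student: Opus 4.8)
The plan is to reduce the whole statement to a single algebraic identity that exhibits $\frac{2K+1}{2J+1}\Hus_K\Phi^{K+i}_{J,K}$ as an explicit \emph{convex combination} of the generalized Husimi maps $\Hus_J^{-i+l}$, and then to control that combination with the Clebsch--Gordan bounds of \Cref{lem:CG_bounds}. First I would use the second form \eqref{eq:channel_formula_2} of the channel, $\Phi^{K+i}_{J,K}(\rho) = \frac{2J+1}{2K+1}\, q^K_{J,K+i}(\rho\tensor\identity_{K+i})\iota^K_{J,K+i}$. Since $\Hus_K(\sigma)(\omega) = {}_K\bra{\omega}\sigma\ket{\omega}_K$ and $q^K_{J,K+i} = (\iota^K_{J,K+i})^*$, writing $\ket{\Psi_\omega} = \iota^K_{J,K+i}\ket{\omega}_K \in \irrep_J\tensor\irrep_{K+i}$ gives
\[
\frac{2K+1}{2J+1}\Hus_K\Phi^{K+i}_{J,K}(\rho)(\omega) = \bra{\Psi_\omega}(\rho\tensor\identity_{K+i})\ket{\Psi_\omega}.
\]

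Both this map and each $\Hus_J^{j}$ are $\SU(2)$-equivariant, so it suffices to evaluate at the north pole $\omega=\,\uparrow$ (equivalently, one may invoke the simplex structure of \Cref{prop:qtc_simplex}, under which $(2K+1)\Hus_K\Phi^{K+i}_{J,K}$ is a quantum to classical map). There $\ket{\uparrow}_K = \ket{K}_K$ and, by \eqref{eq:CG_def}, $\ket{\Psi_\uparrow} = \sum_j c_j\,\ket{j}_J\tensor\ket{K-j}_{K+i}$ with $c_j = C^{K,K}_{J,j;K+i,K-j}$. Orthogonality of the vectors $\ket{K-j}_{K+i}$ collapses the double sum to the diagonal, producing $\sum_j |c_j|^2\,{}_J\bra{j}\rho\ket{j}_J = \sum_j |c_j|^2\,\Hus_J^{j}(\rho)(\uparrow)$. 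As $c_j$ vanishes unless $j\geq -i$, I reindex $j=-i+l$ and match to \Cref{lem:CG_bounds} with the labels $(J,K+i,K)$ in place of its $(J,K,M)$; then $c_{-i+l}$ is exactly the coefficient $c_l$ of the lemma and $\epsilon = \frac{(J+i)(J-i+1)}{2K-J+i+1}$. By equivariance this yields the operator identity
\[
\frac{2K+1}{2J+1}\Hus_K\Phi^{K+i}_{J,K} = \sum_{l=0}^{J+i} |c_l|^2\,\Hus_J^{-i+l},
\]
a convex combination since $\iota^K_{J,K+i}$ is isometric, so $\sum_l |c_l|^2 = 1$.

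From here the bound \eqref{eq:Hus_of_Phi_bound} is routine: I subtract the $l=0$ term $\Hus_J^{-i}$, apply the triangle inequality, and estimate each $\norm{\Hus_J^{-i+l}(\rho)}_p \leq (2J+1)^{-1/p}\norm{\rho}_p$ via \Cref{lem:oph_norms}. The total weight of the discrepancy is $(1-|c_0|^2) + \sum_{l\geq 1}|c_l|^2 = 2(1-|c_0|^2)$, and \Cref{lem:CG_bounds} gives $1-|c_0|^2\leq\epsilon$, which produces the factor $2\epsilon$ after multiplying by $(2J+1)^{1/p}$. For \eqref{eq:Hus_of_Phi_bound_2} I expand $\Phi^{(\lambda)}_{J,K} = \sum_i\lambda_i\Phi^{K+i}_{J,K}$ and $\Hus^{(\lambda)}_J = \sum_i\lambda_i\Hus_J^{-i}$, apply \eqref{eq:Hus_of_Phi_bound} termwise, and use $\sum_i\lambda_i = 1$ together with the uniform bound $2\epsilon \leq \frac{(2J+1)^2}{2K+1}$; the latter follows from $(J+i)(J-i+1)\leq \frac{(2J+1)^2}{4}$ and, under $K\geq 2J$, $2K-J+i+1 \geq 2K-2J+1 \geq \frac{2K+1}{2}$.

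The main obstacle is the bookkeeping in the middle step: one must check that only the diagonal terms of $\bra{\Psi_\uparrow}(\rho\tensor\identity)\ket{\Psi_\uparrow}$ survive (so that the phase ambiguities of the coherent states play no role), that the indices with $j<-i$ drop out, and that the substitution $(J,K,M)\mapsto(J,K+i,K)$ into \Cref{lem:CG_bounds} reproduces $\epsilon$ \emph{exactly} as the combinatorial factor in the claimed bound. Once that identity is established, the norm estimates and the uniformization over $\lambda$ are immediate.
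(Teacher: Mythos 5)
Your proposal is correct and takes essentially the same route as the paper's proof: both reduce to the convex-combination identity $\frac{2K+1}{2J+1}\Hus_K\Phi^{K+i}_{J,K} = \sum_{l}|c_l|^2\,\Hus_J^{-i+l}$ coming from the Clebsch--Gordan expansion of $\iota^K_{J,K+i}\ket{\omega}_K$, then conclude via the triangle inequality, the norm bound of \Cref{lem:oph_norms}, and $1-|c_0|^2\leq\epsilon$ from \Cref{lem:CG_bounds}, with the same uniformization over $i$ for the general channel. The only cosmetic difference is your starting point: you use \eqref{eq:channel_formula_2} and evaluate at the north pole by equivariance, whereas the paper passes to the dual channel and computes a partial trace, arriving at the identical identity.
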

\begin{proof}
By the definition of the dual channel:
\begin{align}
    \frac{2K+1}{2J+1}\Hus_K[\Phi^{K+i}_{J,K}(\rho)](\omega) &= \Tr_{J}[\rho \,\Phi^{K+i}_{K,J}(\ket{\omega}\bra{\omega})] \\
    &=\Tr_J[\rho\Tr_{K+i}[\iota_{J,K+i}^K \ket{\omega}\bra{\omega}q_{J,K+i}^K]]. \nonumber
\end{align}
We have $\iota^K_{J,K+i}{\ket{\omega}_K} = \sum_{\ell=-J}^Jc_\ell\ket{\omega;\,-i+\ell}\ket{\omega;\,K+i-\ell}$, where $c_\ell = C^{K,K}_{J, -i+\ell ; K+i,K+i- \ell}$ are the Clebsch-Gordan coefficients. This allows to compute the partial trace,
\begin{align}
\frac{2K+1}{2J+1}\Hus_K[\Phi^{K+i}_{J,K}(\rho)](\omega) &= \Tr_J[\rho \sum_{\ell}\abs{c_\ell}^2 \ket{\omega;\,-i+\ell}\bra{\omega;\,-i+\ell}]\\
&=\sum_{\ell} \abs{c_\ell}^2 \Hus_J^{-i+\ell}(\rho)(\omega).
\end{align}
We combine this with the triangle inequality to obtain:
\begin{align*}
    \norm{\frac{2K+1}{2J+1}\Hus_K(\Phi^{K+i}_{J,K}(\rho))-\Hus^{-i}_J(\rho)}_{p}&\leq \sum_{\ell\neq 0}\abs{c_\ell}^2\norm{\Hus_J^{-i+\ell}(\rho)}_p + (1-\abs{c_{0}}^2)\norm{\Hus_J^{-i}(\rho)}_p\\
    &\leq \Big[\sum_{\ell\neq 0}\abs{c_\ell}^2+(1-\abs{c_{0}}^2)\Big]\left(2J+1\right)^{-1/p}\norm{\rho}_p\\
    &=2(1-\abs{c_{0}}^2)\left(2J+1\right)^{-1/p}\norm{\rho}_p.
\end{align*}
Now \eqref{eq:Hus_of_Phi_bound} follows from \Cref{lem:CG_bounds}. To obtain \eqref{eq:Hus_of_Phi_bound_2} we optimize over $i$ in the numerator and lower bound the denominator: $2K-J+i+1 \geq K + \frac{1}{2}$.
\end{proof}

The next step is to understand how far the quantum channel is from the quantization of the corresponding Husimi function. 

\begin{proposition}\label{theorem:convergence_of_channels}
    For every $\rho \in B(\irrep_J)$ and $p \in [1,\infty]$ we have:
    \begin{equation}
    \left( \frac{2K+1}{2J+1} \right)^{- \frac{1}{p}} \norm{ \frac{2K+1}{2J+1} \Phi^{K+i}_{J,K}(\rho) - \Op_K \Hus_J^{-i} (\rho)  }_p \leq 12 \frac{(J-i)(J+i+1)}{2K-J+i+1} \| \rho \|_p.
    \end{equation}
    In particular, if $2J \leq K$ and $\lambda = \iset{\lambda_i}_i$ satisfy $\lambda_i\geq 0$, $\sum_i \lambda_i=1$:
    \begin{align}
        \left( \frac{2K+1}{2J+1} \right)^{- \frac{1}{p}} \norm{ \frac{2K+1}{2J+1} \Phi^{(\lambda)}_{J,K}(\rho) - \Op_K \Hus_J^{(\lambda)} (\rho)  }_p\leq 6\frac{(2J+1)^2}{2K+1}\norm{\rho}_p.
    \end{align}
\end{proposition}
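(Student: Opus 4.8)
The plan is to bypass the Husimi estimate of \Cref{theorem:convergence_of_husimi} — which by itself cannot produce a bound vanishing at $i=J$ — and instead compare the two completely positive maps $\rho \mapsto \frac{2K+1}{2J+1}\Phi^{K+i}_{J,K}(\rho)$ and $\rho \mapsto \Op_K\Hus_J^{-i}(\rho)$ directly, through explicit Kraus representations built from Clebsch--Gordan data. First I would take the formula \eqref{eq:channel_formula_2}, insert the coherent-state resolution $\identity_{K+i} = (2(K+i)+1)\int_{S^2}\ketbra{\omega}{\omega}{K+i}\domega$ on the ancilla, and obtain
\[
\tfrac{2K+1}{2J+1}\Phi^{K+i}_{J,K}(\rho) = (2(K+i)+1)\int_{S^2} v_\omega^{*}\,\rho\, v_\omega \domega, \qquad v_\omega = (\identity_J\tensor\bra{\omega}_{K+i})\,\iota^K_{J,K+i},
\]
a family of maps $v_\omega\colon \irrep_K\to\irrep_J$. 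In parallel, setting $w_\omega = \ket{\omega;-i}_J\bra{\omega}_K$ gives $w_\omega^{*}\rho\, w_\omega = \Hus_J^{-i}(\rho)(\omega)\,\ketbra{\omega}{\omega}{K}$, hence $\Op_K\Hus_J^{-i}(\rho) = (2K+1)\int_{S^2} w_\omega^{*}\rho\, w_\omega\domega$. Both operators are now integrals of the same shape, and everything reduces to comparing the symbols $v_\omega$ and $w_\omega$.

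Second, I would analyse $v_\omega$ at the north pole. There $v\ket{m}_K = d_m\ket{m-K-i}_J$ with $d_m = C^{K,m}_{J,m-K-i;K+i,K+i}$, supported on $K+i-J\le m\le K$, and the top coefficient $d_K = c_0 = C^{K,K}_{J,-i;K+i,K+i}$ is exactly the one already controlled in \Cref{theorem:convergence_of_husimi}, so $1-|c_0|^2\le\frac{(J+i)(J-i+1)}{2K-J+i+1}$ by \Cref{lem:CG_bounds}. Intertwining $\iota^K_{J,K+i}$ with $S_-$ yields the recursion $|d_{K-s-1}|^2 = \frac{(J-i-s)(J+i+s+1)}{(s+1)(2K-s)}\,|d_{K-s}|^2$, whose $s=0$ factor is $\frac{(J-i)(J+i+1)}{2K}$; summing the rapidly decaying series gives $\sum_{s\ge 1}|d_{K-s}|^2\le\frac{(J-i)(J+i+1)}{2K-J+i+1}$ up to a universal constant. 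Writing $v_\omega = c_0 w_\omega + r_\omega$, the remainder $r_\omega$ is supported on $m<K$, hence has rank at most $J-i$; in particular $r_\omega = 0$ when $i=J$, which is the origin both of the exact identity and of the factor $(J-i)$ in the stated bound.

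Third, I would expand $v_\omega^{*}\rho\, v_\omega = |c_0|^2\, w_\omega^{*}\rho\, w_\omega + \big(\bar c_0\, w_\omega^{*}\rho\, r_\omega + \text{h.c.}\big) + r_\omega^{*}\rho\, r_\omega$ and integrate, so that $D := \frac{2K+1}{2J+1}\Phi^{K+i}_{J,K}(\rho) - \Op_K\Hus_J^{-i}(\rho)$ splits into a normalisation term $\big(\tfrac{(2(K+i)+1)|c_0|^2}{2K+1}-1\big)\Op_K\Hus_J^{-i}(\rho)$, a completely positive diagonal remainder $(2(K+i)+1)\int r_\omega^{*}\rho\, r_\omega\domega$, and off-diagonal cross terms, each of which is $\SU(2)$-equivariant. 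The diagonal remainder is the easy piece: being completely positive, its Schatten $p\to p$ norm is governed at the endpoints by $\int r_\omega^{*}r_\omega\domega$ and its adjoint image, both equal to the scalar $\tfrac{\norm{r}_2^2}{2K+1}\identity$, so interpolation gives the \emph{full} power $(2(K+i)+1)\tfrac{\norm{r}_2^2}{2K+1}=O\!\big(\tfrac{(J-i)(J+i+1)}{2K-J+i+1}\big)$. The normalisation term is controlled by $\norm{\Op_K\Hus_J^{-i}(\rho)}_p\le(\tfrac{2K+1}{2J+1})^{1/p}\norm{\rho}_p$ (\Cref{lem:oph_norms}); here one must track $|c_0|^2$ two-sidedly, since the prefactor vanishes precisely at $i=J$ through an exact cancellation between the shift $2i$ in $2(K+i)+1$ and $1-|c_0|^2$, which an upper bound on $1-|c_0|^2$ alone does not capture.

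The hard part will be the off-diagonal cross terms $\int w_\omega^{*}\rho\, r_\omega\domega$. A direct generalised Cauchy--Schwarz estimate only gives $\norm{r}_2/(2K+1)$, i.e. the \emph{square root} of the required order, because it discards the cancellation in the $\omega$-integral; since the final bound must hold with the full power even at $p=\infty$, this estimate is genuinely too weak. To recover the full power uniformly in $p$ I would exploit that $w_\omega$ and $r_\omega$ have mutually orthogonal supports (indeed $r_\omega\ket{\omega}_K = 0$), so that the cross-term map is truly off-diagonal between isotypic components of $B(\irrep_K)$; I would establish the full-power bound at the endpoints $p=1,\infty$ — reducing, via equivariance and Schur's lemma, to the scalars by which this map acts on each $\irrep_\ell\subset B(\irrep_J)$ and testing against coherent states — and then interpolate. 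Once all three contributions are bounded by a constant multiple of $\tfrac{(J-i)(J+i+1)}{2K-J+i+1}\norm{\rho}_p$, collecting constants gives the factor $12$. Finally, the ``in particular'' statement follows from the triangle inequality over the convex combination $\lambda$, bounding $(J-i)(J+i+1)\le J(J+1)\le\tfrac{(2J+1)^2}{4}$ in the numerator and $2K-J+i+1\ge K+\tfrac12$ in the denominator when $K\ge 2J$, which turns $12$ into $6$.
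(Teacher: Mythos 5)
Your decomposition is, after a standard Clebsch--Gordan symmetry, the paper's own expansion in disguise: your coefficients $d_{K-s}$ are proportional to the paper's $c_s$ (indeed $\sum_m|d_m|^2=\tfrac{2K+1}{2(K+i)+1}$ while $\sum_\ell|c_\ell|^2=1$), your normalisation term is the paper's term $I$ in \eqref{eq:channel_bound_I_II_III_IV}, your cross terms are $II$ and $III$, and your completely positive remainder is $IV$; the recursion you derive matches the mechanism of \Cref{lem:CG_bounds}, and your endpoint-plus-interpolation treatment of the CP remainder (Russo--Dye at $p=\infty$, duality at $p=1$) is sound. The normalisation issue you flag is also closable inside your own setup, and you should close it rather than leave it as a requirement: unitality of $\tfrac{2K+1}{2J+1}\Phi^{K+i}_{J,K}$ gives the completeness relation $(2(K+i)+1)\int_{S^2} v_\omega^* v_\omega \domega = \identity_K$, hence $\sum_m|d_m|^2=\tfrac{2K+1}{2(K+i)+1}$, so the prefactor of your normalisation term equals exactly $-\tfrac{2(K+i)+1}{2K+1}\sum_{s\geq 1}|d_{K-s}|^2$; this is the two-sided control of $|d_K|^2$, it is $O(\epsilon)$ with $\epsilon=\tfrac{(J-i)(J+i+1)}{2K-J+i+1}$ by your recursion, and it vanishes identically at $i=J$.

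The genuine gap is the cross term, and the repair you sketch would fail. The cross-term map is $\SU(2)$-equivariant, so by Schur's lemma it is \emph{diagonal} across isotypic components, acting by a scalar on each copy of $\irrep_\ell$; no equivariant map can be ``truly off-diagonal between isotypic components'', and the pointwise orthogonality $r_\omega\ket{\omega}_K=0$ produces no cancellation in the $\omega$-integral by itself. Moreover, testing against coherent states yields particular matrix elements, i.e.\ lower bounds; it cannot deliver $1\to 1$ or $\infty\to\infty$ upper bounds for a map that is not positive --- which is exactly why Russo--Dye was available for your diagonal remainder but is unavailable here. The missing ingredient is an integration by parts on the group: taking adjoints in \eqref{eq:left_inv_vf_raising_and_lowering} gives $\widetilde\Op^{K,K}_K\xi^L_+=-\sqrt{2K}\,\widetilde\Op^{K,K-1}_K$, and applying the derivative to $\widetilde\Hus^{-i,-i-1}_J$ yields the identity \eqref{eq:useful_identity_generalized_Op_Hus_group}, namely $\widetilde\Op^{K,K-1}_K\widetilde\Hus^{-i,-i-1}_J=\sqrt{\tfrac{(J-i)(J+i+1)}{2K}}\,\Op_K\bigl(\Hus_J^{-i}-\Hus_J^{-i-1}\bigr)$: the $s=1$ composite itself carries the factor $\epsilon^{1/2}$, which combined with $|d_K d_{K-1}|\lesssim\epsilon^{1/2}$ gives the full power of $\epsilon$. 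Note that only $s=1$ needs this gain: for $s\geq 2$ your recursion already gives $|d_K d_{K-s}|\lesssim\epsilon^{s/2}\leq\epsilon$ after the reduction to $\epsilon\leq\tfrac15$ (which you should make explicit, as the paper does, since your geometric-series summations require $\epsilon$ bounded away from $1$). Finally, $w_\omega$ and $r_\omega$ separately inherit the phase ambiguity of $\ket{\omega;a}_J$, so the cross terms must be defined through the lifts \eqref{eq:off_diagonal_husimi_group} to $\SU(2)$ --- precisely the formalism in which the rescue identity is proved.
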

\begin{proof}
Let $\epsilon = \frac{(J-i)(J+i+1)}{2K-J+i+1}$. We may assume that $\epsilon  \leq \tfrac{1}{5}$, since the bound is trivial otherwise. 
    
    We will use the off-diagonal Husimi functions ${\Hus}^{a,b}_J$ of \eqref{eq:off-Hus}, and more precisely their lifts $\widetilde{\Hus}^{a,b}_J$ to $\SU(2)$ already introduced in \eqref{eq:off_diagonal_husimi_group}. Recall that $\widetilde{\Hus}^{a,b}_J(A)$ is the function on $\SU(2)$ given by $g\mapsto {}_J\bra{a}g^{-1}Ag\ket{b}_J$. We will also use maps $\widetilde\Op_K^{a,b}$, defined as:
\begin{align}\label{eq:generalized_op_tilde_def}
        \widetilde\Op_K^{a,b}(f) &= (2K+1)\int_{\SU(2)} f(g) g\ketbra{a}{b}{K}g^{-1}\dg
    \end{align}
    for a function $f$ on SU(2). One can check that $(2K+1)\widetilde \Hus_K^{a,b} = (\widetilde\Op_K^{a,b})^*$ and \begin{align}\label{eq:norm_of_generalized_op_hus_su2}
        \norm{\widetilde\Op_K^{a,b} \widetilde\Hus_J^{c,d}}_{p\to p} \leq \left(\frac{2K+1}{2J+1}\right)^{1/p}.
    \end{align}

We can express $\Phi_{J,K}^{K+i}$ in terms of the maps $\widetilde\Op_J^{a,b}$ and $\widetilde \Hus_J^{a,b}$ 
as follows: 
    \begin{align}\label{eq:phi_expressed_as_off_diagonal_husimis}
        \Phi_{J,K}^{K+i}(\rho) &= \frac{2J+1}{2M+1}\Tr_J\left[P_{J,K}^M(\rho^\beta \tensor \identity_K)\right]\\
        &= (2J+1)\int_{S^2} \Tr_J\left[\iota_{J,K}^{K+i}\ketbra{\omega}{\omega}{K+i}q_{J,K}^{K+i}(\rho^\beta\tensor \identity_K)\right] \domega\nonumber\\
        &= \frac{2J+1}{2K+1} \sum_{\ell_1,\ell_2\geq 0} c_{\ell_1}c_{\ell_2} \widetilde \Op_K^{K-\ell_1,K-\ell_2}\widetilde \Hus_J^{i+\ell_2,i+\ell_1}(\rho^\beta),\nonumber
    \end{align}
    where $c_\ell = C^{K+i,K+i}_{J,i+\ell;K,K-\ell}$ (see \eqref{eq:CG_def}). 
    We note that:
    \begin{align}
        \widetilde \Hus_J^{a,b}(\rho^\beta) = (-1)^{2J-a-b}\widetilde \Hus^{-b,-a}(\rho),
    \end{align}
    so \cref{eq:phi_expressed_as_off_diagonal_husimis} above becomes:
    \begin{align}
        \Phi_{J,K}^{K+i} = \frac{2J+1}{2K+1} \sum_{\ell_1,\ell_2\geq 0} (-1)^{\ell_1+\ell_2}c_{\ell_1}c_{\ell_2} \widetilde \Op_K^{K-\ell_1,K-\ell_2}\widetilde \Hus_J^{-i-\ell_1,-i-\ell_2}.
    \end{align}
    We group terms in this expression as follows: \begin{align}\label{eq:channel_bound_I_II_III_IV}
        \frac{2K+1}{2J+1}\Phi_{J,K}^{K+i}(\rho) - \Op_K\Hus_J^{-i} &= (\abs{c_0}^2-1)\Op_K\Hus_J^{-i} \\
        &\quad - c_0c_1 
 \left( \widetilde\Op_K^{K-1,K}\widetilde\Hus_J^{-i-1,-i}+\widetilde\Op_K^{K,K-1}\widetilde\Hus_J^{-i,-i-1} \right) \nonumber\\
        &\quad + c_0 \sum_{\ell\geq 2} (-1)^\ell c_\ell \left( \widetilde\Op_K^{K-\ell,K}\widetilde\Hus_J^{-i-\ell,-i}+\widetilde\Op_K^{K,K-\ell}\widetilde\Hus_J^{-i,-i-\ell} \right) \nonumber\\
        &\quad + \sum_{\ell_1,\ell_2\geq 1} (-1)^{\ell_1+\ell_2} c_{\ell_1}c_{\ell_2}\widetilde\Op_K^{K-\ell_1,K-\ell_2}\widetilde\Hus_J^{-i-\ell_1,-i-\ell_2}\nonumber\\
        &=I+II+III+IV,\nonumber
    \end{align}
    where in the first line we used that $\Op_K^a \Hus_J^b = \widetilde{\Op}_K^a \widetilde\Hus_J^b $. Using \Cref{lem:CG_bounds} and \eqref{eq:norm_of_generalized_op_hus_su2} we obtain the bounds:
    \begin{align}\label{eq:bound_I_III_IV}
        \norm{I}_{p\to p} & \leq \epsilon \left(\frac{2K+1}{2J+1}\right)^{1/p},\nonumber\\
        \norm{III}_{p\to p}&\leq \frac{2\epsilon}{1-\epsilon^{1/2}}  \left(\frac{2K+1}{2J+1}\right)^{1/p}, \\
        \norm{IV}_{p\to p}&\leq \frac{\epsilon}{(1-\epsilon^{1/2})^2} \left(\frac{2K+1}{2J+1}\right)^{1/p}. \nonumber
    \end{align}
    By assumption on $\epsilon$, we can bound $\frac{1}{1-\epsilon^{\frac{1}{2}}} \leq \frac{1}{1-\left( \frac{1}{5} \right)^{\frac{1}{2}}}$.

    To see that $II$ also has the desired decay property, we note the formulas:
    \begin{align}
            \widetilde \Op^{K,K}_K \xi_+^L  &= -\sqrt{2K}\widetilde \Op^{K,K-1}_K,\\
            \widetilde \Op^{K,K}_K \xi_-^L  &= \sqrt{2K}\widetilde \Op^{K-1,K}_K,\nonumber
    \end{align}
    which follow from \eqref{eq:left_inv_vf_raising_and_lowering} by taking adjoints. It follows easily that:
    \begin{align}\label{eq:useful_identity_generalized_Op_Hus_group}
        \widetilde \Op^{K,K-1}_K\widetilde\Hus_{J}^{-i,-i-1} &= \sqrt{\frac{(J-i)(J+i+1)}{2K}}\Op_K(\Hus_J^{-i}-\Hus_J^{-i-1}),\\
        \widetilde \Op^{K-1,K}_K\widetilde\Hus_{J}^{-i-1,-i} &= \sqrt{\frac{(J-i)(J+i+1)}{2K}}\Op_K(\Hus_J^{-i}-\Hus_J^{-i-1}).\nonumber
    \end{align}
    Since the coefficient in \eqref{eq:useful_identity_generalized_Op_Hus_group} is bounded by $\epsilon^{1/2}$, we can bound $II$ using the triangle inequality by:
    \begin{align}
        \norm{II}_{p\to p}\leq 4\epsilon\left(\frac{2K+1}{2J+1}\right)^{1/p}.
    \end{align}
    This, combined with \eqref{eq:bound_I_III_IV}, gives a bound
    \begin{align}
   & \left( \frac{2K+1}{2J+1} \right)^{-1/p}    \left \| \frac{2K+1}{2J+1}\Phi_{J,K}^{K+i} - \Op_K \Hus_J^{i} \right \|_{p \to p} \\
   & \leq \left( 5 + \frac{2}{1 - \left( \frac15 \right)^{\frac12}} + \frac{1}{\left(1- \left( \frac15 \right)^{\frac12} \right)^2} \right) \epsilon \leq 12 \epsilon. \nonumber
    \end{align}
The case $2J\leq K$ is treated similarly as in the proof of \Cref{theorem:convergence_of_husimi}.
\end{proof}

Our last step is to show convergence of traces of functions of channel outputs to corresponding integrals on $S^2$. The following lemma appears to be a well known fact, but we include it for reader's convenience.

\begin{lemma}\label{lemma:trace_bound_holder_functions}
Let $\varphi \in C^{0,\alpha}([a,b])$ be H\"older continuous with $0<\alpha\leq 1$. Then for any self-adjoint $N\times N$ matrices $A$ and $B$ with $\sigma(A),\sigma(B)\subset [a,b]$:
\begin{align}
    \abs{\Tr[\varphi(A)-\varphi(B)]}\leq \norm{\varphi}_{C_h^{0,\alpha}} \norm{A-B}_1^\alpha N^{1-\alpha}.
\end{align}
\end{lemma}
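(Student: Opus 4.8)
The plan is to pass to eigenvalues, pair them in a common (decreasing) order, and reduce the statement to the classical Lidskii--Mirsky eigenvalue inequality; everything else is two applications of Hölder's inequality.

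First I would diagonalize both matrices. Write the eigenvalues of $A$ and $B$, listed in decreasing order and with multiplicity, as $\lambda_1(A)\ge\cdots\ge\lambda_N(A)$ and $\lambda_1(B)\ge\cdots\ge\lambda_N(B)$, all lying in $[a,b]$. Since $\Tr[\varphi(A)]$ is the sum of $\varphi$ evaluated at the eigenvalues of $A$, and likewise for $B$, I can pair the sorted spectra and write $\Tr[\varphi(A)-\varphi(B)]=\sum_{i=1}^N\bigl(\varphi(\lambda_i(A))-\varphi(\lambda_i(B))\bigr)$. Taking absolute values and applying the Hölder continuity of $\varphi$ termwise gives $\abs{\Tr[\varphi(A)-\varphi(B)]}\le\norm{\varphi}_{C_h^{0,\alpha}}\sum_{i=1}^N\abs{\lambda_i(A)-\lambda_i(B)}^\alpha$, which is legitimate because all $\lambda_i(A),\lambda_i(B)\in[a,b]$.

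Next I would extract the dimensional factor. By Hölder's inequality for finite sums with conjugate exponents $1/\alpha$ and $1/(1-\alpha)$ (equivalently, by concavity of $t\mapsto t^\alpha$ and Jensen's inequality), one has $\sum_{i=1}^N\abs{\lambda_i(A)-\lambda_i(B)}^\alpha\le N^{1-\alpha}\left(\sum_{i=1}^N\abs{\lambda_i(A)-\lambda_i(B)}\right)^{\alpha}$. It therefore remains only to bound the sum of absolute differences of the ordered eigenvalues by $\norm{A-B}_1$.

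This last bound is precisely Mirsky's theorem, i.e.\ the trace-norm case of the Lidskii--Wielandt inequality: for Hermitian matrices the vector of sorted eigenvalue differences is dominated in every unitarily invariant norm by $A-B$, so in the Schatten-$1$ norm $\sum_{i=1}^N\abs{\lambda_i(A)-\lambda_i(B)}\le\norm{A-B}_1$. Substituting this into the previous estimate yields $\abs{\Tr[\varphi(A)-\varphi(B)]}\le\norm{\varphi}_{C_h^{0,\alpha}}\norm{A-B}_1^\alpha N^{1-\alpha}$, as claimed. The only nonelementary ingredient is Mirsky's inequality, which I would simply cite (it can, if self-containment is desired, be derived from Ky Fan's maximum principle); the rest is routine. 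The conceptual point to keep in mind is that the eigenvalues must be paired in matching sorted order, since it is exactly this matched pairing that both makes the termwise Hölder estimate available and brings the quantity $\sum_i\abs{\lambda_i(A)-\lambda_i(B)}$ into the scope of Mirsky's theorem.
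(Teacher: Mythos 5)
Your proposal is correct and matches the paper's proof essentially step for step: both sort the spectra, apply the H\"older condition termwise, use concavity of $t\mapsto t^\alpha$ (Jensen/H\"older for sums) to extract the factor $N^{1-\alpha}$, and invoke the Lidskii--Mirsky inequality $\sum_i\abs{\lambda_i(A)-\lambda_i(B)}\leq\norm{A-B}_1$, which the paper cites from Simon's \emph{Trace Ideals}. The only difference is the name attached to the key eigenvalue inequality, not the substance.
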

\begin{proof}
Let $b\geq \lambda_1\geq \ldots \geq \lambda_N\geq a$ and $b\geq \mu_1\geq \ldots \geq \mu_N\geq a$ denote the eigenvalues of $A$ and $B$ respectively. By the H\"older condition:
\begin{align}
    \abs{\Tr[\varphi(A)-\varphi(B)]} \leq \sum_{i=1}^N \abs{\varphi(\lambda_i)-\varphi(\mu_i)} \leq \norm{\varphi}_{C_h^{0,\alpha}} \sum_{i=1}^N \abs{\lambda_i-\mu_i}^\alpha.
\end{align}
Since $x\mapsto x^\alpha$ is concave, the result now follows from Jensen's inequality, together with the bound:
\begin{align}
    \sum_{i=1}^N\abs{\lambda_i-\mu_i}\leq \norm{A-B}_1,
\end{align}
which is a consequence of an identity of Lidskii's (see \cite[Theorem 1.20]{simon_trace_2005}).
\end{proof}

\begin{proposition}\label{thm:c2_and_holder_convergence_of_traces_channels}
If $\rho$ is a density matrix, then
\begin{equation}\label{eq:trace_formula_channels_with_error_term}
    \frac{1}{2K+1}\Tr_K \left(\varphi \left(\frac{2K+1}{2J+1}\Phi_{J,K}^{K+i}(\rho)\right)\right) = \int_{S^2}\varphi(\Hus_J^{-i}(\rho))\domega + \mathcal E,
\end{equation}
where the error term $\mathcal E$ satisfies the following bounds:
\begin{enumerate}
\item If $\varphi$ has second derivative in $ L^\infty([0,1])$, then 
$
|\mathcal E| \leq 10 \norm{\varphi''}_{\infty} \frac{J-\abs{i}+1}{2K-J+i+1},
$
\item If $\varphi \in C^{0, \alpha}([0,1])$ is convex, then $|\mathcal E| \leq C \norm{\varphi}_{C_{h}^{0,\alpha}} \left(\frac{J-\abs{i}+1}{2K-J+i+1}\right)^{\alpha}$ with a~universal constant $C$.
\end{enumerate}

Similarly, if in \eqref{eq:trace_formula_channels_with_error_term} we replace $\Phi_{J,K}^{K+i}$ with $\Phi_{J,K}^{(\lambda)}$ and $\Hus_J^{-i}$ with $\Hus_J^{(\lambda)}$, then under the assumption $K \geq 2J$ the error term $\mathcal E$ can be bounded as follows:
\begin{enumerate}
\item If $\varphi$ is a function with second derivative in $ L^\infty([0,1])$, then 
$
|\mathcal E| \leq 4 \norm{\varphi''}_{\infty} \frac{2J+1}{2K+1},
$
\item If $\varphi \in C^{0, \alpha}([0,1])$ is convex, then $|\mathcal E| \leq C \norm{\varphi}_{C_{h}^{0,\alpha}} \left(\frac{2J+1}{2K+1}\right)^{\alpha}$ with a universal constant $C$.
\end{enumerate}
\end{proposition}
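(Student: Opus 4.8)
The plan is to compare the channel output with a quantization on $\irrep_K$ of the function it should approximate, and then fall back on the semiclassical trace formula already established for $\Op_K$. Write $A = \frac{2K+1}{2J+1}\Phi^{K+i}_{J,K}(\rho)$, $f = \Hus_J^{-i}(\rho)$ and $B = \Op_K(f)$. Since $\frac{2K+1}{2J+1}\Phi^{K+i}_{J,K}$ is positive and unital and $0\le\rho\le\identity_J$, we have $0\le A\le\identity_K$, so $\sigma(A)\subseteq[0,1]$; likewise $0\le f\le 1$ gives $\sigma(B)\subseteq[0,1]$ by positivity of $\Op_K$. Trace preservation of the channel together with mass preservation of $\Hus_J^{-i}$ and $\Op_K$ give $\Tr_K A = \frac{2K+1}{2J+1} = \Tr_K B$ and $\int_{S^2} f\domega = \frac{1}{2J+1}$. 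Throughout I use $\frac{(J-i)(J+i+1)}{2J+1}\le J-|i|+1$, $\frac{(J+i)(J-i+1)}{2J+1}\le J-|i|+1$, and $2K-J+i+1\le 2K+1$.

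For the $C^2$ estimate I split $\mathcal E = \mathcal E_1 + \mathcal E_2$ with $\mathcal E_1 = \frac{1}{2K+1}\Tr_K[\varphi(A)-\varphi(B)]$ and $\mathcal E_2 = \frac{1}{2K+1}\Tr_K[\varphi(B)] - \int_{S^2}\varphi(f)\domega$. Because $\Tr_K A = \Tr_K B$, in $\mathcal E_1$ I may subtract the affine part of $\varphi$ (its first-order Taylor polynomial at $0$) without changing $\Tr_K[\varphi(A)-\varphi(B)]$; the remainder $r$ satisfies $\norm{r}_{C^{0,1}_h}\le\norm{\varphi''}_\infty$ on $[0,1]$, so \Cref{lemma:trace_bound_holder_functions} with $\alpha=1$ gives $|\mathcal E_1|\le\frac{\norm{\varphi''}_\infty}{2K+1}\norm{A-B}_1$, while \Cref{theorem:convergence_of_channels} at $p=1$ bounds $\frac{1}{2K+1}\norm{A-B}_1\le 12\,\frac{J-|i|+1}{2K-J+i+1}$. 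The term $\mathcal E_2$ is exactly the content of \Cref{prop:C2_trace_bound_Berezin_Lieb} on $\irrep_K$, giving $|\mathcal E_2|\le\frac{\norm{\varphi''}_\infty}{2K+1}\norm{\nabla f}_2^2$. It therefore remains to prove the gradient bound $\norm{\nabla\Hus_J^{-i}(\rho)}_2^2\le 4(J-|i|+1)$ for density matrices; combined with $\frac{1}{2K+1}\le\frac{1}{2K-J+i+1}$ this closes the $C^2$ case (the constant $10$ comes from bookkeeping the two contributions).

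This gradient bound is the main technical point. Lift $f$ to $\tilde f = \widetilde{\Hus}_J^{-i,-i}(\rho)$ on $\SU(2)$. Since $\ket{-i}_J$ is an $S_z$-eigenvector, $\tilde f$ is invariant under right multiplication by $\mathrm{U}(1)$, so $\xi_z^L\tilde f = 0$ and, integrating the Casimir by parts, $\norm{\nabla f}_2^2 = \langle\tilde f,-\Delta\tilde f\rangle = \norm{\xi_x^L\tilde f}_2^2 + \norm{\xi_y^L\tilde f}_2^2 = \tfrac12\norm{\xi_+^L\tilde f}_2^2 + \tfrac12\norm{\xi_-^L\tilde f}_2^2$, all norms in $L^2(\SU(2))$. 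Differentiating as in \eqref{eq:left_inv_vf_raising_and_lowering}, but now at the non-extremal weight $-i$ so that both the bra and the ket are acted on, expresses $\xi_\pm^L\tilde f$ as a combination of $\widetilde{\Hus}_J^{-i\pm1,-i}(\rho)$ and $\widetilde{\Hus}_J^{-i,-i\mp1}(\rho)$ with coefficients $\sqrt{(J\pm i)(J\mp i+1)}$ and $\sqrt{(J\mp i)(J\pm i+1)}$. Bounding each summand by \eqref{eq:off_Hus_bound}, $\norm{\widetilde{\Hus}_J^{a,b}(\rho)}_2\le(2J+1)^{-1/2}\norm{\rho}_2$, and using $\norm{\rho}_2\le1$, I get $\norm{\xi_\pm^L\tilde f}_2^2\le\frac{2}{2J+1}\bigl[(J+i)(J-i+1)+(J-i)(J+i+1)\bigr] = \frac{4(J(J+1)-i^2)}{2J+1}$. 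Since $J(J+1)-i^2\le(J-|i|+1)(2J+1)$, this yields $\norm{\nabla f}_2^2\le 4(J-|i|+1)$.

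For the convex H\"older estimate I use the Berezin--Lieb inequalities \eqref{eq:Berezin-Lieb} directly, sidestepping any gradient bound. The lower-symbol inequality, valid for every self-adjoint $A$, gives $\frac{1}{2K+1}\Tr_K[\varphi(A)]\ge\int_{S^2}\varphi(\Hus_K(A))\domega$; as $\varphi$ is $\alpha$-H\"older and $\Hus_K(A)=\frac{2K+1}{2J+1}\Hus_K\Phi^{K+i}_{J,K}(\rho)$, \Cref{theorem:convergence_of_husimi} at $p=1$ and Jensen's inequality bound $\int_{S^2}[\varphi(f)-\varphi(\Hus_K A)]\domega\le\norm{\varphi}_{C^{0,\alpha}_h}\norm{\Hus_K A-f}_1^\alpha\le\norm{\varphi}_{C^{0,\alpha}_h}\bigl(\tfrac{2(J-|i|+1)}{2K-J+i+1}\bigr)^\alpha$, controlling $\mathcal E$ from below. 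For the upper bound, the upper-symbol inequality applied to $B=\Op_K(f)$ gives $\frac{1}{2K+1}\Tr_K[\varphi(B)]\le\int_{S^2}\varphi(f)\domega$, while \Cref{lemma:trace_bound_holder_functions} and \Cref{theorem:convergence_of_channels} bound $\bigl|\tfrac{1}{2K+1}\Tr_K[\varphi(A)-\varphi(B)]\bigr|\le\norm{\varphi}_{C^{0,\alpha}_h}\bigl(\tfrac{\norm{A-B}_1}{2K+1}\bigr)^\alpha\le\norm{\varphi}_{C^{0,\alpha}_h}\bigl(\tfrac{12(J-|i|+1)}{2K-J+i+1}\bigr)^\alpha$. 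The two $\Phi^{(\lambda)}$ statements follow identically, replacing the single-$i$ channel estimates by their $\lambda$-averaged versions from \Cref{theorem:convergence_of_husimi} and \Cref{theorem:convergence_of_channels} (valid for $K\ge2J$) and using $\norm{\nabla\Hus_J^{(\lambda)}(\rho)}_2^2\le 4\sum_i\lambda_i(J-|i|+1)\le4(J+1)$ in the $C^2$ case; the continuity case $\varphi\in C([0,1])$ follows by uniform approximation of $\varphi$ by $C^2$ functions. The only delicate step is the gradient bound above, which hinges on the vanishing of $\xi_z^L\tilde f$ and the raising/lowering identities \eqref{eq:left_inv_vf_raising_and_lowering}.
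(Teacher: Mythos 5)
Your proof is correct in structure and in all the substantive estimates, and it matches the paper's overall architecture for the $C^2$ case: the same splitting through $B=\Op_K(\Hus_J^{-i}(\rho))$, the same use of trace preservation to subtract an affine part before invoking \Cref{lemma:trace_bound_holder_functions} and \Cref{theorem:convergence_of_channels}, and the same reduction of the remaining term to \Cref{prop:C2_trace_bound_Berezin_Lieb}. You deviate from the paper in two sub-steps, both legitimately. First, for the key gradient bound $\norm{\nabla \Hus_J^{-i}(\rho)}_2^2 \leq 4(J-\abs{i}+1)$ the paper uses Cauchy--Schwarz, $\norm{\nabla f}_2^2 \leq \norm{f}_2 \norm{\Delta f}_2$, together with the Casimir identity $(-\Delta)\Hus_J^{-i}\rho = \Hus_J^{-i}\cQ\rho$ expanded as in \eqref{eq:lap_hus}; you instead work at first order, lifting to $\SU(2)$, using right-$\mathrm{U}(1)$ invariance to kill $\xi_z^L \tilde f$, and expressing $\xi_\pm^L \tilde f$ via the raising/lowering identities behind \eqref{eq:left_inv_vf_raising_and_lowering} before applying \eqref{eq:off_Hus_bound}. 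Both yield exactly $4(J(J+1)-i^2)/(2J+1) \le 4(J-\abs{i}+1)$, so neither buys sharpness; yours avoids the second-order identity, the paper's avoids tracking the $\mathrm{U}(1)$-invariance argument. Second, in the convex H\"older case the paper goes through \Cref{prop:convex_Holder_trace}, which requires the regularity estimate $\norm{(1-4\Delta)\Hus_J^{-i}(\rho)}_1 \leq C(J-\abs{i}+1)$; your two-sided Berezin--Lieb sandwich --- the lower-symbol inequality in \eqref{eq:Berezin-Lieb} applied directly to the channel output combined with \Cref{theorem:convergence_of_husimi}, and the upper-symbol inequality applied to $B$ combined with \Cref{lemma:trace_bound_holder_functions} --- replaces that Berezin-transform regularity input by the Husimi-channel estimate \eqref{eq:Hus_of_Phi_bound}, which is a genuinely different and arguably more economical route, since it never needs the $L^1 \to L^1$ bound on $(1-\Hus_K\Op_K)(1-4\Delta)^{-1}$.

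One concrete slip: you center the Taylor expansion at $0$, which gives the remainder Lipschitz seminorm $\norm{\varphi''}_\infty$, so your bookkeeping yields $12+4=16$, not the stated constant $10$; your parenthetical claim that $10$ ``comes from bookkeeping'' does not hold for your choice of base point. The paper centers at the midpoint $\tfrac12$ of $[0,1]$, so that $\sup_{x\in[0,1]}\abs{\varphi'(x)-\varphi'(\tfrac12)} \leq \tfrac12 \norm{\varphi''}_\infty$, giving $6+4=10$; the same midpoint trick is needed to approach the constant $4$ in the $\Phi^{(\lambda)}_{J,K}$ statement. This is a one-line repair, but as written your argument proves the inequality only with a larger constant than the one claimed.
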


\begin{proof}
    We will only prove the statements for the extremal channels, since the general case is completely analogous. We consider 1. first.

    If $A$ and $B$ are self-adjoint matrices and $\Tr(A)=\Tr(B)$, then $\Tr(f(A)) = \Tr(f(B))$ for all affine functions $f$. Applying this to $A=\tfrac{2K+1}{2J+1}\Phi_{J,K}^{K+i}(\rho)$ and $B=\Op_K(\Hus_J^{-i}(\rho))$, we obtain
    \begin{equation}
        \Tr_K \left[ \varphi(A) - \varphi(B) \right] = \Tr_K \left[ \tilde \varphi(A) - \tilde \varphi(B) \right], 
    \end{equation}
    where $\tilde \varphi(x) = \varphi(x) - \varphi(\tfrac12) - (x- \tfrac12) \varphi'(\tfrac12)$. Notice that $\varphi'' = \tilde\varphi''$ and $\norm{\tilde\varphi}_{C^{0,1}_h}\leq \tfrac{1}{2}\norm{\varphi''}_\infty$, so from \Cref{lemma:trace_bound_holder_functions} and \Cref{theorem:convergence_of_channels}:
    \begin{align}\label{eq:bound_from_lipschitz_trace_formula_using_second_derivative}
        \frac{1}{2K+1}\Bigg\vert\Tr_K\bigg[\varphi\left(\frac{2K+1}{2J+1}\Phi_{J,K}^{K+i}(\rho)\right)&-\varphi\bigg(\Op_K \big(\Hus_J^{-i}(\rho) \big)\bigg)\bigg]\Bigg\vert\\
        &\leq \frac{\norm{\varphi''}_\infty}{2(2K+1)} \norm{\frac{2K+1}{2J+1}\Phi^{K+i}_{J,K}(\rho)-\Op_K(\Hus^{-i}_J(\rho))}_1\nonumber\\
        &\leq 6\norm{\varphi''}_\infty \frac{J-\abs{i}+1}{2K-J+i+1}.\nonumber
    \end{align}
    
    On the other hand, it follows from \Cref{prop:C2_trace_bound_Berezin_Lieb} that:
    \begin{align}\label{eq:bound_from_second_derivative_on_ops}
        \bigg\vert\frac{1}{2K+1}\Tr_K \Big[ \varphi \big(\Op_K(\Hus_J^{-i}(\rho)) \big) \Big]-& \int_{S^2}\varphi(\Hus_J^{-i}(\rho))(\omega)\domega\bigg\vert \\
 & \leq \frac{\norm{\varphi''}_\infty\norm{\nabla\Hus^{-i}_J(\rho)}_{2}^2}{2K+1}, \nonumber
    \end{align}
    for all density matrices $\rho$. We bound using Cauchy-Schwarz:
    \begin{equation}
        \norm{\nabla\Hus^{-i}_J(\rho)}_{2}^2 \leq \norm{\Hus^{-i}_J(\rho)}_2 \norm{ \Delta \Hus^{-i}_J(\rho)}_2,
        \label{eq:hus_cauchy_schwarz}
    \end{equation}
    and then use \eqref{eq:ad_Casimir} to get
    \begin{align}
     (-\Delta) \Hus^{-i}_J\rho = \Hus_J^{-i} \cQ \rho = & 2(J(J+1)-i^2)\Hus_J^{-i}\rho  - (J-i)(J+i+1) \Hus_J^{-i-1} \rho  \label{eq:lap_hus} \\ 
     &  - (J+i) (J-i+1) \Hus_J^{-i+1}\rho. \nonumber 
    \end{align}
    By the bound $\| \Hus_J^a\|_{2 \to 2} = (2J+1)^{-\frac12}$ and the triangle inequality, this implies
    \begin{equation}
       \norm{\Delta \Hus_J^{-i} \rho}_2 \leq  4 \left( J(J+1) - i^2 \right) (2J+1)^{-\frac12} \leq 4 (2J+1)^{\frac12}(J-|i|+1) . \label{eq:lap_hus_bound}
    \end{equation}
Plugging this into \eqref{eq:hus_cauchy_schwarz} completes the proof of 1.

    If instead $\varphi\in C^{0,\alpha}([0,1])$ is convex, \Cref{lemma:trace_bound_holder_functions} gives:
\begin{align}\label{eq:bound_from_holder_trace_formula}
    \frac{1}{2K+1}\Bigg\vert\Tr_K\!\left(\!\varphi\!\left(\frac{2K+1}{2J+1}\Phi_{J,K}^{K+i}(\rho)\right)\!\right)&-\Tr_K(\varphi(\Op_K(\Hus_J^{-i}(\rho))))\Bigg\vert \\
    &\leq \frac{\norm{\varphi}_{C_h^{0,\alpha}}}{(2K+1)^\alpha}\norm{\frac{2K+1}{2J+1}\Phi_{J,K}^{K+i}(\rho)-\Op_K(\Hus_J^{-i}(\rho))}_1^\alpha,\nonumber
\end{align}
and \Cref{theorem:convergence_of_channels} implies the desired bound on this first part. On the other hand, from \Cref{prop:convex_Holder_trace} we have:
    \begin{align}
        \Bigg\vert\frac{1}{2K+1}\Tr_K(\varphi(\Op_K(\Hus_J^{-i}(\rho))))-\int_{S^2}&\varphi(\Hus_J^{-i}(\rho))(\omega)\domega\Bigg\vert\\
        &\leq C \frac{\norm{\varphi}_{C_h^{0,\alpha}}\norm{(1-4\Delta)\Hus_J^{-i}(\rho)}_{1}^\alpha}{(2K+1)^\alpha}.\nonumber
    \end{align}
    Proceeding as in \eqref{eq:lap_hus} and \eqref{eq:lap_hus_bound}, we bound
    \begin{align}
        \norm{(1-4\Delta)\Hus_J^{-i}(\rho)}_{1}
        \leq C(J-\abs{i}+1).
    \end{align}
    This, combined with \eqref{eq:bound_from_holder_trace_formula}, is the desired result.
\end{proof}

We recover the following for a general continuous function $\varphi$.

\begin{corollary}[Traces of functions converge to classical integrals]\label{prop:weak_star_convergence_fcts_of_channels}
Fix $J\in\tfrac{1}{2}\N$ and nonnegative $\iset{\lambda_i}_{i=-J}^J$ with $\sum_{i=-J}^J\lambda_i=1$. Let $\rho\in B(\irrep_J)$ be a density matrix. 
Then for any $f\in C([0,1])$:
\begin{align}
    \frac{1}{2K+1}\Tr\left[f\left(\frac{2K+1}{2J+1}\Phi_{J,K}^{(\lambda)}(\rho)\right)\right]\to \int_{S^2} f(\Hus_J^{(\lambda)}(\rho)(\omega)) \domega\quad \text{as }K\to\infty.
\end{align}
Moreover, convergence is uniform in $\rho$ over the set of density matrices.
\end{corollary}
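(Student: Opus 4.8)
The plan is to reduce the general continuous case to the $C^2$ case already settled in \Cref{thm:c2_and_holder_convergence_of_traces_channels}, via uniform polynomial approximation, exploiting that all the relevant spectra and function values sit in the fixed compact interval $[0,1]$. First I would record the two localization facts that make $[0,1]$ the natural domain. Since $\frac{2K+1}{2J+1}\Phi_{J,K}^{(\lambda)}$ is unital and positivity preserving, and any density matrix satisfies $0\le\rho\le\identity$, the output $A_K := \frac{2K+1}{2J+1}\Phi_{J,K}^{(\lambda)}(\rho)$ obeys $0\le A_K\le\identity$, so $\sigma(A_K)\subseteq[0,1]$. Likewise $\Hus_J^{(\lambda)}(\rho)(\omega)=\sum_i\lambda_i\,{}_J\bra{\omega;-i}\rho\ket{\omega;-i}_J$ is a convex combination of numbers in $[0,1]$, hence is $[0,1]$-valued. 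Both sides of the claimed limit are therefore well defined for every $f\in C([0,1])$.

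Next, given $\varepsilon>0$, I would use the Weierstrass approximation theorem to choose a polynomial $p$ with $\norm{f-p}_\infty<\varepsilon$ on $[0,1]$; in particular $p\in C^2([0,1])$ with finite $\norm{p''}_\infty$. The triangle inequality then splits the quantity to be estimated into three pieces:
\begin{align*}
 & \left| \frac{1}{2K+1}\Tr\left[f(A_K)\right] - \int_{S^2} f(\Hus_J^{(\lambda)}(\rho)) \domega \right| \\
 & \quad \le \left| \frac{1}{2K+1}\Tr\left[(f-p)(A_K)\right] \right| + \left| \frac{1}{2K+1}\Tr\left[p(A_K)\right] - \int_{S^2} p(\Hus_J^{(\lambda)}(\rho)) \domega \right| \\
 & \qquad + \left| \int_{S^2} (p-f)(\Hus_J^{(\lambda)}(\rho)) \domega \right|.
\end{align*}
The first and third pieces are each bounded by $\norm{f-p}_\infty<\varepsilon$: for the first, $A_K$ acts on a space of dimension $2K+1$ with spectrum in $[0,1]$, so the normalized trace of $(f-p)(A_K)$ is at most $\norm{f-p}_\infty$; for the third, the integrand is pointwise bounded by $\norm{f-p}_\infty$ and $\int_{S^2}\domega=1$.

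Finally, for the middle piece I would apply part 1 of the second group of bounds in \Cref{thm:c2_and_holder_convergence_of_traces_channels} (valid once $K\ge 2J$, which holds for all large $K$), giving
\[
 \left| \frac{1}{2K+1}\Tr\left[p(A_K)\right] - \int_{S^2} p(\Hus_J^{(\lambda)}(\rho)) \domega \right| \le 4 \norm{p''}_\infty \frac{2J+1}{2K+1}.
\]
Choosing $K$ large enough to make the right-hand side below $\varepsilon$ yields the total estimate $3\varepsilon$. The decisive point is that \emph{this $C^2$ bound carries no $\rho$-dependence}, and neither do the bounds on the other two pieces; hence the threshold for $K$ depends only on $f$ (through $p$) and on the fixed data $J,\lambda$. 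Letting $\varepsilon\to0$ therefore delivers both the stated convergence and its uniformity over the set of density matrices. I do not anticipate a serious obstacle; the only step requiring genuine care is verifying that the quantitative $C^2$ estimate is uniform in $\rho$, since that is exactly what secures the uniformity claim.
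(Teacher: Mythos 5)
Your proposal is correct and follows essentially the same route as the paper, whose proof of this corollary is precisely the one-line remark that it follows from \Cref{thm:c2_and_holder_convergence_of_traces_channels} by a simple approximation argument; your Weierstrass reduction to the $C^2$ case, with the spectral localization in $[0,1]$, is the natural way to carry that out. Your observation that the quantitative bound $4\norm{p''}_\infty\tfrac{2J+1}{2K+1}$ carries no $\rho$-dependence is exactly what secures the stated uniformity over density matrices, so nothing is missing.
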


\begin{proof}
    This follows from \Cref{thm:c2_and_holder_convergence_of_traces_channels} by a simple approximation argument.
\end{proof}

\begin{proposition}[Quantified convergence of von Neumann entropy]
Let $J,K\in \tfrac{1}{2}\N$ with $K\geq 1$. Then for any density matrix $\rho\in B(\irrep_J)$:
\begin{align}
    S_{vN}(\Phi^{K+i}_{J,K}(\rho)) = \log\left(\frac{2K+1}{2J+1}\right)-(2J+1)\int_{S^2}\Hus_J^{-i}(\rho)\log \Hus_J^{-i}(\rho) \domega +\; \mathcal{E},
\end{align}
with $\abs{\mathcal{E}}\leq C\log(2K+1)\tfrac{(2J+1)(J-\abs{i}+1)}{2K-J+i+1}$.

If also $K\geq 2J$:
\begin{align}
    S_{vN}(\Phi^{(\lambda)}_{J,K}(\rho)) = \log\left(\frac{2K+1}{2J+1}\right)-(2J+1)\int_{S^2}\Hus_J^{(\lambda)}(\rho)\log \Hus_J^{(\lambda)}(\rho) \domega +\; \mathcal{E},
\end{align}
with $\abs{\mathcal{E}}\leq C\log(2K+1)\tfrac{(2J+1)^2}{2K+1}$.
\end{proposition}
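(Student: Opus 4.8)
The plan is to reduce the von Neumann entropy to the trace formula of \Cref{thm:c2_and_holder_convergence_of_traces_channels} applied to $\varphi(x) = x\log x$. Write $\sigma = \Phi^{K+i}_{J,K}(\rho)$, which is a density matrix, and set $c = \frac{2K+1}{2J+1}$, so that $A = c\,\sigma$ is exactly the operator appearing in \Cref{thm:c2_and_holder_convergence_of_traces_channels}; its spectrum lies in $[0,1]$ because $\frac{2K+1}{2J+1}\Phi^{K+i}_{J,K}$ is unital and positivity preserving. If $\mu_1,\dots,\mu_{2K+1}$ are the eigenvalues of $\sigma$, then $\sum_j \mu_j = 1$, and since $\varphi(c\mu) = c\mu\log c + c\mu\log\mu$, a direct computation gives
\begin{equation}
\Tr_K[\varphi(A)] = c\log c + c\sum_j \mu_j\log\mu_j = c\log c - c\,S_{vN}(\sigma).
\end{equation}
Dividing by $2K+1$ and using $c/(2K+1) = 1/(2J+1)$ turns this into $\frac{1}{2K+1}\Tr_K[\varphi(A)] = \frac{1}{2J+1}\big(\log\frac{2K+1}{2J+1} - S_{vN}(\sigma)\big)$.

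Next I would substitute this into \Cref{thm:c2_and_holder_convergence_of_traces_channels}, which asserts $\frac{1}{2K+1}\Tr_K[\varphi(A)] = \int_{S^2}\varphi(\Hus_J^{-i}(\rho))\domega + \mathcal E'$. Since $\varphi(\Hus_J^{-i}(\rho)) = \Hus_J^{-i}(\rho)\log\Hus_J^{-i}(\rho)$, solving for $S_{vN}(\sigma)$ reproduces the claimed formula with $\mathcal E = -(2J+1)\mathcal E'$. The task then reduces to bounding $\mathcal E'$. The function $\varphi(x) = x\log x$ is convex on $[0,1]$, so the Hölder–convex estimate (part 2 of \Cref{thm:c2_and_holder_convergence_of_traces_channels}) applies for every $\alpha\in(0,1]$; the difficulty is that $\varphi$ is not $C^2$ near $0$ and its $C^{0,\alpha}_h$ seminorm blows up like $(1-\alpha)^{-1}$ as $\alpha\to 1$. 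This is the main obstacle, and I would resolve it exactly as in the proof of \Cref{lemma:OpJ_functional_calculus_specialized_to_entropy}.

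Concretely, split $\varphi = \varphi_1 + \varphi_2$ as there, with $\varphi_1,\varphi_2$ convex, $\norm{\varphi_1}_{C^{0,\alpha}_h} \leq \frac{1}{\alpha(1-\alpha)}$, and $\varphi_2$ Lipschitz with $\norm{\varphi_2}_{C^{0,1}_h}\leq 1$ on $[0,1]$. Applying part 2 of \Cref{thm:c2_and_holder_convergence_of_traces_channels} to $\varphi_1$ with exponent $\alpha$ and to $\varphi_2$ with exponent $1$, and writing $\delta = \frac{J-|i|+1}{2K-J+i+1}$, gives $|\mathcal E'| \leq C\big(\frac{1}{\alpha(1-\alpha)}\delta^\alpha + \delta\big)$. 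Here one uses the elementary bound $\delta \geq \frac{1}{2K-J+i+1} \geq \frac{1}{2K+1}$ (valid since $J-|i|+1\geq 1$ and $J\geq i$). Choosing $\alpha = 1 - \frac{1}{\log(2K+1)}$ (which lies in $(0,1)$ and is bounded away from $0$ for $K\geq 1$) yields $\delta^\alpha = \delta\,\delta^{-1/\log(2K+1)} \leq e\,\delta$, because $\log(1/\delta)\leq\log(2K+1)$, while $\frac{1}{\alpha(1-\alpha)} = \frac{\log(2K+1)}{\alpha}\leq C\log(2K+1)$. Hence $|\mathcal E'| \leq C\log(2K+1)\,\delta$, and multiplying by $2J+1$ produces the desired estimate $|\mathcal E|\leq C\log(2K+1)\frac{(2J+1)(J-|i|+1)}{2K-J+i+1}$.

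Finally, the statement for $\Phi^{(\lambda)}_{J,K}$ under the assumption $K\geq 2J$ follows verbatim, replacing the use of the extremal estimates of \Cref{thm:c2_and_holder_convergence_of_traces_channels} by the corresponding estimates for $\Phi^{(\lambda)}_{J,K}$ and $\Hus_J^{(\lambda)}$, and setting $\delta = \frac{2J+1}{2K+1}$. Since again $\delta\geq\frac{1}{2K+1}$, the same choice $\alpha = 1-\frac{1}{\log(2K+1)}$ gives $|\mathcal E'|\leq C\log(2K+1)\,\delta$ and therefore $|\mathcal E|\leq C\log(2K+1)\frac{(2J+1)^2}{2K+1}$, completing the proof.
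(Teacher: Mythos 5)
Your proposal is correct and follows essentially the same route as the paper: both reduce the entropy to the trace formula of \Cref{thm:c2_and_holder_convergence_of_traces_channels} via the scaling identity for $\varphi(x)=x\log x$ (which you derive by an eigenvalue computation and the paper via $\varphi(\lambda x)=\lambda\varphi(x)+x\varphi(\lambda)$, a trivial difference), then handle the failure of H\"older regularity at $\alpha=1$ by the convex splitting $\varphi=\varphi_1+\varphi_2$ from \Cref{lemma:OpJ_functional_calculus_specialized_to_entropy} and the choice $\alpha=1-1/\log(2K+1)$, with the bound $\delta\geq\frac{1}{2K+1}$ giving $\delta^\alpha\leq e\,\delta$ exactly as in the paper. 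Your verification that the numerator $J-|i|+1\geq 1$ and denominator $2K-J+i+1\leq 2K+1$ justify this is the same estimate the paper carries out in exponential form.
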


\begin{proof}
    We proceed as in the proof of \Cref{lemma:OpJ_functional_calculus_specialized_to_entropy} and apply \Cref{thm:c2_and_holder_convergence_of_traces_channels} to get:
    \begin{align}
        \Bigg\vert\frac{1}{2K+1}\Tr_K\Bigg(\varphi\Bigg(\frac{2K+1}{2J+1}&\Phi_{J,K}^{K+i}(\rho)\Bigg)\Bigg)-\int_{S^2}\varphi(\Hus_J^{-i}(\rho))(\omega)\domega\Bigg\vert\\
        &\leq \frac{C}{\alpha(1-\alpha)} \left(\frac{J-\abs{i}+1}{2K-J+i+1}\right)^\alpha + C \frac{J-\abs{i}+1}{2K-J+i+1}.\nonumber
    \end{align}
    We choose $\alpha = 1-1/\log(2K+1)$, which is a valid H\"older exponent as long as $K\geq 1$. We now bound:
    \begin{align}
        \left(\!\frac{J-\abs{i}+1}{2K-J+i+1}\!\right)^{-\frac{1}{\log(2K+1)}} \!\!\!\!\!= \exp\!\left(\frac{\log(2K-J+i+1)}{\log(2K+1)}-\frac{\log(J-\abs{i}+1)}{\log(2K+1)}\right)\leq e.
    \end{align}
    We are left with:
    \begin{align}
        \Bigg\vert\frac{1}{2K+1}\Tr_K\Bigg(\varphi\Bigg(\frac{2K+1}{2J+1}&\Phi_{J,K}^{K+i}(\rho)\Bigg)\Bigg)-\int_{S^2}\varphi(\Hus_J^{-i}(\rho))(\omega)\domega\Bigg\vert\\&\leq C\log(2K+1)\frac{J-\abs{i}+1}{2K-J+i+1}.\nonumber
    \end{align}
    Finally we use the fact that $\varphi(\lambda x) = \lambda \varphi(x)+x\varphi(\lambda)$ to write:
    \begin{align}
        \frac{2J+1}{2K+1}\Tr_K\Bigg(\varphi\Bigg(\frac{2K+1}{2J+1}&\Phi_{J,K}^{K+i}(\rho)\Bigg)\Bigg) = \log\left({\frac{2K+1}{2J+1}}\right) - S_{vN}(\Phi_{J,K}^{K+i}(\rho)).
    \end{align}
    The result follows. The case of a general channel $\Phi^{(\lambda)}_{J,K}$ is analogous.
\end{proof}

\section*{Aknowledgements}

We would like to thank Matthias Christandl, Jan Derezi\'nski, S\o ren Fournais, Thomas Fraser, Martin Dam Larsen, Robin Reuvers, Freek Witteveen, and Lasse Wolff for helpful discussions and suggestions. This work was partially supported by the Villum Centre of Excellence for the Mathematics of Quantum Theory (QMATH) with Grant No.10059.

\clearpage
\appendix
\renewcommand{\thesection}{\Alph{section}}
\setcounter{definition}{0}
\renewcommand{\thedefinition}{\Alph{section}\arabic{definition}}
\setcounter{equation}{0}
\renewcommand{\theequation}{\Alph{section}\arabic{equation}}
\section{A brief review of quantum channels}\label{Appendix:QC}
We review in this appendix some well known facts about quantum channels. 

\begin{definition}
A quantum channel is a linear map $\Phi\colon B(\mathcal{H})\to B(\mathcal{K})$, where $\mathcal{H}, \mathcal{K}$ are finite dimensional Hilbert spaces, which is:
\begin{enumerate}[label = -]
    \item Completely positive (CP), that is $\Phi\tensor \identity_n\colon B(\mathcal{H})\tensor B(\C^n)\to B(\mathcal{K})\tensor B(\C^n)$ is a~positive map for all $n$;
    \item Trace preserving (TP), that is $\Tr_\mathcal{K}(\Phi(\rho))=\Tr_\mathcal{H}(\rho)$ for all $\rho\in B(\mathcal{H})$.
\end{enumerate}

Let $G$ be a group acting on $\mathcal H, \mathcal K$ by unitary representations $\pi_{\mathcal H}, \pi_{\mathcal K}$. We say that a~channel $\Phi \colon B(\mathcal H) \to B(\mathcal K)$ is $G$-equivariant if 
\begin{align}
   \Phi(\pi_\mathcal H(g) \rho\, \pi_\mathcal H(g)^*) = \pi_\mathcal K(g)\Phi (\rho)\pi_\mathcal K(g)^*\quad\text{for all }g\in G, \ \rho\in B(\mathcal H).
\end{align}
\end{definition}

An important tool in the study of quantum channels is Choi's Theorem. A quantum channel is an element of:
\begin{equation}
 B(B(\mathcal H),B(\mathcal K)) \cong B(\mathcal K)\tensor B(\mathcal H)^*\cong \mathcal K\tensor \mathcal K^*\tensor \mathcal H^*\tensor \mathcal H.   
\end{equation}
Swapping the two middle two tensor factors and regrouping again gives the isomorphism:
\begin{equation}
    B(B(\mathcal H), B(\mathcal K)) \cong B(\mathcal K \otimes \mathcal H^*).
\end{equation}
In more explicit terms, the map $\Phi \in B(B(\mathcal H),B(\mathcal K))$ and the corresponding element $C_\Phi \in B(\mathcal K \otimes \mathcal H^*)$, often called the Choi matrix, are related by the formula:
\begin{align}
    \Phi(\rho) =\Tr_{\mathcal H^*}[C_\Phi (\mathbbm{1}_{\mathcal K} \tensor \rho ^t)],
    \label{eq:channel_from_Choi}
\end{align}
where $\Tr_{\mathcal H^*}$ is the partial trace $B(\mathcal K \otimes \mathcal H^*) \to B(\mathcal K)$ and $\rho^t\in B(\mathcal H^*)$ is the transpose of $\rho\in B(\mathcal H)$:
\begin{equation}
    (\rho^t \phi) (x) = \phi(\rho x) \qquad \text{for } \phi \in \mathcal H^*, \ x \in \mathcal H.
\end{equation}


\begin{theorem}[Choi]
A map $\Phi\colon B(\mathcal H)\to B(\mathcal K)$ is completely positive if and only if its Choi matrix $C_\Phi \in B(\mathcal K \tensor \mathcal H^*)$ is a positive operator. 
\end{theorem}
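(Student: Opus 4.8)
The plan is to prove both implications by passing through the Kraus (operator-sum) representation, using the canonical maximally entangled element of $\mathcal H \tensor \mathcal H^*$ to realize $C_\Phi$ as the image of a positive operator under $\Phi$. Fix an orthonormal basis $\set{e_i}$ of $\mathcal H$ with dual basis $\set{e_i^*}$ of $\mathcal H^*$, and let $\Omega = \sum_i e_i \tensor e_i^* \in \mathcal H \tensor \mathcal H^*$ be the basis-independent vector corresponding to $\identity_{\mathcal H}$. First I would verify, directly from the defining formula $\Phi(\rho) = \Tr_{\mathcal H^*}[C_\Phi(\identity_{\mathcal K} \tensor \rho^t)]$, the identity
\begin{equation}
C_\Phi = (\Phi \tensor \mathrm{id}_{B(\mathcal H^*)})(\ket{\Omega}\bra{\Omega}).
\end{equation}
Expanding $\ket{\Omega}\bra{\Omega} = \sum_{i,j} \ket{e_i}\bra{e_j} \tensor \ket{e_i^*}\bra{e_j^*}$ and taking the partial trace against $\identity_{\mathcal K} \tensor \rho^t$ reproduces $\sum_{i,j}\langle e_i, \rho e_j\rangle\,\Phi(\ket{e_i}\bra{e_j}) = \Phi(\rho)$, which is the required consistency check.

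With this realization in hand, the ``only if'' direction is immediate. Identifying $B(\mathcal H^*)$ with $B(\C^n)$ for $n = \dim \mathcal H$, complete positivity of $\Phi$ says exactly that $\Phi \tensor \mathrm{id}_{B(\mathcal H^*)}$ is a positive map. Since $\ket{\Omega}\bra{\Omega}$ is a positive (rank-one) operator, its image $C_\Phi$ is positive.

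For the ``if'' direction I would invoke the spectral decomposition $C_\Phi = \sum_\alpha \ket{v_\alpha}\bra{v_\alpha}$, available because $C_\Phi \geq 0$, with $v_\alpha \in \mathcal K \tensor \mathcal H^*$. Under the canonical isomorphism $\mathcal K \tensor \mathcal H^* \cong B(\mathcal H, \mathcal K)$ each $v_\alpha$ corresponds to an operator $V_\alpha \colon \mathcal H \to \mathcal K$. A short computation, again tracing against $\identity_{\mathcal K} \tensor \rho^t$, then yields the Kraus form
\begin{equation}
\Phi(\rho) = \sum_\alpha V_\alpha \rho V_\alpha^*.
\end{equation}
Any map of this shape is manifestly completely positive: for every $n$ and every positive $\sigma \in B(\mathcal H \tensor \C^n)$ one has $(\Phi \tensor \identity_n)(\sigma) = \sum_\alpha (V_\alpha \tensor \identity_n)\,\sigma\,(V_\alpha \tensor \identity_n)^* \geq 0$, being a sum of conjugations of a positive operator.

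Since Choi's theorem is classical, I do not expect a genuine obstacle; the only delicate point is bookkeeping with the dualities. The main thing to get right is tracking the transpose $\rho^t \in B(\mathcal H^*)$, the induced inner product and dual basis on $\mathcal H^*$, and the partial trace $\Tr_{\mathcal H^*}$, so as to confirm that the single formula $\Phi(\rho) = \Tr_{\mathcal H^*}[C_\Phi(\identity_{\mathcal K} \tensor \rho^t)]$ is simultaneously compatible with the maximally entangled realization of $C_\Phi$ and with the operator-sum decomposition. Once these conventions are pinned down, both verifications reduce to routine index manipulations.
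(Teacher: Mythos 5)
Your proof is correct, but note a point of comparison that is somewhat degenerate here: the paper does not prove this statement at all. Choi's theorem is quoted in Appendix~A as classical background, and the surrounding text only fixes the basis-free conventions --- the canonical isomorphism $B(B(\mathcal H),B(\mathcal K))\cong B(\mathcal K\tensor\mathcal H^*)$ and the formula $\Phi(\rho)=\Tr_{\mathcal H^*}[C_\Phi(\identity_{\mathcal K}\tensor\rho^t)]$. Your argument is the standard one (maximally entangled vector, spectral decomposition of $C_\Phi$, Kraus form), and you have adapted it correctly to the paper's dual-space conventions: with $\rho^t$ defined by $(\rho^t\phi)(x)=\phi(\rho x)$ one checks $\rho^t e_i^* = \sum_k \langle e_i,\rho e_k\rangle\, e_k^*$, hence $\Tr_{\mathcal H^*}\bigl[\ket{e_i^*}\bra{e_j^*}\,\rho^t\bigr]=\langle e_i,\rho e_j\rangle$, which is exactly what makes both your consistency check for $C_\Phi=(\Phi\tensor\mathrm{id}_{B(\mathcal H^*)})(\ket{\Omega}\bra{\Omega})$ and the Kraus computation $\Tr_{\mathcal H^*}\bigl[\ket{v_\alpha}\bra{v_\alpha}(\identity_{\mathcal K}\tensor\rho^t)\bigr]=V_\alpha\rho V_\alpha^*$ go through; the identification $\mathcal K\tensor\mathcal H^*\cong B(\mathcal H,\mathcal K)$ you use is honestly linear (no hidden antilinearity), precisely because the paper works with $\mathcal H^*$ rather than $\overline{\mathcal H}$. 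Two points deserve one explicit line each in a final write-up: first, verifying that $(\Phi\tensor\mathrm{id})(\ket{\Omega}\bra{\Omega})$ reproduces $\Phi$ under the trace formula identifies it with $C_\Phi$ only because the correspondence $\Phi\mapsto C_\Phi$ is a linear bijection (the paper constructs it as an isomorphism), so injectivity should be invoked; second, in writing $C_\Phi=\sum_\alpha\ket{v_\alpha}\bra{v_\alpha}$ you absorb the nonnegative eigenvalues into the vectors $v_\alpha$, which is harmless but worth stating. With these remarks the proof is complete and supplies exactly what the paper leaves to the reader.
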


We remark that the Choi isomorphism is often introduced in a (basis dependent) way which does not use dual spaces. The isomorphism $B(B(\mathcal H),B(\mathcal K)) \cong B(\mathcal K \otimes \mathcal H^*)$ we choose to work with is canonical and, what's important for our purposes, equivariant. Therefore, a map $B(\mathcal H)\to B(\mathcal K)$ is equivariant if and only if its Choi matrix commutes with the action of $G$ on $\mathcal K\tensor \mathcal H^*$, i.e. with the operators $\pi_{\mathcal K}(g) \otimes \pi_{\mathcal H^*}(g)$, where:
\begin{equation}
    \pi_{ \mathcal H^*}(g) =  \pi_{\mathcal H}(g^{-1})^t.
    \label{eq:contragredient}
\end{equation}

Let $\mathcal H$ be an irreducible representation. For a $G$-equivariant map $\Phi\colon B(\mathcal H)\to B(\mathcal K)$, the composition $\Tr_{\mathcal K}\circ \Phi$ is an invariant linear functional on $B(\mathcal H)$. By Schur's lemma it must then be a scalar multiple of $\Tr_{\mathcal H}$, $\Tr_{\mathcal K}\circ \Phi = \lambda \Tr_{\mathcal H}$. That is, every $G$-equivariant map $B(\mathcal H)\to B(\mathcal K)$ is trace preserving up to a scalar factor. One can check that $\lambda=1$ if and only if $\Tr_{\mathcal K\tensor \mathcal H^*}C_\Phi = \dim (\mathcal H)$. Similarly, irreducibility of $\mathcal K$ implies that every equivariant map $B(\mathcal H) \to B(\mathcal K)$ is unital up to a scalar factor. We note that only if $\dim(\mathcal H)=\dim(\mathcal K)$ can a map be trace and preserving and unital (not just up to a~factor) at the same time. 

We summarize this discussion in the following lemma.

\begin{lemma}
Let $\mathcal H$ be an irreducible representation. A map $\Phi\colon B(\mathcal H)\to B(\mathcal K)$ is a~$G$-equivariant quantum channel if and only if its Choi matrix $C_\Phi \in B(\mathcal K \tensor \mathcal H^*)$ is a positive operator which commutes with the action of $G$ and has trace $\dim (\mathcal H)$.
\end{lemma}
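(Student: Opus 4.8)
The plan is to observe that this lemma merely repackages the three defining properties of a $G$-equivariant quantum channel, two of which have already been translated into conditions on $C_\Phi$. Complete positivity is equivalent to $C_\Phi \geq 0$ by Choi's theorem, and $G$-equivariance is equivalent to $C_\Phi$ commuting with all operators $\pi_{\mathcal K}(g) \tensor \pi_{\mathcal H^*}(g)$, as established above using equivariance of the Choi isomorphism. Thus the only remaining task is to show that, for invariant $C_\Phi$ and irreducible $\mathcal H$, the trace-preserving property of $\Phi$ is equivalent to the normalization $\Tr_{\mathcal K \tensor \mathcal H^*} C_\Phi = \dim(\mathcal H)$.

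First I would apply the Choi formula \eqref{eq:channel_from_Choi} together with cyclicity of the trace to write, for any $\rho \in B(\mathcal H)$,
\[
    \Tr_{\mathcal K}[\Phi(\rho)] = \Tr_{\mathcal K \tensor \mathcal H^*}\!\left[C_\Phi(\identity_{\mathcal K} \tensor \rho^t)\right] = \Tr_{\mathcal H^*}\!\left[(\Tr_{\mathcal K} C_\Phi)\, \rho^t\right],
\]
where $\Tr_{\mathcal K} C_\Phi \in B(\mathcal H^*)$ denotes the partial trace over $\mathcal K$. Since $\rho \mapsto \rho^t$ is a linear bijection onto $B(\mathcal H^*)$ satisfying $\Tr_{\mathcal H}[\rho] = \Tr_{\mathcal H^*}[\rho^t]$, the channel $\Phi$ is trace preserving exactly when $\Tr_{\mathcal H^*}[(\Tr_{\mathcal K} C_\Phi)\,\sigma] = \Tr_{\mathcal H^*}[\sigma]$ for all $\sigma \in B(\mathcal H^*)$, i.e.\ exactly when $\Tr_{\mathcal K} C_\Phi = \identity_{\mathcal H^*}$.

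The heart of the argument is a Schur's lemma step. I would verify that $\Tr_{\mathcal K}$ intertwines the relevant $G$-actions: pulling the $\mathcal H^*$ factors out of the partial trace and using cyclicity in the $\mathcal K$ factor, invariance $(\pi_{\mathcal K}(g) \tensor \pi_{\mathcal H^*}(g)) C_\Phi (\pi_{\mathcal K}(g) \tensor \pi_{\mathcal H^*}(g))^* = C_\Phi$ yields
\[
    \pi_{\mathcal H^*}(g)\, (\Tr_{\mathcal K} C_\Phi)\, \pi_{\mathcal H^*}(g)^* = \Tr_{\mathcal K} C_\Phi \qquad \text{for all } g \in G.
\]
Since $\mathcal H$, and hence $\mathcal H^*$, is irreducible, Schur's lemma forces $\Tr_{\mathcal K} C_\Phi = \lambda\, \identity_{\mathcal H^*}$ for a scalar $\lambda$; taking the trace over $\mathcal H^*$ identifies $\lambda = \Tr_{\mathcal K \tensor \mathcal H^*}[C_\Phi]/\dim(\mathcal H)$. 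Therefore the condition $\Tr_{\mathcal K} C_\Phi = \identity_{\mathcal H^*}$ from the previous step is equivalent to $\Tr_{\mathcal K \tensor \mathcal H^*} C_\Phi = \dim(\mathcal H)$, and combining this with Choi's theorem and the equivariance characterization gives the lemma. The only genuinely delicate point is the equivariance of the partial trace over $\mathcal K$ just invoked; once that is in place, everything else is a direct assembly of facts established earlier.
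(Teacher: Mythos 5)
Your proposal is correct and follows essentially the same route as the paper: Choi's theorem handles complete positivity, equivariance of the Choi isomorphism handles equivariance, and a Schur's lemma argument pins down the trace normalization. The only cosmetic difference is that you apply Schur's lemma to the invariant operator $\Tr_{\mathcal K} C_\Phi \in B(\mathcal H^*)$ rather than to the invariant functional $\Tr_{\mathcal K}\circ\Phi$ on $B(\mathcal H)$ as the paper does --- a dual formulation of the same step, which has the merit of making explicit the ``one can check that $\lambda=1$ if and only if $\Tr_{\mathcal K\tensor\mathcal H^*}C_\Phi=\dim(\mathcal H)$'' verification that the paper leaves to the reader.
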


This leads to the following characterization of the $G$-equivariant quantum channels between irreducible representations of $G$ in the case that $G$ is a compact Lie group (in~which case all representations are assumed to be continuous).

\begin{corollary} \label{cor:extreme_channels}
Let $\mathcal H$ be an irreducible representation of $G$. A map $\Phi$ is an extreme point of the set of $G$-equivariant quantum channels $B(\mathcal H) \to B(\mathcal K)$ if and only if its Choi matrix $C_\Phi$ is the projection onto an irreducible subrepresentation of $ \mathcal K \tensor \mathcal H^*$.
\end{corollary}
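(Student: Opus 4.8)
The plan is to transport the problem across the Choi isomorphism and then read off the extreme points from the structure of the commutant. By the lemma immediately preceding the corollary, the assignment $\Phi \mapsto C_\Phi$ is the restriction of the linear isomorphism $B(B(\mathcal H),B(\mathcal K)) \cong B(\mathcal K \otimes \mathcal H^*)$, and it maps the $G$-equivariant quantum channels bijectively onto
\[
\mathcal C = \{\, C \in B(\mathcal K \otimes \mathcal H^*) : C \ge 0,\ C \text{ commutes with } \pi_{\mathcal K}(g) \otimes \pi_{\mathcal H^*}(g) \ \forall g,\ \Tr C = \dim \mathcal H \,\}.
\]
This map and its inverse are affine, so they carry extreme points to extreme points; hence it suffices to determine the extreme points of $\mathcal C$. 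Note that $\mathcal C$ is a compact convex subset of a finite-dimensional vector space, so it is the convex hull of its extreme points.

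First I would describe $\mathcal C$ using Schur's lemma. Decomposing $\mathcal K \otimes \mathcal H^*$ into isotypic components under $g \mapsto \pi_{\mathcal K}(g) \otimes \pi_{\mathcal H^*}(g)$, write $\mathcal K \otimes \mathcal H^* \cong \bigoplus_i \mathcal V_i \otimes \mathbb{C}^{m_i}$ with the $\mathcal V_i$ pairwise inequivalent irreducibles of multiplicity $m_i$. By Schur's lemma the commutant of the $G$-action is $\bigoplus_i \mathrm{Id}_{\mathcal V_i} \otimes M_{m_i}(\mathbb{C})$, so an equivariant positive operator is exactly $C = \bigoplus_i \mathrm{Id}_{\mathcal V_i} \otimes C_i$ with each $C_i \in M_{m_i}(\mathbb{C})$ positive, and $\Tr C = \sum_i (\dim \mathcal V_i)\, \Tr C_i$. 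Thus $\mathcal C$ is affinely isomorphic to the set of tuples $(C_i)_i$ of positive matrices subject to the single linear constraint $\sum_i (\dim \mathcal V_i)\, \Tr C_i = \dim \mathcal H$.

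Then I would identify the extreme points of this set in two steps. The first, and the main obstacle, is to show that an extreme point is supported on a single isotypic block: if two blocks $C_j, C_k$ were both nonzero they would have strictly positive trace, and one could move an infinitesimal amount of trace from one block to the other (rescaling $C_j \mapsto (1\pm t)C_j$ and $C_k \mapsto (1 \mp t\,\tfrac{(\dim\mathcal V_j)\Tr C_j}{(\dim\mathcal V_k)\Tr C_k})C_k$ for small $t$) while staying in $\mathcal C$, exhibiting $C$ as the midpoint of two distinct points and contradicting extremality. Hence exactly one block $C_j$ is nonzero, with $\Tr C_j = \dim \mathcal H / \dim \mathcal V_j$. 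The second step is the standard fact that a positive matrix with fixed trace is extreme precisely when it has rank one, giving $C_j = \tfrac{\dim \mathcal H}{\dim \mathcal V_j}\,|v\rangle\!\langle v|$ for a unit vector $v \in \mathbb{C}^{m_j}$. Undoing the identifications, $C = \tfrac{\dim \mathcal H}{\dim \mathcal V_j}\, P_{\mathcal W}$, where $\mathcal W = \mathcal V_j \otimes \mathbb{C}v$ is an irreducible subrepresentation and $P_{\mathcal W}$ is the orthogonal projection onto it, which is automatically equivariant since $\mathcal W$ is $G$-invariant and $G$ acts unitarily.

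Conversely, each such scaled projection lies in $\mathcal C$ and, being supported on one block and rank one there, is extreme; this establishes the claimed bijection between extreme channels and projections onto irreducible subrepresentations. I would be careful to record that the trace normalization $\Tr C_\Phi = \dim \mathcal H$ forces the proportionality constant $\dim \mathcal H / \dim \mathcal W$, so that $C_\Phi$ is the projection onto the irreducible subrepresentation only up to this scalar (equivalently, it is the unique positive equivariant operator with that support and correct trace). The Schur-lemma description of the commutant and the rank-one characterization of extreme positive operators with fixed trace are routine; the only delicate point is the single-block perturbation argument isolating one irreducible summand.
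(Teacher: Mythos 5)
Your proposal is correct and follows exactly the route the paper intends: the corollary is stated without explicit proof as an immediate consequence of the preceding lemma, and your Schur-lemma decomposition of the commutant into isotypic blocks, the single-block perturbation argument, and the rank-one characterization of extreme trace-normalized positive operators constitute precisely the standard argument behind it, consistent with the paper's own identification of the extreme channels with $\bigsqcup_\lambda \mathbb{CP}^{m_\lambda-1}$. Your observation that the trace normalization $\Tr C_\Phi = \dim\mathcal H$ forces $C_\Phi = \frac{\dim \mathcal H}{\dim \mathcal W}\, P_{\mathcal W}$ rather than $P_{\mathcal W}$ itself is also well taken --- the corollary's wording is slightly loose on this scalar, as one can see from the factor $\frac{2J+1}{2M+1}$ in the paper's formula for $\Phi^M_{J,K}$ in \Cref{prop:simplex}.
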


Consider the decomposition of the tensor product representation $\mathcal K \tensor \mathcal H^*$ into isotypic components:
\begin{align}
    \mathcal K \tensor \mathcal H^* \cong \bigoplus_\lambda \mathcal G_\lambda^{\oplus m_\lambda} \cong \bigoplus_\lambda \C^{m_\lambda}\tensor \mathcal G_\lambda,
\end{align}
where $\mathcal G_\lambda$ are the irreducible representations of $G$. By \Cref{cor:extreme_channels}, the set of extreme $G$-equivariant channels $B(\mathcal H) \to B(\mathcal K)$ is homeomorphic to the disjoint union of complex projective spaces $\bigsqcup_\lambda \mathbb{CP}^{m_\lambda-1}$.

\section{Comparison with Berezin-Toeplitz quantization} \label{app:BT}
We can interpret the sphere $S^2$ as the projective space of $\irrep_{1/2}$. As such, it carries a~distinguished hermitian holomorphic line bundle $\ell$, called the tautological bundle. The fiber of $\ell$ over $\omega \in S^2$ is the linear span of $\ket{\omega}_{1/2}$. There is an isomorphism of $\mathcal H_J$ with the symmetric $2J$-fold tensor product of $\irrep_J$,
\begin{equation}
\irrep_J \cong \mathrm{Sym}^{2J} \mathcal H_{1/2},
\end{equation}
taking $\ket{\omega}_J$ to $\ket{\omega}_{1/2}^{\otimes 2J}$. Therefore, the fiber over $\omega$ of the line bundle $\ell^{2J}$ (the $2J$-fold fiberwise tensor product of $\ell$) may be identified with the linear span of $\ket{\omega}_J$. The bundles $\ell^{2J}$ with $J>0$ do not admit any globally defined holomorphic sections. Instead we consider $\ell^{-2J}$, the dual bundle of $\ell^{2J}$. We remark that $\ell^{-1}$ is often called the hyperplane line bundle. Let $\mathcal O(S^2, \ell^{-2J})$ be the space of global holomorphic sections of $\ell^{-2J}$. There exists a unitary isomorphism $V : \irrep_J \to \mathcal O(S^2, \ell^{-2J})$ given as follows:
\begin{equation}
    V \ket{\psi} (\omega) = \sqrt{2J+1} \left. \beta(\ket{\psi}, \cdot) \right|_{\mathbb C \ket{\omega}_J},
\end{equation}
where $\beta$ is the (essentially unique) invariant bilinear form on $\mathcal H_J$, see \eqref{eq:beta_def}. Since $\mathcal O(S^2, \ell^{-2J})$ is finite-dimensional, it is in particular a closed subspace of the Hilbert space of square-integrable sections of $\ell^{-2J}$. Let $P$ be the orthogonal projection onto $\mathcal O(S^2, \ell^{-2J})$. If $f$ is a function on $S^2$, the associated Toeplitz operator on $\mathcal O(S^2, \ell^{-2J})$ is defined as $T_f=PfP$, i.e.\ as the compression of $f$ regarded as a multiplication operator. One can check that
\begin{equation}
    \Op_J(f) = V^{-1} T_{f \circ a} V ,
\end{equation}
where $a : S^2 \to S^2$ is the antipodal map, $a (\omega) = - \omega$.

\printbibliography
\end{document}